\theoremstyle{plain}
\newtheorem{theorem}{Theorem}
\newtheorem{lemma}[theorem]{Lemma}
\newtheorem{corollary}[theorem]{Corollary}
\newcommand*{\OPT}{\ensuremath{\textsc{Opt}}\xspace}
\newcommand*{\ALG}{\ensuremath{\textsc{Alg}}\xspace}
\newcommand{\workloadSmall}{W^{\text{min}}}
\newcommand{\itemB}{B}
\newcommand{\itemL}{L}
\newcommand{\itemS}{S}
\newcommand{\itemO}{O}
\newcommand{\binType}{\ensuremath{T}}
\newcommand{\binVar}{\binType}
\newcommand{\binSet}{\ensuremath{\mathcal{T}}}
\newcommand{\binBL}{BL}
\newcommand{\binBS}{BS}
\newcommand{\binB}{B}
\newcommand{\binLLS}{LLS}
\newcommand{\binLL}{LL}
\newcommand{\binLSS}{LSS}
\newcommand{\binLS}{LS}
\newcommand{\binL}{L}
\newcommand{\binSSS}{SSS}
\newcommand{\binSS}{SS}
\newcommand{\binS}{S}
\newcommand{\quartet}{Q^\ALG}
\newcommand{\quartetLL}{\ALG_{\binLL}^Q}
\newcommand{\quartetSSS}{\ALG_{\binSSS}^Q}
\newcommand{\quartetLLSSS}{\ALG_{\binLL,\binSSS}^Q}
\newcommand{\nonquartetLL}{\ALG_{\binLL}^{-Q}}
\newcommand{\nonquartetSSS}{\ALG_{\binSSS}^{-Q}}
\newcommand{\NBO}[1]{\ensuremath{\OPT_{#1}}}
\newcommand{\NBA}[1]{\ensuremath{\ALG_{#1}}}
\newcommand{\aboveAndBelowSkip}{4pt}
\newenvironment{tightalign}
 {
 \setlength{\abovedisplayshortskip}{\aboveAndBelowSkip}\setlength{\abovedisplayskip}{\aboveAndBelowSkip}
 \setlength{\belowdisplayshortskip}{\aboveAndBelowSkip}\setlength{\belowdisplayskip}{\aboveAndBelowSkip}
 \align}
 {\endalign}
\newenvironment{tightalign*}
 {
 \setlength{\abovedisplayshortskip}{\aboveAndBelowSkip}\setlength{\abovedisplayskip}{\aboveAndBelowSkip}
 \setlength{\belowdisplayshortskip}{\aboveAndBelowSkip}\setlength{\belowdisplayskip}{\aboveAndBelowSkip}
 \csname align*\endcsname}
 {\endalign}
\title{A Tight Approximation for Fully Dynamic Bin Packing\newline without Bundling
\thanks{
An extended abstract of this work, merged with~\cite{DBLP:journals/corr/abs-1711-02078}, will appear in ICALP 2018.
\newline
This work is partially supported by the German Research Foundation (DFG) within the Collaborative Research Center ``On-The-Fly Computing'' (SFB 901).}}
\author[]{Bj\"orn Feldkord}
\author[]{Matthias Feldotto}
\author[]{S\"oren Riechers}
\affil[]{Heinz Nixdorf Institute and Department of Computer Science\\
	Paderborn University, F\"urstenallee 11, 33102 Paderborn, Germany
}
\affil[]{ \{bjoernf,feldi,soerenri\}@mail.upb.de}
\date{}
\begin{document}
\hypersetup{pageanchor=false}
\maketitle

\begin{abstract}
We consider a variant of the classical Bin Packing Problem, called Fully Dynamic Bin Packing.
In this variant, items of a size in $(0,1]$ must be packed in bins of unit size. In each time step, an item either arrives or departs from the packing. An algorithm for this problem must maintain a feasible packing while only repacking a bounded number of items in each time step.

We develop an algorithm which repacks only a constant number of items per time step and, unlike previous work, does not rely on bundling of small items which allowed those solutions to move an unbounded number of small items as one. Our algorithm has an asymptotic approximation ratio of roughly $1.3871$ which is complemented by a lower bound of Balogh et al.~\cite{DBLP:journals/siamcomp/BaloghBGR08}, resulting in a tight approximation ratio for this problem.
As a direct corollary, we also close the gap to the lower bound of the Relaxed Online Bin Packing Problem in which only insertions of items occur.
\end{abstract}

\thispagestyle{empty}
\clearpage
\setcounter{page}{1}
\pagebreak
\hypersetup{pageanchor=true}


\section{Introduction}\label{sec:Introduction}

A problem instance of the classical \emph{Bin Packing Problem} is given by a list of $n$ items $L=\left(a_1, a_2, \ldots, a_n\right)$.
The objective is to pack these items into as few bins (of unit size) as possible.
The problem is known to be NP-complete~\cite{DBLP:books/fm/GareyJ79} and one of the most studied problems in theoretical computer science since it relates to many applications in areas such as load balancing and file management.
For the offline version both an APTAS~\cite{DBLP:journals/combinatorica/VegaL81} and an AFPTAS~\cite{DBLP:conf/focs/KarmarkarK82} have been developed a long time ago.
Recent research has concentrated on the online version and other more dynamic behavior, thus covering a wider range of scenarios with this model.

In many realistic settings, items arrive over time instead of having the whole input available at the beginning of the computation.
While online problems, specifically the \emph{Online Bin Packing Problem}~\cite{princeton1971performance}, cover this property the reality often goes beyond.
In addition to the online arrival of items, some items may also depart from the packing over time.
As an example, when minimizing the number of machines for a collection of long-time running jobs, some new jobs may be added while others may be terminated.
Also, the number of machines might be reduced by migrating some jobs from one machine to another.
However, since migrating jobs causes additional costs, the number of moved jobs should be bounded.
Motivated by such examples and unlike the classical setting, in bin packing models with repacking, the assignment of items to bins is not irrevocable.
Instead, we allow an algorithm to repack a bounded number of items in each time step.
The \emph{Fully Dynamic Bin Packing Problem}~\cite{DBLP:journals/siamcomp/IvkovicL98} represents this setting by requiring an algorithm
to react to dynamic changes in the input while repacking only a small amount of items in every time step.

Due to the integer nature of bin packing problems and because we are typically interested in large instances,
the \emph{asymptotic approximation ratio} has been used to measure the quality of an algorithm in the Fully Dynamic Bin Packing Problem:
Let $L_t$ be the input sequence at time step $t$, $\ALG(L_t)$ the number of bins used by the algorithm and $\OPT(L_t)$ the number of bins used by an optimum solution.
The asymptotic approximation ratio is defined as
 $\lim_{x\rightarrow\infty}\sup_{\OPT(L_t)=x}\frac{\ALG(L_t)}{\OPT(L_t)}$.

Regarding the repacking after each modification of the input,
it is practically infeasible to allow an algorithm to move an arbitrary amount of items.
Two different approaches have been developed to measure the amount of repacking:
One line of research focuses on the migration factor~\cite{DBLP:journals/mor/SandersSS09}, which is defined as the ratio between the total load of moved items during an insertion (deletion) and the size of the new (removed) element.
Our work joins the second line of research where the number of shifting moves is bounded~\cite{DBLP:journals/siamcomp/GambosiPT00}: that is, only a constant (absolute) number of items may be moved from one bin to another after each insertion (or deletion).

One major challenge of Fully Dynamic Bin Packing in the algorithmic context is the presence of very small and very large items.
Deleting (or inserting) one large item in a packing with many small items may result in bins becoming quite empty while still holding many items.
Consequently, this requires lots of repacking (see also the lower bounds with constructions based on this fact~\cite{DBLP:journals/siamcomp/BaloghBGR08,DBLP:journals/ipl/IvkovicL96}).
To overcome this issue,
the technique of \emph{bundling} was introduced~\cite{DBLP:journals/siamcomp/GambosiPT00} and is widely used in the literature.
Instead of handling each item separately, small items with a size below some threshold are grouped together and handled like one element.
This notion is reasonable in the context of the migration factor as well as in situations where the handled items may be regarded as similar such that
multiple items can essentially be treated as one.
However, if items are more unique and the processing of one shift within the respective application, for example setting up a task on a virtual machine,
requires effort (mostly) independent of the item's size, then a more strict measurement of repacking is needed to accurately assess the efficiency of the algorithm.
The problem of not utilizing bundling is also interesting from a theoretical point of view since a tight ratio for this problem together with the existing $(1+\varepsilon)$-approximation for the problem with bundling~\cite{DBLP:conf/approx/BerndtJK15} does
allow us to precisely judge the improvement achievable through the bundling technique.


\subsection{Our Contribution}\label{sec:Contribution}
We provide an algorithm for the Fully Dynamic Bin Packing Problem which repacks only a constant number of items after each insertion or deletion.
To the best of our knowledge, this is the first algorithm with a constant approximation ratio, using a constant, non-amortized number of repackings, that does not use a bundling technique for handling very small items.
Furthermore, we achieve tightness on the asymptotic approximation ratio,
for which our algorithm approaches $\alpha:=1-\nicefrac{1}{\left(W_{-1}\left(\nicefrac{-2}{e^3}\right)+1\right)}\approx 1.3871$ while there exists a matching
lower bound for this problem~\cite{DBLP:journals/computing/BaloghBGM09,DBLP:journals/siamcomp/BaloghBGR08}.
Here, $W_{-1}$ denotes the lower branch of the Lambert-W-function.
Apart from the fully dynamic version with bundling, this is the first variant of dynamic or online bin packing in which matching lower and upper bounds could be shown.
Additionally, we improve the best known algorithm for Relaxed Online Bin Packing~\cite{DBLP:journals/jco/BaloghBGR14} and also close the gap between lower and upper bound in this model.

We tackle the problem of restricted repacking in three steps.
First, we pack all items with a size below a very small threshold, but bins are filled only to a varying height, thus reserving spaces of different size for potential large items.
The size of these spaces are carefully chosen to counter bad instances where large items do not fit into them.
Next, we pack the items which do not fall in the first category such that the packing without the small items fulfills certain structural properties which are helpful for the analysis.
In the third step, we carefully merge the bins such that the number of moved items per time step remains constant while still guaranteeing the desired quality of the overall packing.

Our main result can be summarized in the following theorem:
\begin{theorem}\label{theorem:main_result}
For each $\varepsilon \in (0,1)$, there exists an algorithm for the Fully Dynamic Bin Packing Problem with
an asymptotic approximation ratio of $\left(1+\varepsilon\right)\cdot\left(1-\nicefrac{1}{\left(W_{-1}\left(\nicefrac{-2}{e^3}\right)+1\right)}\right)\approx \left(1+\varepsilon\right)\cdot 1.3871$
which repacks at most $\mathcal{O}\left(\nicefrac{1}{\varepsilon^2}\right)$ items per insertion or deletion of an item.
\end{theorem}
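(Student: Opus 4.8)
The plan is to realise the three stages described in Section~\ref{sec:Contribution} and to bound $\ALG(L_t)$ against $\OPT(L_t)$ one bin-group at a time. Fix $\varepsilon\in(0,1)$, a rounding granularity $k=\Theta(1/\varepsilon)$, and a small threshold $\delta=\delta(\varepsilon)$; call an item \emph{tiny} if its size is at most $\delta$ and \emph{big} otherwise. In the first stage I would pack the tiny items greedily into ``sand bins'', but instead of filling each such bin to height $1$ I reserve a gap at the top whose height comes from a fixed finite menu of values $1-h_1>1-h_2>\cdots$. The menu is designed so that for every size a future big item can have, sufficiently many sand bins carry a reserved gap into which it fits; the breakpoints are tuned against the worst-case family behind the Balogh et al.\ lower bound~\cite{DBLP:journals/siamcomp/BaloghBGR08}, and this is exactly where the constant $\alpha:=1-\nicefrac{1}{(W_{-1}(\nicefrac{-2}{e^3})+1)}$ enters. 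Because a tiny item has size $\le\delta$, inserting or deleting one overflows or underfills at most one sand bin and is repaired by a cascade of $\mathcal O(1)$ tiny-item moves, and reshaping a reserved gap after a big item arrives or departs is likewise local.

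In the second stage I would round big item sizes into $\Theta(1/\varepsilon)$ classes, so that the ``coarse'' instance (all tiny mass removed) has a bounded description, and maintain a packing of this coarse instance in a canonical form: every bin is classified by the multiset of big-item classes it contains (the bin types $\binB,\binBL,\binBS,\binLL,\binLLS,\binLS,\binLSS,\binL,\binSS,\binSSS,\binS,\dots$), and the number of bins of each type deviates by only an additive $\mathcal O(1/\varepsilon)$ from the count an optimal rounded solution would use, via the usual APTAS-style LP rounding together with its sensitivity to the insertion or deletion of a single item. Thus an update changes the type profile by $\mathcal O(1)$ and is absorbed by moving $\mathcal O(1/\varepsilon)$ big items among a bounded number of representative bins, and the coarse packing uses at most $(1+\varepsilon)\,\OPT^{\mathrm{coarse}}(L_t)+\mathcal O(1/\varepsilon)$ bins.

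In the third stage I would merge the two packings: a big item whose bin has a free slot of the appropriate menu height is matched with a sand bin whose reserved gap accommodates it, the two bins are combined, and leftover sand bins are consolidated. Each update touches $\mathcal O(1)$ bins in each packing, so the merge moves only $\mathcal O(1/\varepsilon)$ further items; together with the $\mathcal O(1/\varepsilon)$ moves of the second stage spread over a re-balancing cascade of length $\mathcal O(1/\varepsilon)$ this gives $\mathcal O(1/\varepsilon^2)$ shifts per time step. For the ratio I would write the final size as (merged bins) $+$ (pure sand bins) $+$ (pure big bins) and compare each group to the matching quantity in $\OPT$: the reserved-gap menu guarantees that no instance can simultaneously force many half-empty big bins \emph{and} many underfilled sand bins, and optimising the menu breakpoints against this trade-off yields precisely the factor $\alpha$, while the $(1+\varepsilon)$ loss and the additive $\mathcal O(1/\varepsilon^2)$ come from the rounding of the second stage and are absorbed in the asymptotic ratio. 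This gives $\ALG(L_t)\le(1+\varepsilon)\,\alpha\,\OPT(L_t)+\mathcal O(1/\varepsilon^2)$ at every step, as claimed.

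I expect the main obstacle to be the interface between the first two stages: one must design an invariant on the reserved gaps and on the canonical big-item packing that survives \emph{both} insertions and deletions using only a constant number of moves — in particular, handling the departure of a big item that was sharing a bin with tiny items without triggering a cascade that re-opens an unbounded number of sand bins, which is exactly the situation previous work escaped by bundling small items. Getting the menu of gap heights, the rounding granularity, and the re-balancing cascade to fit together so that the amortisation-free shift count stays $\mathcal O(1/\varepsilon^2)$ while the ratio stays $(1+\varepsilon)\alpha$ is the delicate part of the argument.
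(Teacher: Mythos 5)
Your plan shares the paper's three-stage architecture (tiny items with reserved gaps, big items separately, then merge), and your Stage~1 is essentially the paper's: a menu of reserved gap heights tuned against the Balogh et al.\ lower bound, bins kept sorted, and a cascade of $\mathcal O(1)$-size moves on insertion/deletion. Your Stage~3 is also the same greedy pairing idea. But your Stage~2 diverges substantially, and that divergence is where the plan has gaps.

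\textbf{Stage 2 is not what the paper does, and your move bound there is unsupported.} The paper does \emph{not} round big items into $\Theta(1/\varepsilon)$ size classes and maintain an APTAS-style LP solution. It classifies major items into only four fixed classes ($\itemB\in(\nicefrac12,1]$, $\itemL\in(\nicefrac13,\nicefrac12]$, $\itemS\in(\nicefrac14,\nicefrac13]$, $\itemO\in(\nicefrac{\varepsilon}{15},\nicefrac14]$) and runs Ivkovic's Myopic Packing (MP), a first-fit-grouping algorithm with asymptotic ratio $\nicefrac43$ and, crucially, a worst-case \emph{constant} number of bin changes per update — which yields $\mathcal O(1/\varepsilon)$ shifting moves since each bin holds at most $\nicefrac{15}{\varepsilon}$ major items. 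Your proposed route instead relies on ``APTAS-style LP rounding together with its sensitivity to the insertion or deletion of a single item'' giving $\mathcal O(1)$ changes to the type profile and $\mathcal O(1/\varepsilon)$ moves. That sensitivity claim is not trivial and you do not argue it: the known LP-sensitivity arguments of this kind (Berndt~et~al.) give $\mathcal O(\nicefrac{1}{\varepsilon^4}\log\nicefrac{1}{\varepsilon})$ moves, not $\mathcal O(\nicefrac{1}{\varepsilon})$, and even Gupta~et~al.'s bundling-free variant gives $\mathcal O(\nicefrac{1}{\varepsilon^4})$. As written, your Stage~2 would need a new idea to hit the claimed move count.

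\textbf{The combination analysis is the real content, and it is treated as a black box.} You assert that ``optimising the menu breakpoints against this trade-off yields precisely the factor $\alpha$,'' but this is exactly the part that requires work and it is sensitive to the structure of the Stage~2 packing. The paper's Lemmas~\ref{claim:boundB}--\ref{claim:boundS2} and~\ref{lemma:OPT_LL}--\ref{lemma:OPT_SSS} compare $\ALG$'s bin-type counts ($\NBA{\binBL},\NBA{\binBS},\NBA{\binB},\NBA{\binLLS},\NBA{\binLL},\NBA{\binSSS}$) against $\OPT$'s, using MP's \emph{thoroughness} property (Lemma~\ref{lemma:thoroughness}) together with the reserved-gap geometry and the quartet decomposition. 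A $(1+\varepsilon)$-approximation on the big items alone does not immediately give this: $\OPT$ for the big items in isolation can be far above $\OPT$ for the full instance, so ``$(1+\varepsilon)\OPT^{\text{coarse}}$'' is not the right benchmark, and you would need to argue separately how many of your big-item bins are under-filled and how they pair with the reserved gaps. That argument is absent. So while the high-level decomposition matches the paper, the pivotal step — turning the reserved-gap menu plus a big-item packing into the exact $(1+\varepsilon)\alpha$ bound — is missing, and the Stage~2 primitive you chose is both different from the paper's and not shown to satisfy either the move budget or the structural properties the merge analysis needs.
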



\subsection{Related Work}\label{sec:RelatedWork}

Based on the classical Bin Packing Problem~\cite{DBLP:books/fm/GareyJ79}, different online and dynamic variants have been developed and investigated.
Due to space constraints we focus on the scenarios which share the most important properties with our model and only mention the best results for them.
In the \emph{Online Bin Packing Problem}~\cite{princeton1971performance}, the items are unknown to the algorithm at the beginning and appear one after another.
Balogh et al.~\cite{DBLP:journals/tcs/BaloghBG12} show a lower bound of $1.54037$ for this problem and Seiden~\cite{DBLP:journals/jacm/Seiden02} presents an approximation algorithm with a competitive ratio of $1.58889$.
In the \emph{Dynamic Bin Packing} setting~\cite{DBLP:journals/siamcomp/CoffmanGJ83}, additionally to arrivals as in the Online Bin Packing Problem departures of items are also allowed.
Here, a lower bound of $\nicefrac{8}{3}$ by Wong et al.~\cite{DBLP:conf/isaac/WongYB12} and an upper bound of $2.897$ by Coffman et al.~\cite{DBLP:journals/siamcomp/CoffmanGJ83} exist.

We now turn our attention to the models which allow repacking of items.
Our main focus is the counting of shifting moves; for the alternative research line with the migration factor as the main cost function we refer to~\cite{DBLP:conf/approx/BerndtJK15}.
In the \emph{Relaxed Online Bin Packing Problem}~\cite{DBLP:journals/siamcomp/GambosiPT00}, online arrivals and no departures occur, but repacking of items is allowed.
Repacking means that items can be assigned to another bin in the course of the execution, while in a setting without repacking decisions are irrevocable.
The best known lower bound for an algorithm which uses only a constant number of shifting moves is originally given for our model, but it also applies to this setting with $1.3871$ by Balogh et al.~\cite{DBLP:journals/siamcomp/BaloghBGR08}.
From the positive perspective Balogh et al.~\cite{DBLP:journals/jco/BaloghBGR14} give an approximation algorithm based on the
Harmonic Fit Algorithm~\cite{DBLP:journals/jacm/LeeL85} for which they achieve a competitive ratio of $\nicefrac{3}{2}$.
Since our algorithm is also applicable to this setting we improve this result to also close the gap between the lower and upper bound for this problem.

Our setting is the most powerful model among the presented ones, the \emph{Fully Dynamic Bin Packing}~\cite{DBLP:journals/siamcomp/IvkovicL98}, in which we allow arrivals and departures of items as well as repacking.
Ivkovic and Lloyd~\cite{DBLP:journals/siamcomp/IvkovicL98,ivkovic2009fully} introduced the model of Fully Dynamic Bin Packing and developed an algorithm called \emph{Mostly Myopic Packing (MMP)} which achieves a $\nicefrac{5}{4}$-competitive ratio.
Their algorithm is based on an offline algorithm by Johnson~\cite{Johnson1973,DBLP:journals/jcss/Johnson74} and utilizes a technique whereby the packing of an item is done with a total disregard for already packed items of a smaller size.
In contrast to our work, they use the concept of bundling very small elements in their analysis.
They can show that the number of single items or bundles of very small elements that need to be repacked is bounded by a constant.
Additionally, Ivkovic~\cite{Ivkovic1996} also gives a slightly simpler version of this algorithm, called \emph{Myopic Packing (MP)}.
It uses similar ideas but ignores one step of MMP that results in a much easier analysis and a competitive ratio of $\nicefrac{4}{3}$.
Berndt et al.~\cite{DBLP:conf/approx/BerndtJK15} consider exactly the same setting, also allowing the bundling of very small elements in the analysis.
Their algorithm achieves a $1+\varepsilon$ approximation ratio and a bound of $\mathcal{O}\left(\nicefrac{1}{\varepsilon^4} \log \left(\nicefrac{1}{\varepsilon}\right)\right)$ 
for both the migration factor and the number of shifting moves.
The closest result to our work is a paper by Gupta et al.~\cite{DBLP:journals/corr/abs-1711-02078} who, independently of our work, give an algorithm which does not utilize bundling and matches the lower bound for the approximation ratio while only repacking $\mathcal{O}(\nicefrac{1}{\varepsilon^{4}})$ items per time step.

In addition to the positive algorithmic results, researchers also explored lower bounds for this setting.
All results assume that no bundling is allowed (otherwise there can only be the trivial lower bound of one~\cite{DBLP:conf/approx/BerndtJK15}), hence allowing only a constant number of shifting moves per time step.
Ivkovic and Lloyd~\cite{DBLP:journals/ipl/IvkovicL96} show a lower bound of $\nicefrac{4}{3}$.
Their construction uses the inability of an algorithm to react to insertions and deletions of items with size slightly larger than $\nicefrac{1}{2}$ when
the remaining items may be of arbitrarily small size.
Balogh~et~al.~\cite{DBLP:journals/computing/BaloghBGM09,DBLP:journals/siamcomp/BaloghBGR08} improve this bound to roughly $1.3871$.
They extend the technique of the previous lower bound by constructing multiple lists of large items whose sizes are chosen through the construction of a linear program.
Their results are the inspiration for some of the parameter choices in this work.
Gupta et al.~\cite{DBLP:journals/corr/abs-1711-02078} extend the bound by introducing a tradeoff between repacking and the additive term in the competitive ratio when
surpassing the lower bound for the asymptotic ratio.


\subsection{Model}

In the Fully Dynamic Bin Packing Problem, we are given a list $L_t=\left(a_{t,1},a_{t,2},\ldots,a_{t,n_t}\right)$ of items at time step $t$.
We write $a\in L_t$ if $a=a_{t,i}$ for some $i\in\{1,\ldots,n_t\}$ and denote by $size(a)\in(0,1]$ the size of item $a$.
For convenience, we abuse notation to replace $size(a)$ with $a$ wherever the meaning is clear from the context.
The initial list $L_0=()$ is the empty list and between two time steps $t$ and $t+1$, the lists $L_t$ and $L_{t+1}$
differ by at most one item, i.e. either at most one item arrives or departs in step $t+1$.

An algorithm for the problem must output a valid packing $\left(B_{t,1},B_{t,2},\ldots,B_{t, m_t}\right)$ for each step $t$ where
$\sum_{a\in B_{t,i}}size(a)\leq 1$ for all $i\in\{1,\ldots,m_t\}$ and for each $a\in L_t$ there is exactly one $i\in\{1,\ldots,m_t\}$ such that
$a\in B_{i,t}$.
In the following, we omit the index $t$ whenever it is clear from the context.
The number of used bins in step $t$ is the number of bins $B_i$ for which $size(B_i):=\sum_{a\in B_i}size(a)>0$.
The number of shifting moves between two steps $t$ and $t+1$ is the number of items $a$ for which the following properties hold:
$a\in L_t$, $a\in L_{t+1}$ and $a$ is placed in two different bins in steps $t$ and $t+1$. 


\section{Algorithmic Approach}
\label{sec:approach}

Our algorithm can be split into three main parts, reflected by the three upcoming sections.
The first two sections pack only minor and major items\footnote{The terms \emph{minor} and \emph{major} have no meaning of priority in our work, they only serve as a clear distinction between small items (size of at most $\nicefrac{\varepsilon}{15}$) and large items.} separately, while the third part
of the algorithm combines these two solutions by merging appropriate bins into one.

We classify items as minor if they have a size of at most $\nicefrac{\varepsilon}{15}$.
These items are handled almost independently since the cumulative size of these items
we are allowed to shift in every time step may be arbitrarily small.
The bins that our algorithm will fill with minor items have a certain objective height: that is, some of the bins are filled with minor items only to a certain threshold of at most $\nicefrac{1}{2}$, reserving the remaining space for major items.
In order to achieve an approximation ratio that is arbitrarily close to the lower bound, it is not sufficient to fill all these half-empty bins up to the same height (this only results in an approximation ratio of $\sqrt{2}$).
Instead, we must maintain a number of bin types, with each bin type having a different filling height, and at the same time we must ensure that for each bin type the fraction of bins of this type remains roughly the same over time.

In order to get these filling heights and percentages right, we use similar values to the parameters derived from the linear program in the lower bound of Balogh~et~al.~\cite{DBLP:journals/siamcomp/BaloghBGR08}.
For some small $\varepsilon$, the lower bound cleverly chooses $k$ different item sizes $s_1+\varepsilon,\ldots,s_k+\varepsilon$ in $(\nicefrac{1}{2},\nicefrac{3}{4})$ to force any algorithm to open a new bin if too many items of size $s_i+\varepsilon$ arrive.
For our upper bound, we basically use these item sizes to derive bin types.
Essentially, the remaining space in bin type $i$ is chosen to be $s_{i+1}$: that is, we maintain enough bins with a remaining space of $s_{i+1}$ to host the required number of items of size $s_i+\varepsilon$, but at the same time prevent our algorithm from failing if the adversary adds items of a size in $(s_i+\varepsilon, s_{i+1}]$ instead. 
In order to get roughly the correct shares for each bin type (we will later show that rounding these shares in a certain way does not hurt our solution too much), we introduce so-called \emph{bin groups}.
These bin groups of size $l=\mathcal{O}(\nicefrac{1}{\varepsilon})$ are always composed in the same way with regard to the bin types.

Inspired by ideas from~\cite{DBLP:conf/approx/BerndtJK15}, our algorithm maintains an order of the minor items within the set of bins: that is, for any two adjacent bins, all items in the left bin are at least as large as the items in the right bin.
Whenever a new item arrives, this item is added in a bin such that this order of items is maintained.
If this bin becomes overfull (in the way that the objective height is exceeded), the largest item in this bin is moved to the next bin to the left (which only hosts items that are at least as large as the moved item).
This shifting is repeated until no bin is overfull anymore.
However, to avoid this process iterating over up to $\mathcal{O}(\varepsilon n)$ bins, we introduce so-called \emph{buffer groups}: that is, a set of bins meant to serve as a buffer in the way that at least one bin in this group can store additional items without exceeding the objective height. 
Our algorithm ensures that the number of bin groups between two buffer groups is at most $\mathcal{O}(\nicefrac{1}{\varepsilon})$, implying a maximum number of shifted items of $\mathcal{O}(\nicefrac{1}{\varepsilon^2})$ per insertion.
The deletion of items is handled in a similar way, where the bin from which an item was removed draws the smallest item from its neighbor into it.

For the major items, we utilize an algorithm called Myopic Packing (MP) by Ivkovic~\cite{Ivkovic1996}.
This algorithm has a competitive ratio of $\nicefrac{4}{3}$ (it is below the lower bound for our model since it uses bundling)
and modifies only a constant number of bins per time step.
Applying this algorithm to only items of a size of at least $\nicefrac{\varepsilon}{15}$ restricts the amount of repacking
to $\mathcal{O}(\nicefrac{1}{\varepsilon})$ per time step.
We develop a new view on this algorithm to derive structural properties of the solution which are needed for the combination of major and minor items.
Note that there are other algorithms that we could have used for this part instead, in particular the improved version of the MP algorithm MMP~\cite{DBLP:journals/siamcomp/IvkovicL98,ivkovic2009fully} or the $(1+\varepsilon)$-algorithm by Berndt~et~al.~\cite{DBLP:conf/approx/BerndtJK15}.
However, we refrained from doing so in order to keep the algorithms as simple as possible while still achieving the desired tight approximation ratio.

Finally, the bins with minor items are merged with the bins with major items by utilizing a greedy-like approach, where small chunks of reserved space and a big cumulative size of major items is prioritized in order to guarantee an efficient utilization of the reserved space.
The combination has two main challenges: Firstly, we ensure that this greedy process only has to modify $\mathcal{O}(\nicefrac{1}{\varepsilon})$ bins per time step.
Secondly, we guarantee a space efficient combination resulting in an overall good solution quality.
The analysis carefully utilizes the structural insights about the solution for major items to estimate the solution quality within the bins that did not get merged.

\newpage
\section{Prospective Packing of Minor Items}
\label{sec:small}

In this section,
we provide an algorithm which handles only items of a size of at most $\delta:=\nicefrac{\varepsilon}{15}$.
Let $\workloadSmall=\sum_{a \in L_t, a \leq \nicefrac{\varepsilon}{15}} a$  be the workload of all minor items at time $t$.
Given by an input list of only minor items, we describe the current packing as an ordered list of $m$ bins $B_1,\ldots, B_l, B_{l+1}, \ldots, B_{2l}, \ldots,$ $B_{m-l+1}, \ldots, B_m$.
The bins are always handled in groups of $l$ neighboring bins.

To enable a later combination with major items (see Section~\ref{sec:Combination}), we will use only a sub-part of each bin for the minor items.
Therefore, each bin is assigned a type $j\in\{1,\ldots,k\}$, which specifies the desired filling height.
In each bin of type $j$, the load of the minor items will sum up to at most a given $w_j \leq 1$.
The concrete assignment of bins to these types as well as the specific values for all $w_j$ will be given later.

The algorithm keeps the minor items in the bins in a sorted order,
such that $a\geq a'$ for $a\in B_i,a'\in B_{i'}\ \forall i\leq i'$.
We consider a bin $B_i$ of type $j$ to be a \emph{full bin}
if $\sum_{e\in B_i}e+\min_{d \in B_{<i}}d>w_j$,
i.e. if no item from any bin preceding $B_i$ fits into $B_i$.
If a group of $l$ bins contains only bins that are full, we call it a \emph{full group}.
In the other case, if at least one bin is not full, it is called a \emph{buffer group}.
The algorithm aims to have at least $l$ and at most $2\cdot l$ full groups between two buffer groups, which is maintained by the insertion and deletion procedures.
The algorithm is initialized with an empty packing of $0$ items and a first buffer group with $l$ empty bins.

For notational convenience in the following description, we fix the index of the bin $B_i$ in which either an item is inserted into or deleted from while the other indizes are dynamically adapted i.e. a newly inserted bin to the left of $B_i$ immediately receives the index $i-1$.

\noindent\textbf{Insertion:}
A new item $a$ is added to the packing in the bin $B_i$ such that $e\geq a$ for all $e\in B_{i'}$, $i'<i$ and $e'\leq a$ for all $e'\in B_{i''}$, $i''>i$.
Let $B_i$ be a bin of type $j$.
We distinguish the following cases:
\begin{compactenum}
  \item $B_i$ is a full bin before the insertion. If $\sum_{e\in B_i}> w_j$ after the insertion, then we recursively insert $\max_{e\in B_i}e$ into $B_{i-1}$ and remove it from $B_i$.
	  Otherwise, the procedure terminates.
  \item $B_i$ is not full before and after the insertion of $a$. The procedure terminates.
	\item $B_i$ is not full before the insertion, but full after the insertion:
	  \begin{compactenum}
		\item
	  $B_i$ is not the left-most bin in a buffer group. If $\sum_{e\in B_i}e > w_i$ after the insertion, recursively insert $\max_{e\in B_i}e$ into $B_{i-1}$ and remove it from $B_i$, otherwise
		terminate.
		\item
		$B_i$ is the left-most bin in a buffer group. Insert a new buffer group to the left of $B_i$. If $\sum_{e\in B_i} e > w_j$ after the insertion, insert $\max_{e\in B_i}e$ into $B_{i-1}$ (i.e., the right-most bin of the new buffer group) and remove it from $B_i$.
		Additionally, if the distance between the group of $B_i$ and the next buffer group to the right is $2l$, insert a new buffer group to the right of $B_i$ such that there are $l$ groups between the inserted buffer group left to $B_i$ and the new buffer group.
		\end{compactenum}
\end{compactenum}

\begin{figure}[htb]
	\centering
	\includegraphics[scale=.34]{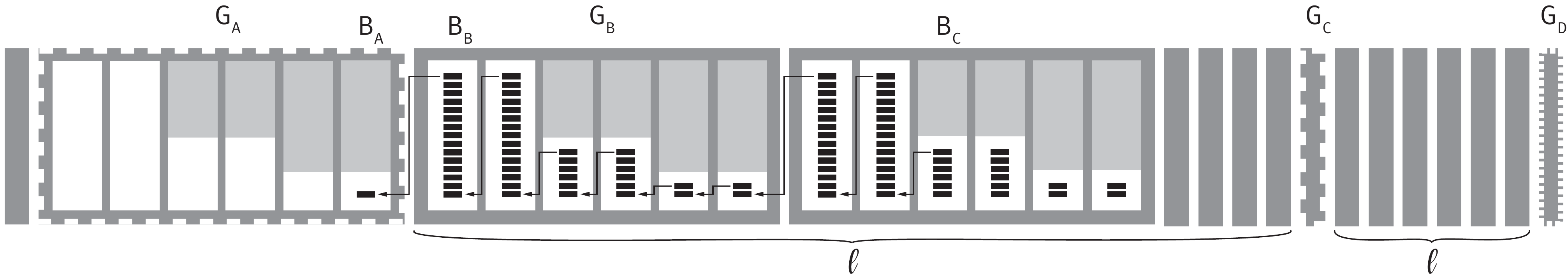}
	\caption{The displayed configuration shows the outcome of an insertion into bin $B_C$.
		Before this insertion, $G_B$ was a buffer group and the buffer groups $G_A$ and $G_C$ did not exist.
		An item was eventually shifted into $B_B$ after repeatedly running into case 1 in the recursion, causing the insertion of the new buffer group $G_A$ (cf. case 3b of the insertion procedure).
		$G_C$ also needs to be inserted since the distance between $G_A$ and $G_D$ now consists of $2l$ full groups.
		Afterwards, the recursive insertion of an item into bin $B_A$ causes the procedure to terminate (case 2).}
	\label{fig:algorithm_minor}
\end{figure}
\newpage
\noindent\textbf{Deletion:}
When an item $a$ is removed from a bin $B_i$, we proceed as follows:
\begin{compactenum}
  \item If $B_i$ is (still) full after the deletion, the procedure terminates.
	\item If $B_i$ was full before but is not full after the deletion and $B_{i-1}$ contains at least one item, then insert $\min_{e\in B_{i-1}}e$ into $B_i$ and recursively delete it from $B_{i-1}$.
	\item If $B_i$ is not full after the deletion and $B_{i-1}$ contains no item, then:
	  \begin{compactenum}
		  \item If $B_i$ is part of a buffer group, then terminate.
			\item If $B_i$ is not part of a buffer group, then remove the buffer group to the left of $B_i$ (since all bins in that group are empty now).
			  If there are at least $l$ full groups between the group of $B_i$ and the next buffer group to the right of it, then terminate (the group of $B_i$ is now a buffer group).
				Else, recursively delete $\min_{e\in B_{i-1}}e$ from $B_{i-1}$ (after the deletion of the former buffer group, $B_{i-1}$ now contains an item) and insert it into $B_i$.
				If there are now at least $2l$ full groups between the buffer group to the left and the buffer group to the right of the group of $B_i$, insert a new buffer group
        between those groups such that the distance of the new buffer group to those two groups is between $l$ and $2l$.
		\end{compactenum}
\end{compactenum}


\paragraph*{Choice of Parameters}\label{sec:parameters}
What remains open in the description of the algorithm is the concrete assignment of bin types and the choice of the parameters $k$ and $l$.
Our choice of filling heights is inspired by the parameters from the lower bound by Balogh et al.~\cite{DBLP:journals/siamcomp/BaloghBGR08},
but in order to get the desired upper bound instead, each filling height is essentially replaced by the next smaller filling height from the lower bound (see also the short discussion about this in Section~\ref{sec:approach}).
Let $\alpha:=1-\nicefrac{1}{\left(W_{-1}\left(\nicefrac{-2}{e^3}\right)+1\right)}\approx 1.3871$ be the value of the lower bound.

For each bin group of size $l$, we need to take care of the correct fraction of bins of a certain type $j$, which we implicitly determine by parameters $z_j$ (for notational convenience, we also write $z_j'\coloneqq \sum_{i=1}^{j}z_i$).
For a minor item workload of $\workloadSmall$, we aim to create roughly $z_k' \cdot \workloadSmall$ bins, where we choose $z_k' \coloneqq (1+\nicefrac{\varepsilon}{4})\alpha$, hence achieving the desired tight bound for minor items.
The filling height corresponding to bin type $k$ is defined as $w_k\coloneqq y_k\coloneqq \nicefrac{(z_k'-1)}{z_k'}$.
Bins of this type have the largest remaining space, which is $\nicefrac{1}{z_k'}$.
Intuitively, the reason is that we do not need to reserve space for major items of size at least $\nicefrac{1}{z_k'}$ as packing these items in exclusive bins still results in an approximation ratio of $z_k'$.

For the other bin types $j < k$, we now choose the parameters $y_j$ according to the geometric series $y_j = \frac{1}{2} \left(2\cdot y_k\right)^{\frac{j-1}{k-1}}$ (see also~\cite{DBLP:journals/siamcomp/BaloghBGR08} for the background on why this is a good choice).
The filling heights $w_j$ of the different bin types depend almost directly on these parameters: We set $w_j\coloneqq y_j\  \forall j > 1$ and $w_1\coloneqq 1 = y_1 + \frac{1}{2}$.
This perceived inconsistency is due to the shift of the other $w_j$ (w.r.t. the lower bound) as explained above.

The remaining values for $z_j$ are set such that $z_j'\coloneqq \nicefrac{y_k}{(y_j(1-y_k))}$ holds for all $j$. The values for $z_j'$ are a result of optimizing the number of bins of type $\leq j$ against a class of bad instances where many items of size $1-y_j+\varepsilon'$ (for some tiny $\varepsilon' > 0$) are inserted. These items can not be packed into the same bins with such types, however fit into bins of type $>j$.
Note that the choice of these parameters results in $\nicefrac{1}{4} < 1-\nicefrac{1}{z_k'} = y_k<\ldots<y_1=\nicefrac{1}{2}$ and $\nicefrac{3}{4} < 2\cdot (z_k'-1) = z_1'<\ldots<z_k'=(1+\nicefrac{\varepsilon}{4})\cdot\alpha$.

Based on these parameters, a group of $l$ bins is organized as follows:
We choose the size of a bin group to be $l=\left\lceil\nicefrac{4z_k'}{\varepsilon\alpha}\right\rceil=\lceil 4/\epsilon\rceil +1$
and the total number of bins of type $\leq j$ to be $\left\lceil\nicefrac{z_j'}{z_k'}\cdot l\right\rceil$.
Hence, the number of bins of type $1$ is $\left\lceil\nicefrac{z_{1}}{z_k'}\cdot l\right\rceil$, whereas for bin types $j>1$, it is determined by $\left\lceil\nicefrac{z_j'}{z_k'}\cdot l\right\rceil - \left\lceil\nicefrac{z_{j-1}'}{z_k'}\cdot l\right\rceil$.
Note that the rounding implies that the number of bins of some types may be zero.

Finally, we choose the number of bin types to be $k=\left\lceil\nicefrac{3}{\varepsilon}\right\rceil+1$.


\paragraph*{Analysis}
The main goal of the upcoming analysis is to bound the number of bins used for a given payload $\workloadSmall$ of minor items.
In the following we state properties of the structure of the algorithm's solution.
Their proofs and also further proofs can be found in the appendix.
\begin{lemma}\label{lemma:ordering_feasibility}
	During the whole execution of the algorithm it holds for two arbitrary bins $B_i$ and $B_{i'}$ with $i<i'$ that $\min_{e \in B_i}e \geq \max_{e \in B_{i'}}e$.
	Furthermore, for each bin $B$ with type $j$ it holds that $\sum_{e\in B}e\leq w_j$.
\end{lemma}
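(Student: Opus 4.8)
The plan is to establish both invariants simultaneously by induction on the sequence of time steps (insertions and deletions), since the sorted-order property is exactly what makes the recursive shifting in the insertion/deletion procedures well-defined and terminating. At time $0$ the packing is empty, so both statements hold vacuously. For the inductive step I would assume that before an operation both invariants hold, and then trace through each case of the insertion and deletion procedures, verifying that the operation restores them.

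First I would handle the \emph{ordering} invariant. For an insertion, the new item $a$ is placed in the (by definition unique up to ties) bin $B_i$ such that every item in bins to the left is $\geq a$ and every item in bins to the right is $\leq a$; this immediately preserves the order at the moment of placement. The only further modifications are the recursive moves ``take $\max_{e\in B_i}e$ and insert it into $B_{i-1}$.'' Here the key observation is that by the induction hypothesis $\min_{e\in B_{i-1}}e \geq \max_{e\in B_i}e$, so moving the largest element of $B_i$ into $B_{i-1}$ keeps $B_{i-1}$'s minimum at least as large as $B_i$'s new maximum, and leaves all other pairwise comparisons intact; the same argument then applies one bin further left, so the recursion preserves order at every step. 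The introduction of a fresh empty buffer group (cases 3a, 3b) changes nothing about order since empty bins impose no constraints. For deletions, removing an item only shrinks a bin's contents, and the only additive moves are ``insert $\min_{e\in B_{i-1}}e$ into $B_i$'': again by the hypothesis this smallest element of $B_{i-1}$ is $\leq$ everything in $B_i$ and $\geq$ everything in $B_{i+1}$ (since it was $\geq$ the max of $B_i \supseteq$ stuff to the right), so order is maintained; deleting an empty buffer group is harmless.

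Next I would handle the \emph{height} invariant $\sum_{e\in B}e \leq w_j$ for a bin $B$ of type $j$. The potentially dangerous events are exactly the ones where an item is added to a bin: the initial placement of $a$, and each recursive push of $\max_{e\in B_i}e$ into $B_{i-1}$. The procedures are written precisely so that whenever a bin $B_i$ of type $j$ would end a step with $\sum_{e\in B_i}e > w_j$, the procedure does not terminate there but instead pushes the maximum element left and recurses; so by the time the procedure halts, every bin it touched satisfies the bound, and bins it did not touch are unchanged and satisfy the bound by the induction hypothesis. For deletions the height of a bin only ever decreases at bins that lose an item and increases only at bins that \emph{were not full} (case 2 draws from $B_{i-1}$ into $B_i$ only when $B_i$ is not full after the deletion, and a not-full bin of type $j$ has $\sum e + \min_{d\in B_{<i}} d \leq w_j$, so even after absorbing one more item of size $\min_{e\in B_{i-1}}e$ it stays $\leq w_j$); the recursive step in 3b is the same kind of draw into a not-full bin, hence safe by the same ``not full'' definition.

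The main obstacle, and the part that needs genuine care rather than bookkeeping, is \emph{termination} of the recursive shifting: the argument above only gives the invariant if the procedure actually halts, and one must rule out the recursion marching off the left end of the bin list or looping forever. This is where the buffer-group structure is essential — the invariant maintained by the insertion/deletion procedures that there is at least one non-full bin within $\mathcal{O}(1/\varepsilon)$ bins guarantees the recursion reaches a bin where the terminating branch (case 2, or the ``otherwise terminate'') fires. I would therefore need to interleave, or cite, the claim that between consecutive buffer groups there are always between $l$ and $2l$ full groups and that a buffer group always contains a non-full bin; this can be folded into the same induction (it is essentially what the insertion/deletion cases 3b and 3 were engineered to preserve), or stated as a companion lemma. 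Once termination is in hand, the two invariants follow from the case analysis above; since the excerpt defers the proof to the appendix, I would present the case analysis in full there and flag the termination/buffer-structure bookkeeping as the one spot that must be argued carefully rather than waved through.
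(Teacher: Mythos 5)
Your proof takes the same inductive case-analysis approach as the paper's appendix proof (which is essentially a much terser version of what you wrote): ordering is preserved because each recursive shift moves either the largest item of $B_i$ left or the smallest item of $B_{i-1}$ right, and feasibility is preserved because the insertion recursion only terminates once the current bin is within its threshold while the deletion recursion only draws an item into a bin already established to be not full. One small slip: in the deletion-ordering argument you write that $\min_{e\in B_{i-1}}e$ is $\leq$ everything in $B_i$, where you mean $\geq$ (as your own parenthetical correctly says); and the termination/well-definedness concern you flag is genuine but is deliberately isolated from this lemma — the invariant is maintained at every intermediate state of the recursion regardless of whether it halts, while the actual halting and recursion-depth bound are established separately by Lemmas~\ref{lemma:distance} and~\ref{lemma:small_shifting_moves}.
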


\begin{lemma}\label{lemma:distance}
	 The number of full groups between two buffer groups is in $[l,2l]$.
\end{lemma}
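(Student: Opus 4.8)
The plan is to argue by induction over time steps that the invariant ``every pair of consecutive buffer groups is separated by between $l$ and $2l$ full groups'' is preserved by each insertion and each deletion, given that it holds initially (vacuously, since the initial packing has a single buffer group). The core of the argument is a careful case analysis mirroring the two procedures, checking that whenever a buffer group is created, removed, or changes status (full group $\leftrightarrow$ buffer group), the algorithm explicitly rebalances so that no gap drops below $l$ or exceeds $2l$.

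First I would handle \textbf{insertion}. The only way the structure changes is through case~3b: an item is shifted all the way into the left-most bin of a buffer group $G$, so $G$ becomes full, and a new buffer group is inserted immediately to its left. I need two observations. (i) Before the insertion, the gap between $G$ and the next buffer group to its \emph{left} was some value $d \in [l, 2l]$ by the inductive hypothesis; after inserting the new buffer group at distance $0$ from (the former) $G$, that gap is split, but since $G$ just turned into a full group, the gap from the new buffer group to the old left neighbor is exactly $d \le 2l$, and it is $\ge l$ precisely because $d \ge l$ — wait, more carefully: the new buffer group sits where it does, and the $d$ full groups (old gap) plus the now-full $G$ would make $d+1$, which could exceed $2l$; this is exactly why the algorithm's ``additionally'' clause fires. (ii) On the right side of the new buffer group we now have the former $G$ (full) followed by however many full groups lay between $G$ and the next buffer group to the right; if that total reaches $2l$, the algorithm inserts a second new buffer group to split it into two gaps each in $[l,2l]$, which is possible precisely because $2l$ splits as $l + l$ (or nearby values). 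I would write out that the insertion ``such that there are $l$ groups between'' them is well-defined and leaves both resulting gaps in $[l,2l]$.

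Next I would handle \textbf{deletion}, where the structural change is case~3b of the deletion procedure: a buffer group's bins all become empty, so that buffer group is removed, merging two previously separate gaps into one. Let the two gaps have sizes $d_1, d_2 \in [l,2l]$ (inductive hypothesis), plus the group of $B_i$ which may or may not be full. If after removal there are still at least $l$ full groups between the group of $B_i$ and the next buffer group to the right, the group of $B_i$ becomes the new buffer group and we are fine — I must check the \emph{left} side gap is also $\ge l$ (it was $d_1 \ge l$ before, minus nothing, so still fine) and that the right-side gap $\le 2l$ (it is the count of full groups strictly between, which was part of $d_2 \le 2l$). Otherwise the merged gap has size roughly $d_1 + d_2 \in [2l, 4l]$, which could exceed $2l$, so the algorithm inserts a fresh buffer group to split it; I need to verify a gap in $[2l,4l]$ always splits into two gaps each in $[l,2l]$ — this is the arithmetic fact that for $2l \le N \le 4l$ one can write $N-1 = a + b$ with $a, b \in [l, 2l]$ (one extra group, the inserted buffer group, is consumed), which holds for $l \ge 1$.

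The main obstacle I anticipate is \textbf{bookkeeping the ``off-by-one'' effects}: each time a group flips from buffer to full (or vice versa) it changes a gap count by one, and each inserted/removed buffer group shifts boundaries, so I must be scrupulous about whether ``$d$ full groups between'' counts the endpoint groups, and about the exact moment (before vs. after the shift/merge) at which the inductive hypothesis is applied. I would organize this by stating precisely, as a sub-claim, the gap arithmetic lemma (``any integer in $[2l,4l]$ can be written as $1 + a + b$ with $a,b\in[l,2l]$, and any integer in $[l,2l]$ needs no splitting''), prove it in one line, and then invoke it uniformly in cases 3b of both procedures; all remaining cases (insertion cases 1, 2, 3a; deletion cases 1, 2, 3a) change no buffer group and hence trivially preserve the invariant, which I would dispatch in a sentence each.
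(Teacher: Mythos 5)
Your approach is essentially the one the paper takes: induct on time steps and perform a case analysis over the insertion and deletion procedures, observing that only case~3b of each can change the set of buffer groups, and then check the gap arithmetic there. The paper's own proof is even terser than your plan, so your organization (isolating an explicit gap-arithmetic sub-claim, dispatching the other cases in one sentence each) is a reasonable, arguably more careful, rendering of the same argument.

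There is, however, a concrete error in your gap-arithmetic lemma. You state that ``for $2l \le N \le 4l$ one can write $N-1 = a + b$ with $a,b\in[l,2l]$ (one extra group, the inserted buffer group, is consumed), which holds for $l\ge 1$.'' This is false for $N=2l$: you would need $a+b = 2l-1$ with $a,b \ge l$, impossible. The source of the mistake is the ``consumption'' picture. Inserting a new buffer group does not convert one of the existing full groups into a buffer group; it adds a brand-new group of $l$ empty bins into the ordered list of bins, and the $N$ full groups all remain. The correct statement is therefore $a+b = N$ with $a,b\in[l,2l]$, which indeed holds for every integer $N\in[2l,4l]$ (take $a=\lfloor N/2\rfloor$, $b=\lceil N/2\rceil$). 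With that correction your plan goes through. For the insertion side, the relevant count is $d'+1$ where $d'\in[l,2l]$ is the old gap to the right of the former buffer group (the ``$+1$'' comes from that group itself turning full, not from the newly inserted buffer group), and when $d'=2l$ the algorithm's split yields gaps of $l$ and $l+1$, both in range; your discussion of that side is otherwise fine.
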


In order to count the number of bins in our solution, we need the following technical lemma.
Remember that $\delta = \nicefrac{\varepsilon}{15}$ denotes the maximum size of a minor item.

\begin{lemma}
\label{lemma:fitting}
A set of full bins which consists of at least $z_j\cdot\workloadSmall$ bins of type $j$ for each bin type $j$ contains a workload of at least $\workloadSmall$ of minor items,
i.e. $\sum_{j=1}^k z_j\workloadSmall \cdot (w_j - \delta) \geq \workloadSmall$.
\end{lemma}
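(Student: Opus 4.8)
The plan is to reduce the claimed inequality $\sum_{j=1}^k z_j \workloadSmall (w_j - \delta) \ge \workloadSmall$ to the single scalar inequality $\sum_{j=1}^k z_j (w_j - \delta) \ge 1$, since the common factor $\workloadSmall > 0$ divides out. So the real content is a statement purely about the parameters: the per-group fractions $z_j$, the filling heights $w_j$, and the minor-item size bound $\delta = \nicefrac{\varepsilon}{15}$. I would first handle the ``$\delta$'' part separately: write $\sum_j z_j(w_j - \delta) = \sum_j z_j w_j - \delta \sum_j z_j = \sum_j z_j w_j - \delta z_k'$, where $z_k' = \sum_j z_j = (1+\nicefrac{\varepsilon}{4})\alpha$ is a fixed constant (roughly $1.48$). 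Thus it suffices to show $\sum_j z_j w_j \ge 1 + \delta z_k'$, and since $\delta z_k' = \Theta(\varepsilon)$ is small, the heart of the matter is showing $\sum_j z_j w_j \ge 1$ with enough slack to absorb the $\Theta(\varepsilon)$ correction.

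For the main inequality I would exploit the explicit definitions. We have $z_j' = \nicefrac{y_k}{(y_j(1-y_k))}$, hence $z_j = z_j' - z_{j-1}' = \frac{y_k}{1-y_k}\left(\frac{1}{y_j} - \frac{1}{y_{j-1}}\right)$ for $j \ge 2$, and $z_1 = z_1' = \frac{y_k}{y_1(1-y_k)}$. Also $w_j = y_j$ for $j \ge 2$ while $w_1 = 1 = y_1 + \nicefrac{1}{2} = 1$ (recall $y_1 = \nicefrac{1}{2}$). Substituting, for $j \ge 2$ the term $z_j w_j = \frac{y_k}{1-y_k}\left(1 - \frac{y_j}{y_{j-1}}\right)$. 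Since the $y_j$ form a geometric progression $y_j = \frac12 (2y_k)^{(j-1)/(k-1)}$, the ratio $y_j / y_{j-1} = (2y_k)^{1/(k-1)} =: \rho$ is constant, so each of these $k-1$ terms equals $\frac{y_k}{1-y_k}(1-\rho)$. Then $\sum_{j=2}^k z_j w_j = (k-1)\frac{y_k}{1-y_k}(1-\rho)$, and adding $z_1 w_1 = \frac{y_k}{y_1(1-y_k)} \cdot 1 = \frac{2y_k}{1-y_k}$ gives a closed form for the whole sum. I would then show this closed form is $\ge 1 + \delta z_k'$; the cleanest route is probably to bound $(k-1)(1-\rho)$ from below using $\rho = \exp(\tfrac{1}{k-1}\ln(2y_k))$ and the inequality $1 - e^{-x} \ge x - \tfrac{x^2}{2}$ (here $x = -\tfrac{1}{k-1}\ln(2y_k) > 0$ since $2y_k < 1$), so that $(k-1)(1-\rho) \ge -\ln(2y_k) - \tfrac{(\ln 2y_k)^2}{2(k-1)}$. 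With $k = \lceil 3/\varepsilon\rceil + 1$, the error term is $O(\varepsilon)$, and the main term gives $\sum_j z_j w_j \ge \frac{y_k}{1-y_k}\left(-\ln(2y_k) + 2\right) - O(\varepsilon)$. Finally I would check that $\frac{y_k}{1-y_k}(2 - \ln(2y_k)) \ge 1$: writing $t := \nicefrac{1}{z_k'} = 1 - y_k$ so that $y_k/(1-y_k) = (1-t)/t$ and $2y_k = 2(1-t)$, this becomes an inequality in the single real variable $t = \nicefrac{1}{z_k'} \approx 0.677$, and it is exactly here that the choice $z_k' = (1+\nicefrac{\varepsilon}{4})\alpha$ with $\alpha = 1 - \nicefrac{1}{(W_{-1}(-2/e^3)+1)}$ pays off — at $\varepsilon = 0$ the Lambert-$W$ identity should make this hold with equality, and the extra $\nicefrac{\varepsilon}{4}$ slack in $z_k'$ provides the $\Theta(\varepsilon)$ room needed to dominate both the $\delta z_k'$ term and the $O(\varepsilon)$ error from the geometric-series estimate.

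The main obstacle I anticipate is the last step: verifying the transcendental inequality and confirming that the $\Theta(\varepsilon)$ slack genuinely covers all the lower-order losses (the $-\tfrac{(\ln 2y_k)^2}{2(k-1)}$ term, the $\delta z_k' = \nicefrac{\varepsilon}{15}\cdot z_k'$ term, the effect of $y_k$ itself depending on $\varepsilon$ through $z_k'$, and any rounding). The cleanest way to organize this is to define $f(z) := \frac{z-1}{z}\left(2 - \ln\!\frac{2(z-1)}{z}\right) - 1$, show $f(\alpha) = 0$ using the defining equation of $\alpha$ (i.e. that $\alpha$ is characterized by $W_{-1}(-2/e^3)+1 = \nicefrac{1}{1-\alpha}$, equivalently $2(1-\alpha)/\alpha \cdot e^{(2-?)} = \dots$ — the exact algebraic rearrangement is the one genuinely fiddly computation), show $f$ is increasing near $z = \alpha$, and then pick up the linear-in-$\varepsilon$ gain from $z_k' = \alpha + \nicefrac{\varepsilon\alpha}{4}$. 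I would relegate the routine estimates (bounding $\rho$, the $1-e^{-x}$ inequality, collecting $\varepsilon$-terms) to direct computation and only spell out carefully the identification $f(\alpha)=0$ and the sign of $f'$, since those are where an error would actually change the result.
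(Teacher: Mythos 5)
Your proposal is correct and takes essentially the same route as the paper: divide out $\workloadSmall$, substitute the geometric definitions of $y_j$ and $z_j'$, use an exponential lower bound on $(k-1)(1-\rho)$, and then rely on the Lambert-$W$ identity that makes the ``main term'' equal $1$ at $z=\alpha$, with the $\nicefrac{\varepsilon}{4}$ slack in $z_k'$ absorbing both $\delta z_k'$ and the $O(1/(k-1))$ error (the paper's version of your $1-e^{-x}\geq x-\nicefrac{x^2}{2}$ bound is expanding the exponential series and showing the order-$\geq 3$ tail is nonnegative, which is the same content). One small slip to fix when you write it out: since $\frac{y_k}{1-y_k}=z_k'-1$ (not $y_k$), the function should be $f(z)=(z-1)\bigl(2-\ln\tfrac{2(z-1)}{z}\bigr)-1$, not $\tfrac{z-1}{z}\bigl(2-\ln\tfrac{2(z-1)}{z}\bigr)-1$; with that correction $f(\alpha)=0$ is exactly the paper's identity $-2\alpha+3+(\alpha-1)\ln\bigl(2\tfrac{\alpha-1}{\alpha}\bigr)=0$.
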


\begin{lemma}
\label{lemma:smallmain}
For all $1\leq j\leq k$, the total number of bins of any type $i\leq j$ is at most $(z_j'+\frac{3}{4}\varepsilon\alpha)\workloadSmall$.
\end{lemma}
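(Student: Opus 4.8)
The plan is to bound the number of bins of type $\leq j$ by counting how many \emph{full groups} the algorithm can have, and then translating that into a bin count via Lemma~\ref{lemma:fitting} and the group composition fixed in the parameter section. First I would observe that, by Lemma~\ref{lemma:ordering_feasibility} and the sorted order, every bin preceding a \emph{full} bin of type $\leq j$ is itself full, and in a full group each bin's load is within $\delta$ of its target height $w_i$; more precisely, since a full bin $B$ of type $i$ satisfies $\sum_{e\in B}e + \min_{d\in B_{<i}}d > w_i$ and every minor item is at most $\delta$, we get $\sum_{e\in B}e > w_i - \delta$. Hence, if the algorithm currently has $g$ full groups, then by the fixed composition each full group contributes at least (number of type-$i$ bins per group) many bins of each type $i$, and the total minor workload contained in those $g$ full groups is at least $g\cdot\sum_{i=1}^k (\text{\#type-}i\text{ bins per group})\cdot(w_i-\delta)$. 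Using the rounded counts $\lceil z_i'/z_k'\cdot l\rceil$ and the identity $\sum_i z_i(w_i-\delta)\geq 1$ from Lemma~\ref{lemma:fitting} (after dividing through by $\workloadSmall$), this is at least roughly $g\cdot l/z_k'$, so $g \leq z_k'\workloadSmall/l$ up to lower-order rounding terms.

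Next I would count bins by type. The bins of type $\leq j$ live only inside full groups and buffer groups. Each full group contributes exactly $\lceil z_j'/z_k'\cdot l\rceil$ bins of type $\leq j$; with $g$ full groups this is $g\cdot\lceil z_j'/z_k'\cdot l\rceil \leq g\cdot(z_j'/z_k'\cdot l + 1)$. Plugging in the bound $g\leq z_k'\workloadSmall/l + O(1)$ gives a main term of $z_j'\workloadSmall$ plus error terms of order $z_j'\workloadSmall/l$ (from the rounding in the per-group count) and of order $g$ (from the same rounding, which is $O(\workloadSmall/l)$ since $z_k'=O(1)$). For the buffer groups: by Lemma~\ref{lemma:distance} there are between $l$ and $2l$ full groups between consecutive buffer groups, so the number of buffer groups is at most $g/l + 1 \leq z_k'\workloadSmall/l^2 + O(1)$, and each buffer group contributes at most $l$ bins total, hence at most $l$ bins of type $\leq j$; this adds at most $z_k'\workloadSmall/l + O(l)$. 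Collecting everything, the count is $z_j'\workloadSmall + O(\workloadSmall/l) + O(l)$, and since $l=\lceil 4z_k'/(\varepsilon\alpha)\rceil$, the $O(\workloadSmall/l)$ term is at most $\tfrac{1}{4}\varepsilon\alpha\workloadSmall$ (with a bit of room to spare), while the additive $O(l)$ term is absorbed into the asymptotic ratio. I would carefully track the constants so the total error is $\leq \tfrac{3}{4}\varepsilon\alpha\workloadSmall$, matching the statement.

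The main obstacle I anticipate is the bookkeeping of the three separate sources of ceiling-induced slack — the per-group type count $\lceil z_j'/z_k'\cdot l\rceil$, the bound on $g$ coming from dividing by the workload-per-full-group, and the $+1$ in the buffer-group count — and making sure their sum really fits inside the $\tfrac{3}{4}\varepsilon\alpha$ budget rather than merely $O(\varepsilon)$. In particular one must be a little careful that Lemma~\ref{lemma:fitting} is applied with the \emph{actual} rounded per-group counts (which are each at least $z_i'/z_k'\cdot l$, but the per-type differences $\lceil z_i'/z_k'\cdot l\rceil - \lceil z_{i-1}'/z_k'\cdot l\rceil$ can be as small as zero), so the cleanest route is to work with the cumulative counts $\lceil z_i'/z_k'\cdot l\rceil$ throughout and only take differences at the very end. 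A secondary subtlety is that the claimed bound must hold at \emph{every} time step, not just in some steady state; but since the invariants of Lemmas~\ref{lemma:ordering_feasibility} and~\ref{lemma:distance} hold throughout the execution, the static counting argument above applies verbatim at each step, so no separate inductive argument over time is needed.
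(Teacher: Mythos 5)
Your plan follows essentially the same route as the paper: use Lemma~\ref{lemma:fitting} to show that each full group holds at least $l/z_k'$ of minor workload, bound the number of full groups $g$ by $z_k'\workloadSmall/l$, use Lemma~\ref{lemma:distance} to bound the number of buffer groups by roughly $g/l$, and multiply by the per-group count $\lceil z_j'/z_k'\cdot l\rceil$. You also correctly identify the Abel-summation-style subtlety that the per-type counts must be handled via the cumulative ceilings $\lceil z_i'/z_k'\cdot l\rceil$ (whose differences can be zero) so that Lemma~\ref{lemma:fitting} still yields $\geq l/z_k'$ per full group.

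The one place your accounting diverges from the paper's is the final collection of error terms. You explicitly end with $z_j'\workloadSmall + O(\workloadSmall/l) + O(l)$ and argue the additive $O(l)$ is ``absorbed into the asymptotic ratio.'' The paper's bound is stated with no additive term, and its proof avoids one by doing the accounting per unit (one buffer group together with the $\geq l$ full groups to its right): since each such unit hosts workload $\geq l^2/z_k'$ and contains $\leq (l+1)\cdot\lceil z_j'/z_k'\cdot l\rceil$ bins of type $\leq j$, the ratio of bins to workload is at most $\tfrac{l+1}{l}\cdot\tfrac{z_k'}{l}\cdot\lceil z_j'/z_k'\cdot l\rceil$ uniformly over units, so summing workloads gives a purely multiplicative bound. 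Your version — bounding $g$ and $b$ separately and multiplying by per-group counts — inherits a $+l$ from the $+1$ in the buffer-group count, which technically proves a slightly weaker statement than the lemma as written. This is not fatal (the extra $O(1/\varepsilon)$ constant disappears in the asymptotic approximation ratio where the lemma is actually consumed), but if you want to match the lemma exactly you should switch to the per-unit ratio formulation, keeping the buffer group and its $\geq l$ following full groups together so that the $+1$ cancels against the guaranteed workload rather than accumulating as a separate additive term.
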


This bound is mainly used for the analysis in Section~\ref{sec:Combination}, however it also directly implies the approximation ratio
for instances where only minor items are present.

\begin{corollary}\label{lemma:small_approx}
	For a packing with only minor items, the algorithm achieves an approximation ratio of $z_k'+\frac{3}{4}\varepsilon\alpha=(1+\varepsilon)\alpha$.
\end{corollary}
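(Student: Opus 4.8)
The plan is to obtain the corollary as an immediate consequence of Lemma~\ref{lemma:smallmain} combined with the elementary volume lower bound on the optimum. Concretely, I would instantiate Lemma~\ref{lemma:smallmain} with $j = k$. Every bin the algorithm maintains carries a type in $\{1,\dots,k\}$ (including the bins lying inside buffer groups), so the ``bins of any type $i \le k$'' are exactly all of the algorithm's bins, and the lemma bounds their number by $\left(z_k' + \tfrac{3}{4}\varepsilon\alpha\right)\workloadSmall$ for a minor workload $\workloadSmall$. It then remains only to unfold the parameter choice $z_k' = \left(1+\tfrac{\varepsilon}{4}\right)\alpha$:
\begin{align*}
z_k' + \tfrac{3}{4}\varepsilon\alpha = \left(1+\tfrac{\varepsilon}{4}\right)\alpha + \tfrac{3}{4}\varepsilon\alpha = \alpha + \tfrac{\varepsilon}{4}\alpha + \tfrac{3\varepsilon}{4}\alpha = (1+\varepsilon)\alpha ,
\end{align*}
which is precisely the identity in the statement. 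Hence $\ALG(L_t) \le (1+\varepsilon)\alpha\cdot\workloadSmall$ on any instance $L_t$ consisting only of minor items.

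Second, I would lower-bound $\OPT$. When every item of $L_t$ is minor, $\workloadSmall = \sum_{a\in L_t} size(a)$ equals the total size of all items, and since each bin holds load at most $1$, any feasible packing (in particular an optimal one) uses at least $\lceil\workloadSmall\rceil \ge \workloadSmall$ bins, i.e.\ $\OPT(L_t) \ge \workloadSmall$. Chaining the two bounds gives $\ALG(L_t) \le (1+\varepsilon)\alpha\cdot\workloadSmall \le (1+\varepsilon)\alpha\cdot\OPT(L_t)$, so in particular the asymptotic approximation ratio on minor-only instances is at most $(1+\varepsilon)\alpha = z_k' + \tfrac{3}{4}\varepsilon\alpha$, as claimed.

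I do not expect a genuine obstacle here: all of the substance is in Lemma~\ref{lemma:smallmain}, and this corollary is a two-line derivation. The only points that deserve a word of care are (i) that the case $j=k$ of that lemma indeed accounts for \emph{all} used bins, which holds because bins in buffer groups are also assigned a type, and (ii) that the (possibly empty) buffer-group bins and the rounding used to build the bin groups are already subsumed in the $\tfrac{3}{4}\varepsilon\alpha$ slack of Lemma~\ref{lemma:smallmain}, so they need not be re-accounted for here.
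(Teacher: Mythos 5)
Your proposal is correct and follows the same route the paper intends: the corollary is stated as a direct consequence of Lemma~\ref{lemma:smallmain} instantiated at $j=k$, together with the volume bound $\OPT \ge \workloadSmall$ and the arithmetic identity $z_k' + \tfrac{3}{4}\varepsilon\alpha = (1+\tfrac{\varepsilon}{4})\alpha + \tfrac{3}{4}\varepsilon\alpha = (1+\varepsilon)\alpha$. The paper does not spell out a separate proof precisely because the derivation is this immediate, so your write-up matches it.
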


\begin{lemma}\label{lemma:small_shifting_moves}
	The number of shifting moves (regarding minor items) during an insertion or deletion of a minor item is bounded by $\mathcal{O}(\nicefrac{1}{\varepsilon^2})$.
\end{lemma}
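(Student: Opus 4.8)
The plan is to track, for a single insertion or deletion of a minor item, exactly how many items the recursive procedure moves between bins, and to show each phase contributes $\mathcal{O}(\nicefrac{1}{\varepsilon})$ moves while only $\mathcal{O}(1)$ phases occur. The key structural fact is Lemma~\ref{lemma:distance}: between two consecutive buffer groups there are at most $2l$ full groups, hence the recursion that ``carries'' an item leftward (insertion case~1 and~3) or pulls items rightward (deletion case~2 and~3b) can traverse at most $2l+1$ groups before it reaches a buffer group and terminates. Since each group has $l$ bins and each bin is touched at most a constant number of times per carry, this bounds the work of one such sweep by $\mathcal{O}(l^2)=\mathcal{O}(\nicefrac{1}{\varepsilon^2})$, because $l=\lceil 4/\varepsilon\rceil+1=\mathcal{O}(\nicefrac{1}{\varepsilon})$.

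First I would handle the insertion: the item enters some bin $B_i$, and in the worst case (cases~1 and~3a) the largest item of $B_i$ is pushed into $B_{i-1}$, which may push its largest item further left, and so on. I would argue that this chain of shifts cannot pass a bin that is not full, and by Lemma~\ref{lemma:distance} the nearest non-full bin to the left is within the next buffer group, so within at most $2l$ full groups plus part of a buffer group, i.e. at most $(2l+1)\cdot l$ bins, each receiving one new item and losing one item — so $\mathcal{O}(l^2)$ shifting moves. In case~3b a new buffer group is created and possibly a second one inserted to the right to restore the $[l,2l]$ invariant; creating a buffer group means opening $l$ empty bins, which is $\mathcal{O}(l)=\mathcal{O}(\nicefrac{1}{\varepsilon})$ and, crucially, involves no shifting moves of existing items (they are new empty bins). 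One must also check that inserting the right-hand buffer group does not itself trigger a cascade: it is placed among already-full groups and simply splits them, so no items move. Hence the total for an insertion is $\mathcal{O}(l^2)=\mathcal{O}(\nicefrac{1}{\varepsilon^2})$.

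Next I would treat the deletion symmetrically: removing an item from $B_i$ may make it non-full, triggering (case~2) a pull of $\min_{e\in B_{i-1}}e$ into $B_i$, which recurses leftward; again this chain stops at the first bin that is a buffer bin (case~3a) or, in case~3b, stops once $B_{i-1}$ becomes empty. Emptying the bins of a buffer group to the left and removing that (now all-empty) group involves no shifting moves; the subsequent re-pull of items from $B_{i-1}$ after the group's removal, and the possible insertion of a fresh buffer group to re-establish the $[l,2l]$ invariant, together move $\mathcal{O}(l^2)$ items by the same counting as above. So a deletion is also $\mathcal{O}(\nicefrac{1}{\varepsilon^2})$.

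The main obstacle I anticipate is making rigorous the claim that a left-shifting (or right-pulling) recursion \emph{terminates within one buffer group's worth of bins}: one must verify that the recursion, upon reaching a bin that is not full (which exists within the next buffer group by Lemma~\ref{lemma:distance}), actually hits the terminating branch — for insertion this is the ``otherwise terminate'' clause once $\sum_{e\in B_i}e\le w_j$, and one must check that shifting one minor item of size $\le\delta=\nicefrac{\varepsilon}{15}$ into a non-full bin cannot simultaneously overfill it, which follows because a non-full bin of type $j$ has slack more than $\min_{d\in B_{<i}}d$, and the incoming item is exactly such a $d$. Combined with Lemma~\ref{lemma:ordering_feasibility} guaranteeing the sorted order is preserved, this closes the argument, and summing the $\mathcal{O}(1)$ phases gives the claimed $\mathcal{O}(\nicefrac{1}{\varepsilon^2})$ bound.
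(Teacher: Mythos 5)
Your proof matches the paper's own argument: both bound the recursion by observing (via Lemma~\ref{lemma:distance}) that the chain of shifts traverses at most $2l$ full groups plus a buffer group, hence at most $(2l+1)\cdot l = \mathcal{O}(l^2)$ bins, and then use $l = \mathcal{O}(\nicefrac{1}{\varepsilon})$. Your additional verification that the recursion actually terminates on reaching a non-full bin (because the incoming item is at most $\min_{d\in B_{<i}}d$, so it cannot push a non-full bin above its threshold) and that newly created buffer groups contribute no shifting moves are correct details that the paper leaves implicit, but the approach and counting are the same.
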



\section{Dealing with Major Items}
\label{sec:large}

For major items,
i.e. items with a size larger than $\delta=\nicefrac{\varepsilon}{15}$, we use an algorithm called \emph{Myopic Packing} (MP) by Ivkovic~\cite{Ivkovic1996} which is a simplified version of the \emph{MMP} algorithm by Ivkovic and Lloyd~\cite{DBLP:journals/siamcomp/IvkovicL98}.
The algorithm is essentially a fully dynamic variant of Johnson's First Fit Grouping Algorithm~\cite{DBLP:conf/focs/Johnson72,Johnson1973}.

We divide the major items in four sub-groups, depending on their size: A $\itemB$ (big) item has a size in $\left(\frac{1}{2}, 1\right]$, an $\itemL$ (large) item in  $\left(\frac{1}{3}, \frac{1}{2}\right]$, an $\itemS$ (small) item in $\left(\frac{1}{4}, \frac{1}{3}\right]$ and an $\itemO$ (other) item in  $\left(\frac{\varepsilon}{15}, \frac{1}{4}\right]$.
Ignoring additional $\itemO$ items for now, the following bin types of interest can occur:
$\binBL$, $\binBS$, $\binB$, $\binLLS$, $\binLL$, $\binLSS$ and $\binSSS$.
The name of the bin type represents the items of type $\itemB$, $\itemL$ and $\itemS$ contained in that bin.
The additional bin types $\binLS, \binL, \binSS, \binS$ can only occur at most two times in a packing in total, so they induce an additive constant of at most $2$ and thus can be ignored for the analysis.
Each different type $\binType\in\binSet:=\{\binBL,\binBS, \binB, \binLLS, \binLL, \binLSS, \binSSS\}$ is given a priority
such that we have a total ordering $<$ on $\binSet$ which is
$\binBL>\binBS> \binB> \binLLS> \binLL> \binLSS >\binSSS$.
Out of the listed bin types above, the MP algorithm utilizes all but bins of type $\binLSS$.

The algorithm works in a myopic manner:
If item $a$ has to be inserted into the packing, the algorithm disregards all items with a type lower (w.r.t.\ the range of size) than the type of $a$ during the insertion, i.e., it acts as if those smaller items would not exist.
It now inserts the item in the first fitting bin regarding the given priority of bin types.
All items with a lower type in this bin will be removed to an auxiliary storage and afterwards inserted in a recursive manner.
For a deletion, the item is removed and all other items in the same bin are moved to the auxiliary storage and inserted again with the same procedure\footnote{A detailed description of the MP algorithm can be found in Appendix~\ref{appendix:mp} and in~\cite{Ivkovic1996}.}.


\paragraph*{Properties of the Algorithm}
\label{sec:JohnsonClaims}
For our analysis of the packing,
we mainly use the thoroughness property which is ensured by the algorithm and proven in~\cite{Ivkovic1996}:

\begin{lemma}[\cite{Ivkovic1996}]\label{lemma:thoroughness}
A bin of type $\binType\in\binSet$ is thorough if there do not exist two bins $B_1$ and $B_2$
with lower types $\binType_1,\binType_2 \in \binSet$, $\binType_1,\binType_2 < \binType$ such that items from $B_1$ and $B_2$
can be used to form a bin of type $\binType$.
In the solution of the MP algorithm, bins of type $\binBL$, $\binBS$ and $\binLLS$
are thorough.
\end{lemma}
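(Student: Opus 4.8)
The plan is to prove all three statements at once by contradiction, using the single governing rule of \textsc{MP}: whenever an item $a$ is (re)inserted it is placed in the first bin into which it fits when the bins are examined in decreasing order of their type-priority, with every item of a strictly smaller size class than $a$ temporarily ignored — both for the fitting test and for reading off a bin's type — after which the smaller-class items displaced from the chosen bin are evicted and re-inserted recursively. The property I want to extract is a \emph{local optimality} of the placement: at the moment $a$ last entered its current bin, there was no bin $B$ for which inserting $a$ into $B$ (ignoring $B$'s smaller-class items) would have produced a bin of type strictly higher in the priority order $\binBL>\binBS>\binB>\binLLS>\binLL>\binLSS>\binSSS$ than the type $a$'s bin carries now. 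I would begin by collecting the elementary size inequalities that constrain which items may coexist — a $\itemB$-item ($>\nicefrac{1}{2}$) tolerates at most one further item of size $>\nicefrac{1}{4}$, so its bin has $\itemB$/$\itemL$/$\itemS$-content $\binB$, $\binBL$ or $\binBS$; an $\itemL$-item never sits with a $\itemB$-item and a third major item; two $\itemL$-items leave room for an $\itemS$-item only when their sizes are small; and so on. These inequalities pin down, for each target type $\binType\in\{\binBL,\binBS,\binLLS\}$, the short list of lower-priority types in $\binSet$ that could possibly furnish the $\itemB$/$\itemL$/$\itemS$-items of a hypothetical merged $\binType$-bin; e.g. a would-be $\binBL$-merge must draw its $\itemB$-item from a $\binBS$- or $\binB$-bin and its $\itemL$-item from a $\binLLS$-, $\binLL$- or $\binLSS$-bin, and a would-be $\binLLS$-merge must draw its $\itemS$-item from a $\binLSS$- or $\binSSS$-bin.

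For a fixed hypothetical pair of bins $B_1,B_2$ of types $<\binType$ together with a sub-multiset of their items forming a $\binType$-bin, I would single out the \emph{responsible} event: among the major items of that sub-multiset, let $a^\star$ be the one that last entered its present bin, and inspect the step at which this happened. At that instant every other major item of the sub-multiset already occupies its present bin, so the visible portion of the intended $\binType$-bin — ignoring the $\itemO$-items, which are transparent to the insertion of any larger item and only ever help feasibility — is already assembled inside one of $B_1,B_2$; consequently \textsc{MP} could have placed $a^\star$ so as to complete a bin of type $\binType$ (or of still higher priority), contradicting the fact that $a^\star$ currently lies in a bin of a type strictly below $\binType$ and has not been moved since. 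One then runs this through the few contributing-type combinations listed above for $\binBL$, $\binBS$ and $\binLLS$, treating separately the trivial situation in which only $\itemO$-items were (re)inserted after the full major skeleton was in place — there the argument falls back onto the latest-placed major item. This reproduces the statement proven by Ivkovi\'c~\cite{Ivkovic1996}.

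The step I expect to be the genuine obstacle is making "the visible portion was already assembled" watertight, because \textsc{MP} re-shuffles very freely — every insertion and every deletion that touches a bin evicts and recursively re-inserts its remaining items — so one must identify precisely when each relevant item last entered its bin, verify that the sub-configuration exploited by the merge really persisted, unbroken by a later re-processing, between that step and the present, and rule out that some higher-priority bin being more attractive to First Fit lets the algorithm escape the contradiction; this is exactly where one leans on the priority order and on the fact that $\binType$ occupies a high slot in it in all three cases. Entangled with this is the transparency of the $\itemO$-items: they take up real space yet are invisible to larger insertions, so one must check both that they never obstruct the First-Fit placement yielding the contradiction and that a merged bin needing $\itemO$-items to realize exactly $\binType$ (or carrying slack) is genuinely feasible. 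A cleaner way to discharge this bookkeeping, which I would adopt if the direct argument becomes unwieldy, is to prove the three thoroughness properties as invariants maintained by every insertion and deletion, stepping through the cascade of re-insertions one micro-operation at a time so that the set of present items is unambiguous at each point; the case distinction is then longer but routine, and heaviest for $\binLLS$, where the two $\itemL$-items and the $\itemS$-item may be drawn from $\binLL$, $\binLSS$ and $\binSSS$ in several combinations.
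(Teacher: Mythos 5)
The paper does not prove this lemma: it is imported wholesale from Ivkovic's thesis~\cite{Ivkovic1996}, and neither the body of the paper nor the appendix supplies an argument (the appendix only reproduces a step-by-step description of the MP procedure). There is therefore no paper proof to line your attempt up against, and what follows is an assessment of your sketch on its own terms.

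Your ``responsible event'' skeleton is the right idea --- choose the major item of the hypothetical $\binType$-type merge that last entered its present bin, observe that the rest of the merged configuration was already in place, and invoke MP's greedy placement to reach a contradiction --- and with the supporting observations you add (pairs of $\itemL$ items enter simultaneously, a deletion triggers re-insertion of the orphaned bin's remaining items, $\itemO$ items are invisible to larger insertions) it closes the $\binBL$ and $\binBS$ cases. The genuine obstacle is the $\binLLS$ case when the latest-arriving item is the $\itemS$ item: your argument requires the (re)insertion of a lone $\itemS$ item to scan existing $\binLL$ bins before falling back to a $\binSSS$ bin. The paper's one-sentence description of MP in the main text (place $a$ in the first fitting bin when bins are examined in decreasing type-priority order, ignoring smaller classes) guarantees exactly this; but the paper's own step-by-step reproduction of MP in the appendix does not: it pairs auxiliary $\itemS$ items with $\itemB$ bins (step 2) or collects triples of $\itemS$ items into fresh $\binSSS$ bins (step 4), and never offers a lone $\itemS$ item to an existing $\binLL$ bin. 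Under that literal reading the invariant is false --- two $\itemL$ items just above $\nicefrac{1}{3}$ forming an $\binLL$ bin, followed by three $\itemS$ items just above $\nicefrac{1}{4}$ arriving one at a time, produces a $\binSSS$ bin whose items together with those of the $\binLL$ bin fit in a single bin. So before the bookkeeping you defer can even start, you must commit to and justify the first-fit-over-type-priority reading of MP (or reconcile the two descriptions); that is precisely the detail Ivkovic's thesis settles and the paper elides, and it is the one genuine gap in your sketch.
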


We denote by $\NBA{\binType}$ and $\NBO{\binType}$ (e.g. $\NBA{\binBL}$ and $\NBO{\binBL}$) the number of bins of type $\binType\in\binSet$ in $\ALG$ and $\OPT$, respectively.
The same notation is adapted for multiple types of bins (e.g. $\NBA{\binBL,\binLLS}=\NBA{\binBL}+\NBA{\binLLS}$).
Using this notation, we get:\\
$
\ALG \leq \NBA{\binBL} + \NBA{\binBS} + \NBA{\binB} + \NBA{\binLLS} + \NBA{\binLL} + \NBA{\binSSS} + 2,
$ and \\
$
\OPT \geq \NBO{\binBL} + \NBO{\binBS} + \NBO{\binB} + \NBO{\binLLS} + \NBO{\binLL} + \NBO{\binSSS} +\NBO{\binLSS}.
$

Note that the only reason these are not  equalities is that the number of bins of type $\binLS,\binL,\binSS$ or $\binS$ is between $0$ and $2$. 

We first argue that we may assume that no $\itemO$ items are part of the input:
Consider the case that there is a bin containing only $\itemO$ items in $\ALG$.
Then every bin in $\ALG$ except one is filled with items of a cumulative size of at least $\nicefrac{3}{4}$.
Together with Lemma~\ref{lemma:smallmain} we then directly get an approximation factor of $(1+\varepsilon)\alpha$
even if we would not combine our solutions for minor and major items at all.
Regarding the case that there is no such $\itemO$ bin,
since MP packs items in a myopic manner, it would produce the same solution if the $\itemO$ items were
not part of the input.
Hence we may compare the solution to an instance of the optimal solution, which does not need to consider any $\itemO$ items
in its instance.

In order to ease the analysis of the algorithm,
we introduce assumptions to the optimal solution and show
that these do not increase the value of the optimal solution.

\begin{lemma}
\label{lemma:optstructure}
Let $\ALG$ be the packing of the MP algorithm of a set of major items.
For an optimal solution $\OPT$ of a packing of the same items, the following properties can be assumed without increasing the number of used bins in $\OPT$:
\begin{compactenum}
  \item $\OPT$ does not pack a $\binBS$ bin containing a $\itemB$ item that is part of a $\binBL$ bin in $\ALG$.
	\item $\OPT$ does not pack a $\binB$ bin containing a $\itemB$ item that is part of a $\binBL$ bin in $\ALG$.
\end{compactenum}
\end{lemma}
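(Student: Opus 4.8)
The plan is to prove the slightly stronger statement that there is an optimal solution in which \emph{every} $\itemB$ item contained in a $\binBL$ bin of $\ALG$ is also contained in a $\binBL$ bin of $\OPT$. Since (by the preceding reduction) no $\itemO$ items occur, a bin of $\OPT$ holding a $\itemB$ item has type $\binBL$, $\binBS$ or $\binB$, so this statement implies both assertions. Call a $\itemB$ item \emph{marked} if it lies in a $\binBL$ bin of $\ALG$, and for a marked item $b$ let $\ell(b)$ be the unique $\itemL$ item sharing its bin in $\ALG$; then $b+\ell(b)\le 1$ and the map $b\mapsto\ell(b)$ is injective. Starting from an arbitrary optimal $\OPT$, I will repeatedly apply an exchange that strictly decreases the number of \emph{violations} --- marked items lying in a $\binBS$ or $\binB$ bin of $\OPT$ --- without ever increasing the number of bins or creating a new violation; once no violation remains, we are done.

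Fix a violation $b_0$ and build a chain $b_0\to b_1\to b_2\to\cdots$ by letting $b_{i+1}$ be the $\itemB$ item sharing the $\OPT$-bin of $\ell(b_i)$, continued as long as such an item exists and is marked. A $\itemB$ item (size $>1/2$) has at most one $\itemL$ neighbour in $\OPT$, so each marked item has at most one incoming and one outgoing arrow; since the $\OPT$-bin of the violation $b_0$ contains no $\itemL$ item, $b_0$ has no incoming arrow, so the chain is a simple path $b_0\to\cdots\to b_m$ (finite, as the vertex set of marked items is finite), and the $\OPT$-bin $Y$ of $\ell(b_m)$ contains either no $\itemB$ item or an unmarked one. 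Two structural facts will be used: for $1\le i\le m$ the bin of $b_i$ is the $\OPT$-bin of $\ell(b_{i-1})$, and because $b_i+\ell(b_{i-1})>5/6$ it contains nothing else, i.e.\ it is the $\binBL$ bin $\{b_i,\ell(b_{i-1})\}$; and the $\OPT$-bin of $b_0$ is $\{b_0\}$ or $\{b_0,s\}$ for some $\itemS$ item $s$. The bins of $b_0,\dots,b_m$ are pairwise distinct (distinct $\itemB$ items cannot share a bin) and all distinct from $Y$ (each of the former contains either no $\itemL$ item or the $\itemL$ item $\ell(b_{i-1})\ne\ell(b_m)$).

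Now perform the following rotation inside $\OPT$: for $i=0,\dots,m-1$ move $\ell(b_i)$ from the bin of $b_{i+1}$ into the bin of $b_i$; move $\ell(b_m)$ from $Y$ into the bin of $b_m$; and, if the bin of $b_0$ held an $\itemS$ item $s$, move $s$ into $Y$. Afterwards the bin of $b_i$ is $\{b_i,\ell(b_i)\}$ for every $0\le i\le m$, feasible since $b_i+\ell(b_i)\le 1$; and $Y$ loses $\ell(b_m)$ (size $>1/3$) and possibly gains $s$ (size $\le 1/3$), so its load stays at most $1$. No new bin is opened. Crucially, no $\itemB$ item is moved (only $\itemL$ and $\itemS$ items are), so $b_1,\dots,b_m$ remain in $\binBL$ bins, $b_0$ now lies in the $\binBL$ bin $\{b_0,\ell(b_0)\}$, and the only bin whose type could have degraded, namely $Y$, contains no marked $\itemB$ item. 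Hence no violation is created and the number of violations decreases by exactly one; iterating proves the claim.

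The point requiring care is the combinatorics of the chain: one must verify that the relation sending $b$ to the $\OPT$-neighbour of $\ell(b)$ makes the chain a finite simple path (so the rotation is well defined and the iteration terminates), and that the size bounds on $\itemB$, $\itemL$, $\itemS$ items force every bin touched by the rotation to be distinct and of the asserted form. Everything else is a direct feasibility check.
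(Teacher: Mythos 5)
Your proof is correct, and it is more careful than the paper's. The paper handles each property with a single local exchange: for a violating $\itemB$ item $b$ in a $\binBS$ bin of $\OPT$, swap the $\itemS$ item with $\ell(b)$; for a $\binB$ bin, simply move $\ell(b)$ in. The paper then observes these exchanges preserve feasibility and the bin count. What the paper does not address is that the exchange can itself produce a new violation: if $\ell(b)$ sat in a $\binBL$ bin of $\OPT$ together with another marked $\itemB$ item $b'$, then after the swap $b'$ is now in a $\binBS$ (resp.\ $\binB$) bin. So the paper's argument really needs to be iterated, and termination of that iteration is left implicit. Your chain-and-rotation argument makes this cascade explicit and resolves it in one shot, yielding a clean strict decrease in the number of violations per rotation; this is precisely the missing termination content. (An alternative, lighter-weight repair of the paper's proof would be to observe that each single swap strictly increases the number of marked $\itemB$ items that share an $\OPT$-bin with their own $\ALG$-partner $\ell(b)$ --- since $\ell$ is injective, the displaced $b'$ was not paired with $\ell(b')$ before and still is not after --- so the iteration terminates.) Your structural checks on the chain (at most one incoming and one outgoing arrow per marked item, $b_0$ has no incoming arrow, hence a finite simple path; the intermediate bins are exactly the two-item $\binBL$ bins $\{b_i,\ell(b_{i-1})\}$ by the $5/6$ load bound with no $\itemO$ items; $Y$ is distinct from all of them) are all sound, so the rotation is well defined and feasibility is preserved. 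The one place I'd suggest adding a sentence is to note explicitly that the only bins touched are those of $b_0,\dots,b_m$ and $Y$, and that all other bins, and hence all other violation statuses, are unchanged --- you say this in effect but it is worth stating, since it is the reason the violation count drops by exactly one rather than merely not increasing.
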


We use the thoroughness of the MP algorithm (cf.~Lemma~\ref{lemma:thoroughness}) to show the following four statements that compare the solutions of $\ALG$ and $\OPT$ with each other. 
They will later be used in the analysis of the combination of our two approaches.

\begin{lemma}
	\label{claim:boundB}
	$
	\NBA{\binBL}+\NBA{\binBS}+\NBA{\binB}=\NBO{\binBL}+\NBO{\binBS}+\NBO{\binB}
	$
\end{lemma}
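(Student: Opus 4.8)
The statement claims that $\ALG$ and $\OPT$ pack exactly the same number of bins that contain a $\itemB$ item, namely $\NBA{\binBL}+\NBA{\binBS}+\NBA{\binB}=\NBO{\binBL}+\NBO{\binBS}+\NBO{\binB}$. The plan is to argue that both sides simply count the number of $\itemB$ items in the instance. Every bin of type $\binBL$, $\binBS$ or $\binB$ contains exactly one $\itemB$ item (two $\itemB$ items cannot share a bin since each has size $>\nicefrac12$), and conversely every $\itemB$ item must lie in a bin whose type, by the naming convention, begins with $\itemB$ — that is, a bin of type $\binBL$, $\binBS$, or $\binB$ (the only other possibility, an additive-constant type like $\binB$-variants, is already folded into the types listed; recall the excerpt notes that $\binLS,\binL,\binSS,\binS$ occur at most twice in total and none of these contains a $\itemB$ item anyway). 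Hence $\NBA{\binBL}+\NBA{\binBS}+\NBA{\binB}$ equals the number of $\itemB$ items, and the same holds for $\OPT$. Since $\ALG$ and $\OPT$ pack the very same set of items (we have already reduced to the case with no $\itemO$ items, and the comparison is against an optimal packing of the identical major-item instance), the two counts coincide.

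Concretely, the key steps I would carry out are: (1) observe that $\itemB$ items have size in $(\nicefrac12,1]$, so at most one fits in a bin; (2) observe from the bin-type naming convention that a bin contains a $\itemB$ item iff its type name contains the letter $\itemB$, and the only such types among $\binSet$ (plus the negligible ones) are $\binBL$, $\binBS$, $\binB$; (3) conclude that for any feasible packing $P$ of this instance, the number of $\itemB$-containing bins in $P$ equals the total number of $\itemB$ items, call it $n_\itemB$; (4) apply this to both $P=\ALG$ and $P=\OPT$ to get both sides equal to $n_\itemB$, and therefore to each other.

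The main (really, only) obstacle is a bookkeeping one: making sure the claim is not tripped up by the auxiliary bin types $\binLS,\binL,\binSS,\binS$ or by an isolated bin of type $\binB$ holding a $\itemB$ item that happens not to be the "canonical" one — but since none of the auxiliary types contains a $\itemB$ item, and every $\itemB$ item forces its host bin into one of $\binBL,\binBS,\binB$ regardless of what else is in the bin, there is no slack to worry about. It is worth double-checking that the $+2$ additive term that appears in the upper bound on $\ALG$ (coming from $\binLS,\binL,\binSS,\binS$ bins) does not contribute here: it does not, precisely because those four types are $\itemB$-free, so they are simply absent from both sides of this particular equation. I would state this observation explicitly to keep the later analysis clean, and then the equality is immediate.
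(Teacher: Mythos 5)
Your proposal is correct and uses the same argument as the paper: since any two $\itemB$ items have total size exceeding $1$, each bin holds at most one, so both $\NBA{\binBL}+\NBA{\binBS}+\NBA{\binB}$ and $\NBO{\binBL}+\NBO{\binBS}+\NBO{\binB}$ count the total number of $\itemB$ items in the instance. Your extra remarks about the auxiliary types $\binLS,\binL,\binSS,\binS$ are sound but unnecessary elaboration of the same idea.
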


\begin{lemma}
	\label{claim:boundL}
	$
	\NBA{\binBL}+\NBO{\binLSS} \geq 2 \left(\NBA{\binLLS,\binLL} - \NBO{\binLLS, \binLL} \right)
	$
\end{lemma}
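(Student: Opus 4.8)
The plan is to compare the $\itemL$ items in $\ALG$ and $\OPT$ and use the thoroughness of the $\binLLS$ bins of $\ALG$ (Lemma~\ref{lemma:thoroughness}) to argue that $\OPT$ cannot do much better than $\ALG$ on the ``$\itemL$-heavy'' bins unless it exploits extra $\itemS$ items coming from $\itemB$-bins of $\ALG$. First I would set up bookkeeping: count the total number of $\itemL$ items, which is $2\NBA{\binLLS,\binLL} + (\text{contribution from }\binBL\text{ and }\binLSS\text{ bins of }\ALG)$ --- but since MP does not produce $\binLSS$ bins, every $\itemL$ item of $\ALG$ lives either in a $\binBL$ bin (one per bin) or in an $\binLLS$ or $\binLL$ bin (two per bin). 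Hence the number of $\itemL$ items equals $\NBA{\binBL} + 2\NBA{\binLLS,\binLL}$. On the $\OPT$ side, the same $\itemL$ items are distributed among $\binBL$, $\binLLS$, $\binLL$, $\binLSS$ bins (at most two per bin in the latter three, one per bin in $\binBL$), plus the $\mathcal{O}(1)$ negligible types; so the $\itemL$ count is at most $\NBO{\binBL} + 2\NBO{\binLLS,\binLL} + \NBO{\binLSS} + \mathcal{O}(1)$.

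Next I would invoke thoroughness of the $\binLLS$ bins of $\ALG$: if $\OPT$ has strictly more bins containing two $\itemL$ items than $\ALG$ does (i.e. $\NBO{\binLLS,\binLL} > \NBA{\binLLS,\binLL}$ would be the bad direction, but here we want a lower bound, so the relevant case is $\NBA{\binLLS,\binLL} > \NBO{\binLLS,\binLL}$), then $\ALG$ has ``used up'' more $\itemS$ items pairing with $\itemL$-pairs than $\OPT$ does. The $\itemL$-counting identity above already gives
$$
\NBA{\binBL} + 2\NBA{\binLLS,\binLL} \le \NBO{\binBL} + 2\NBO{\binLLS,\binLL} + \NBO{\binLSS},
$$
modulo the additive constant, which rearranges to exactly
$$
\NBA{\binBL} - \NBO{\binBL} + \NBO{\binLSS} \ge 2\bigl(\NBA{\binLLS,\binLL} - \NBO{\binLLS,\binLL}\bigr).
$$
The claimed inequality is slightly stronger, having $\NBA{\binBL}$ rather than $\NBA{\binBL} - \NBO{\binBL}$ on the left; to close this gap I would show $\NBO{\binBL} \le \NBA{\binBL}$, i.e. $\OPT$ has no more $\binBL$ bins than $\ALG$. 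This should follow from thoroughness of $\ALG$'s $\binBL$ bins together with Lemma~\ref{lemma:optstructure}: every $\itemB$ item that $\ALG$ places in a $\binBL$ bin is, by Lemma~\ref{lemma:optstructure}, also in a $\binBL$ bin in $\OPT$; and conversely a $\binBL$ bin of $\OPT$ pairs a $\itemB$ with an $\itemL$, and thoroughness of $\ALG$ forces $\ALG$ to have already committed that $\itemB$ (or another one) to a $\binBL$ bin, so $\OPT$ cannot have a surplus. Actually I suspect the cleanest route is to note $\NBA{\binBL} = \NBO{\binBL}$ outright or use Lemma~\ref{claim:boundB} in combination; I would check which of these gives $\NBO{\binBL} \le \NBA{\binBL}$ most directly.

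The main obstacle I anticipate is handling the additive $\mathcal{O}(1)$ terms correctly --- the negligible bin types $\binLS, \binL, \binSS, \binS$ and the ``at most two'' slack in the bin-count inequalities. Since the statement of Lemma~\ref{claim:boundL} has no additive constant, I would need to argue that these negligible types never actually cost anything in this particular inequality, presumably because an $\itemL$ item in a $\binL$ or $\binLS$ bin of $\OPT$ can be absorbed into the bound without loss, or because one re-does the $\itemL$-counting treating those bins honestly (an $\binL$ bin has one $\itemL$, an $\binLS$ bin has one $\itemL$, etc.) and observes the inequality still goes through exactly. The second potential subtlety is making the thoroughness argument airtight: thoroughness says $\ALG$'s $\binLLS$ bins cannot be formed from two lower-type bins, which I must translate into a statement bounding how many $\itemS$ items are ``available'' to $\OPT$ for forming $\binLLS/\binLL$ pairs beyond what $\ALG$ formed --- but in fact the $\itemL$-counting argument above sidesteps needing $\itemS$ items at all, so I expect thoroughness enters only through $\NBO{\binBL}\le\NBA{\binBL}$ and the no-$\binLSS$-in-$\ALG$ fact, making the proof shorter than it first appears.
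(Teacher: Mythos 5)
Your global $\itemL$-counting is a reasonable starting point, and the idea that thoroughness closes the gap is also right, but as written the proposal has two genuine errors that would derail it.

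First, the rearrangement has a sign flip. Counting $\itemL$ items (an $\binBL$ bin holds one, an $\binLLS$/$\binLL$ bin holds two, an $\binLSS$ bin holds one, and $\ALG$ produces no $\binLSS$ bins) gives, modulo the negligible types,
\[
\NBA{\binBL} + 2\NBA{\binLLS,\binLL} \;=\; \NBO{\binBL} + 2\NBO{\binLLS,\binLL} + \NBO{\binLSS},
\]
which rearranges to $\NBO{\binBL} - \NBA{\binBL} + \NBO{\binLSS} = 2\bigl(\NBA{\binLLS,\binLL} - \NBO{\binLLS,\binLL}\bigr)$. Note the roles of $\NBA{\binBL}$ and $\NBO{\binBL}$ are swapped relative to your display, and it is an \emph{equality}, not just an inequality. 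Consequently, to pass to the lemma you must show $\NBO{\binBL} - \NBA{\binBL} \le \NBA{\binBL}$, i.e.\ $\NBO{\binBL} \le 2\NBA{\binBL}$, not $\NBO{\binBL} \le \NBA{\binBL}$ as you propose.

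Second, the bound $\NBO{\binBL} \le \NBA{\binBL}$ is in fact \emph{false} in general. The argument you sketch from Lemma~\ref{lemma:optstructure} proves the opposite direction: since every $\itemB$ item in a $\binBL$ bin of $\ALG$ must (after the WLOG normalization) sit in a $\binBL$ bin of $\OPT$, you get $\NBO{\binBL} \ge \NBA{\binBL}$. The correct target, $\NBO{\binBL} \le 2\NBA{\binBL}$, does follow from $\binBL$-thoroughness: for every $\binBL$ bin $\{b,l\}$ in $\OPT$, at least one of $b$, $l$ lies in a $\binBL$ bin of $\ALG$ (otherwise both would lie in strictly lower-type bins of $\ALG$ from which a $\binBL$ bin could be formed, contradicting thoroughness); this gives a map from $\OPT$'s $\binBL$ bins to $\ALG$'s that is at most $2$-to-$1$. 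So the plan can be salvaged, but it requires this factor-$2$ bound, not the one you aimed for. The paper sidesteps the issue entirely by not doing global $\itemL$-counting: it tracks only the $2\bigl(\NBA{\binLLS,\binLL} - \NBO{\binLLS,\binLL}\bigr)$ items displaced out of $\binLL/\binLLS$ bins, and observes via $\binBL$-thoroughness that each such item that lands in a $\binBL$ bin of $\OPT$ must be paired there with a $\itemB$ item that $\ALG$ itself placed in a $\binBL$ bin, directly charging them to $\NBA{\binBL}$ without ever needing to compare $\NBO{\binBL}$ against $\NBA{\binBL}$.
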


\begin{lemma}
	\label{claim:boundS}
	$
	\NBA{\binBS} + \NBA{\binLL} + 2\NBO{\binLSS} \geq 3 \left(\NBA{\binSSS} - \NBO{\binSSS} \right)
	$
\end{lemma}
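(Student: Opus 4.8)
The plan is to argue about which $\itemS$ items are "responsible" for the gap $\NBA{\binSSS} - \NBO{\binSSS}$ and to charge them against $\itemS$ slots in $\OPT$ that are not of type $\binSSS$. Concretely, I would count the total number of $\itemS$ items. On the $\ALG$ side, $\itemS$ items appear only in bins of type $\binBS$, $\binLLS$, $\binLSS$, $\binSSS$ (and the $O(1)$ ignorable types); but MP does not produce $\binLSS$ bins (as noted after Lemma~\ref{lemma:thoroughness}), so $\itemS$ items in $\ALG$ live in $\binBS$ (one each), $\binLLS$ (one each), and $\binSSS$ (three each). On the $\OPT$ side, $\itemS$ items live in $\binBS$, $\binLLS$, $\binLSS$, $\binSSS$ bins (one, one, two, three items respectively), plus possibly the $O(1)$ ignorable bins.

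First I would use Lemma~\ref{lemma:thoroughness}: bins of type $\binBS$ and $\binLLS$ in $\ALG$ are thorough, which (together with the thoroughness of $\binBL$) is meant to force that $\ALG$ has "used up" enough $\itemB$ and $\itemL$ items to form $\binBS$ and $\binLLS$ bins, so that the $\itemS$ items left over — the ones $\ALG$ is forced to pack three-to-a-bin — cannot be matched with $\itemB$ or $\itemL$ partners in any packing, in particular not in $\OPT$. The key structural consequence I would extract is: every $\itemS$ item that $\ALG$ packs into a $\binSSS$ bin must, in $\OPT$, also be packed either into a $\binSSS$ bin or into a $\binLSS$ bin (it cannot sit alone, cannot pair with a $\itemB$, and cannot pair with only one $\itemL$ in $\OPT$, because thoroughness of $\ALG$ says there is no leftover $\itemB$ or $\itemL$ capacity to absorb it — more precisely, if $\OPT$ could do better with these items, thoroughness of $\ALG$ would be violated). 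This should be the heart of the argument and also the main obstacle: making precise the counting that converts thoroughness into a statement about where $\OPT$ must place the "$\binSSS$-forced" $\itemS$ items, while correctly accounting for the $\itemS$ items that $\ALG$ places in $\binBS$ and $\binLLS$ bins versus where $\OPT$ places them.

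Then the inequality falls out by bookkeeping. Let me set $x := \NBA{\binSSS} - \NBO{\binSSS}$; I want $3x \le \NBA{\binBS} + \NBA{\binLL} + 2\NBO{\binLSS}$. The $3x$ extra $\itemS$ items that $\ALG$ buries in $\binSSS$ bins (beyond what $\OPT$ does) must be housed by $\OPT$ in $\binLSS$ bins (at most $2$ per such bin, contributing $2\NBO{\binLSS}$) or "freed up" because $\OPT$ pairs $\itemS$ items with $\itemL$ or $\itemB$ partners that $\ALG$ has locked into $\binBS$/$\binLL$ bins. The term $\NBA{\binBS}$ accounts for $\itemS$ items whose $\itemB$-partner in $\ALG$ is in a $\binBS$ bin — in $\OPT$ this $\itemB$ could be the partner of a different $\itemS$ item (or the $\itemS$ could migrate), and similarly $\NBA{\binLL}$ accounts for $\itemL$ items that $\ALG$ pairs up two-to-a-bin which $\OPT$ could instead use as $\itemS$-partners. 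I would make this rigorous by a global count: write the identity "total number of $\itemS$ items in $\ALG$" $=$ "total number of $\itemS$ items in $\OPT$" (both equal the true number of $\itemS$ items, up to the additive $O(1)$ from ignorable bins), expand both sides via the bin-type multiplicities above, and rearrange. This yields
\begin{tightalign*}
\NBA{\binBS} + \NBA{\binLLS} + 3\NBA{\binSSS} = \NBO{\binBS} + \NBO{\binLLS} + 2\NBO{\binLSS} + 3\NBO{\binSSS},
\end{tightalign*}
so $3x = 3(\NBA{\binSSS} - \NBO{\binSSS}) = \NBO{\binBS} - \NBA{\binBS} + \NBO{\binLLS} - \NBA{\binLLS} + 2\NBO{\binLSS}$, and it remains to bound $\NBO{\binBS} - \NBA{\binBS} + \NBO{\binLLS} - \NBA{\binLLS}$ by $\NBA{\binBS} + \NBA{\binLL}$, i.e. $\NBO{\binBS} + \NBO{\binLLS} \le 2\NBA{\binBS} + \NBA{\binLLS} + \NBA{\binLL}$. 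This last step is where I expect to again invoke thoroughness of $\binBS$ and $\binLLS$ in $\ALG$ (and Lemma~\ref{lemma:optstructure} to control which $\itemB$ items $\OPT$'s $\binBS$ bins may use), together with Lemma~\ref{claim:boundB} and Lemma~\ref{claim:boundL} to relate the $\itemL$-bin counts; bounding $\NBO{\binLLS}$ in terms of $\NBA{\binLLS} + \NBA{\binLL}$ should mirror the argument of Lemma~\ref{claim:boundL}. The main obstacle, as said, is nailing the thoroughness-to-$\OPT$-placement implication; once that is in hand, everything else is the above linear rearrangement.
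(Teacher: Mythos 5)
Your proposal identifies the right ingredients (thoroughness, Lemma~\ref{lemma:optstructure}, and a charging of $\itemS$ items), but the ``key structural consequence'' you state in the middle paragraph is wrong and misleads the rest of the argument. You claim that every $\itemS$ item $\ALG$ packs in an $\binSSS$ bin must be in an $\binSSS$ or $\binLSS$ bin in $\OPT$ and ``cannot pair with a $\itemB$.'' That is not what thoroughness gives, and if it were true the lemma would read $2\NBO{\binLSS}\geq 3(\NBA{\binSSS}-\NBO{\binSSS})$ with no $\NBA{\binBS}$ or $\NBA{\binLL}$ terms. The paper's actual argument is a direct charging: of the $3(\NBA{\binSSS}-\NBO{\binSSS})$ $\itemS$ items that sit in $\ALG$'s $\binSSS$ bins but not in $\OPT$'s $\binSSS$ bins, those that $\OPT$ puts into a $\binBS$ bin must share the bin with a $\itemB$ item that is itself in a $\binBS$ bin of $\ALG$ (by $\binBS$-thoroughness it cannot be in a $\binB$ bin, and by Lemma~\ref{lemma:optstructure} not in a $\binBL$ bin), giving at most $\NBA{\binBS}$ of them; those that $\OPT$ puts into an $\binLLS$ bin are limited by the supply of $\itemL$ partners, which is bounded by the $\itemL$ items sitting in $\ALG$'s $\binLL$ bins, giving at most $\NBA{\binLL}$; and those in $\OPT$'s $\binLSS$ bins number at most $2\NBO{\binLSS}$. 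Your own lemma statement's left-hand side terms are exactly the budgets for these three destinations, which is a strong hint that ``cannot pair with $\itemB$'' must be wrong.

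Your fallback, the global $\itemS$-count identity
$\NBA{\binBS}+\NBA{\binLLS}+3\NBA{\binSSS}=\NBO{\binBS}+\NBO{\binLLS}+2\NBO{\binLSS}+3\NBO{\binSSS}$,
is correct (up to the additive $O(1)$ from ignorable bin types), but it only transforms the lemma into the equivalent inequality $\NBO{\binBS}+\NBO{\binLLS}\leq 2\NBA{\binBS}+\NBA{\binLLS}+\NBA{\binLL}$, which is just as hard: it cannot be obtained from Lemmas~\ref{claim:boundB} and~\ref{claim:boundL} alone (Lemma~\ref{claim:boundL} bounds $\NBO{\binLLS,\binLL}$ from below, not above), so you still need the same fresh thoroughness argument you flagged as ``the main obstacle.'' In short, the algebraic reformulation buys you nothing, and the substantive step -- tracing where the excess $\itemS$ items can legally land in $\OPT$ and bounding each destination -- is exactly what the paper supplies and what is missing from your proposal.
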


\begin{lemma}
	\label{claim:boundS2}
	$
	\NBA{\binBS} + \NBO{\binLLS} + 2\NBO{\binLSS}\geq 3 \left(\NBA{\binSSS} - \NBO{\binSSS} \right)
	$
\end{lemma}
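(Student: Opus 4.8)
The plan is to mimic the accounting underlying Lemma~\ref{claim:boundS}, but to replace the term $\NBA{\binLL}$ with $\NBO{\binLLS}$ by exploiting thoroughness of the $\binLLS$ bins in \ALG more carefully. The quantity we want to bound is the number of $\binSSS$ bins in \ALG in excess of those in \OPT; the only way \OPT can have fewer $\binSSS$ bins is by distributing the $\itemS$ items that \ALG groups three-to-a-bin into bins that also contain $\itemB$ or $\itemL$ items, namely $\binBS$, $\binLS$, $\binLSS$, $\binLLS$, $\binSS$, or $\binS$ bins (the $\binLS,\binS,\binSS$ contribution being an additive constant, hence ignorable). So the first step is to write down the ``$\itemS$-balance'': every $\itemS$ item that lives in a $\binSSS$ bin of \ALG but not of \OPT must appear in \OPT in one of the bins that \ALG does not make $\binSSS$, and each such \OPT bin absorbs at most one $\itemB$-companion worth, at most two $\itemL$-companions worth (in a $\binLSS$ bin it can hold two $\itemS$, in $\binLLS$ only one), and at most three $\itemS$ from \ALG. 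Counting the \OPT bins of each kind and how many \ALG-$\binSSS$ worth of $\itemS$-items each can consume gives an inequality of the shape $3(\NBA{\binSSS}-\NBO{\binSSS}) \le (\text{\itemB-companions}) + (\text{\itemL-companions}) + (\text{small slack})$.

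The second step is to bound the $\itemB$-companion and $\itemL$-companion terms. For the $\itemB$ side, every $\itemB$ item of \ALG that is paired with \itemS items in \OPT (i.e., sits in a \OPT $\binBS$ bin) must itself sit in \ALG in a bin of type $\binBL$, $\binBS$, or $\binB$; using Lemma~\ref{lemma:optstructure}(1), a \OPT $\binBS$ bin never contains a $\itemB$ item that is in a $\binBL$ bin of \ALG, so the relevant $\itemB$ items come from \ALG's $\binBS$ and $\binB$ bins, and in fact only $\binBS$ matters for the $\itemS$-count since a $\binB$ bin of \ALG provides no $\itemS$ items to move. This lets me bound the $\itemB$-companion term by $\NBA{\binBS}$ (each such bin contributes at most one $\itemB$, hence at most one \OPT $\binBS$ bin's worth of $\itemS$-absorption). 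For the $\itemL$ side I want to charge to $\NBO{\binLLS}$: the key point is thoroughness of the $\binLLS$ bins of \ALG (Lemma~\ref{lemma:thoroughness}), which says \ALG cannot build an $\binLLS$ bin out of items currently sitting in lower-type bins. I would argue that any $\itemL$ item which in \OPT is grouped with \itemS items in a way relevant here (an \OPT $\binLSS$ or $\binLLS$ bin) forces, via thoroughness and the $\itemL$-count bookkeeping, a corresponding $\binLLS$ bin in \OPT, so that the total $\itemL$-companion contribution to the $\itemS$-migration is at most $2\NBO{\binLSS} + \NBO{\binLLS}$ — the $\binLSS$ bins of \OPT each release up to $2$ units of $\itemS$-migration, and the leftover $\itemL$ capacity is accounted for by $\NBO{\binLLS}$ rather than by \ALG's $\binLL$ bins.

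Putting the two bounds together yields $3(\NBA{\binSSS}-\NBO{\binSSS}) \le \NBA{\binBS} + \NBO{\binLLS} + 2\NBO{\binLSS}$ up to the additive constant from $\binLS,\binL,\binSS,\binS$ bins, which is the claim. I expect the main obstacle to be the $\itemL$-companion step: making rigorous the substitution of $\NBO{\binLLS}$ for $\NBA{\binLL}$ requires a clean double-counting of $\itemL$ items across \ALG and \OPT that separates the $\itemL$ items whose \OPT-home is an $\binLSS$ bin from those whose \OPT-home is an $\binLLS$ bin, and then invoking thoroughness of \ALG's $\binLLS$ bins to rule out the bad configurations. The $\itemB$ side is comparatively routine given Lemma~\ref{lemma:optstructure}, and the handling of the $O(1)$-many $\binLS,\binL,\binSS,\binS$ bins is standard — they are simply absorbed into the inequality's slack (or, if one prefers exact statements, noted as contributing to the implicit additive constant already present in the $\ALG$/$\OPT$ bin-count estimates stated just before Lemma~\ref{lemma:optstructure}).
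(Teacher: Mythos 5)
Your overall accounting is the right one and matches the paper's proof: charge the three displaced $\itemS$ items from each $\binSSS$ bin of $\ALG$ (in excess of $\OPT$'s $\binSSS$ bins) to the $\OPT$ bins of type $\binBS$, $\binLLS$, and $\binLSS$ that must host them, getting at most $1$, $1$, and $2$ such items per bin respectively. However, you misdiagnose where the work is. The $\binLLS$ term is \emph{not} the hard step and needs no double-counting, no $\itemL$-bookkeeping across solutions, and no thoroughness of $\ALG$'s $\binLLS$ bins: every $\binLLS$ bin of $\OPT$ contains exactly one $\itemS$ item, so the number of displaced $\itemS$ items absorbed by $\OPT$'s $\binLLS$ bins is trivially at most $\NBO{\binLLS}$ — this is the whole point of Lemma~\ref{claim:boundS2} as a variant of Lemma~\ref{claim:boundS}, which instead bounds the $\itemL$ supply by $\NBA{\binLL}$. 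Conversely, you under-justify the $\binBS$ term: the reason a $\itemB$ item sitting in a $\binB$ bin of $\ALG$ cannot, in $\OPT$, be paired with an $\itemS$ item coming from an $\ALG$ $\binSSS$ bin is $\binBS$-thoroughness of $\ALG$ (Lemma~\ref{lemma:thoroughness}), not the remark that ``a $\binB$ bin of $\ALG$ provides no $\itemS$ items to move'' — that statement is about the wrong direction and does not rule anything out. With $\binBS$-thoroughness excluding $\ALG$'s $\binB$ bins and Lemma~\ref{lemma:optstructure}(1) excluding $\ALG$'s $\binBL$ bins, the relevant $\itemB$ items must come from $\ALG$'s $\binBS$ bins, which gives the $\NBA{\binBS}$ bound; combined with the trivial $\NBO{\binLLS}$ and $2\NBO{\binLSS}$ bounds the inequality follows.
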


We finally determine the number of shifting moves that occur during the insertion or deletion of a major item.
Note that while considering these, we need to also account for possible $\itemO$ items.

\begin{lemma}\label{lemma:large_shifting_moves}
	The number of shifting moves (regarding major items) is bounded by $\mathcal{O}(\nicefrac{1}{\varepsilon})$.
\end{lemma}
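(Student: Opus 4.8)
The plan is to bound the number of shifting moves by the number of major items that the MP procedure ever removes to its auxiliary storage and subsequently reinserts during one insertion or deletion, since any major item placed in a different bin between two consecutive steps must have passed through the auxiliary storage at least once. Two elementary observations drive the argument. First, every major item has size strictly larger than $\delta=\nicefrac{\varepsilon}{15}$, so a single bin contains fewer than $1/\delta = \mathcal{O}(\nicefrac{1}{\varepsilon})$ items; hence whenever a bin is emptied into the auxiliary storage, only $\mathcal{O}(\nicefrac{1}{\varepsilon})$ items are affected. Second, among the items of any bin, those of type $\itemB$, $\itemL$ or $\itemS$ number at most $3$, because each such item has size larger than $\nicefrac{1}{4}$.

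Next I would analyse the recursion of the myopic reinsertion. When an item of type $\binType$ is (re)inserted, it is placed into a single bin and only items of a strictly lower type in that bin are evicted; hence the recursion is monotone with respect to the four size classes $\itemB>\itemL>\itemS>\itemO$ and has depth at most $4$. At the levels corresponding to $\itemB$, $\itemL$ and $\itemS$, each (re)insertion evicts at most $3$ further items of type $\itemL$ or $\itemS$ (by the size bound above) together with at most $\mathcal{O}(\nicefrac{1}{\varepsilon})$ items of type $\itemO$; reinserting an $\itemL$ item can only produce evicted $\itemS$ (and $\itemO$) items, and reinserting an $\itemS$ item can only produce evicted $\itemO$ items. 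Consequently the total number of $\itemB$/$\itemL$/$\itemS$ items ever reinserted during one operation is bounded by an absolute constant $C$ — the product of the at-most-$3$ branching over the at-most-three relevant levels, plus the triggering item in the case of an insertion, plus the at most $3$ such items of the dismantled bin in the case of a deletion. Finally, reinserting an $\itemO$ item evicts nothing (there is no lower type), so it is moved at most once and triggers no further recursion.

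Putting these pieces together, the shifting moves split into moves of $\itemB$/$\itemL$/$\itemS$ items — at most $C=\mathcal{O}(1)$ of them — and moves of $\itemO$ items. Every moved $\itemO$ item was evicted either from the single bin that is dismantled at the start of a deletion, or from one of the at most $C$ bins that get re-formed by a $\itemB$/$\itemL$/$\itemS$ reinsertion; since each such bin holds $\mathcal{O}(\nicefrac{1}{\varepsilon})$ items, the number of $\itemO$-moves is $\mathcal{O}(C/\varepsilon)=\mathcal{O}(\nicefrac{1}{\varepsilon})$. Hence the total number of shifting moves caused by the insertion or deletion of a major item is $\mathcal{O}(\nicefrac{1}{\varepsilon})$, as claimed.

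I expect the main obstacle to be the bookkeeping in the second paragraph: one must argue carefully that the recursive reinsertion cannot blow up, i.e. that the evictions triggered at each level introduce only a constant number of new $\itemB$/$\itemL$/$\itemS$ items (so that the depth-$\leq 4$ recursion has only a constant number of "heavy" nodes), and that the myopic placement rule never evicts the same $\itemO$ item twice. This is essentially the constant-repacking property already established for MP in~\cite{Ivkovic1996}; the only new ingredient is tracking the $\itemO$ items, which are disregarded in the approximation analysis but are responsible for the $\nicefrac{1}{\varepsilon}$ factor here.
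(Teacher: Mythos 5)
Your proposal follows essentially the same route as the paper: the crux is that (i) the MP procedure touches only a constant number of bins per operation, and (ii) each touched bin holds at most $\nicefrac{15}{\varepsilon}=\mathcal{O}(\nicefrac{1}{\varepsilon})$ major items because every major item exceeds $\delta = \nicefrac{\varepsilon}{15}$, so the product is $\mathcal{O}(\nicefrac{1}{\varepsilon})$. The paper simply cites~\cite{Ivkovic1996} for (i) and then applies (ii), whereas you additionally sketch a re-derivation of (i) via a depth-at-most-4, bounded-branching recursion over the size classes $\itemB>\itemL>\itemS>\itemO$. That sketch is plausible but not fully rigorous as written: the MP procedure is not a simple tree recursion but a staged pass through an auxiliary storage (pairing $\itemB$ items, then $\itemL$/$\itemS$ items with existing $\itemB$ bins, then forming $\binLLS$/$\binLL$ and $\binSSS$ bins, then first-fitting $\itemO$ items), so the ``at-most-3 branching over three levels'' bound would need to be matched more carefully against those stages rather than asserted from the monotonicity of item types alone. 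Since you already acknowledge that this is exactly the constant-repacking property proved in~\cite{Ivkovic1996} and fall back to citing it, the argument is sound, and the observation that the $\nicefrac{1}{\varepsilon}$ factor comes entirely from the $\itemO$ items correctly identifies where the size threshold enters.
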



\section{Combining Major and Minor Items}\label{sec:Combination}

The two presented approaches for the packing of minor and major items treat these items independently.
These independent solutions now have to be combined into one to reach a good approximation guarantee.
Assume we obtain these solutions as two sets of bins, where the bins $\left(B_1^{\text{min}},\ldots,B_m^{\text{min}}\right)$ are the result of the algorithm for minor items and the bins $\left(B_1^{\text{maj}},\ldots,B_n^{\text{maj}}\right)$ are the result of the algorithm for major items.
Note that although we temporarily ignored items of a size in $\left(\nicefrac{\varepsilon}{15},\nicefrac{1}{4}\right]$ in the analysis for the major items, the algorithm that
combines the two solutions of course does not ignore the potential $\itemO$ items.

We first describe the structure of the packing we want to achieve and then show how to maintain that structure over time.
The goal is to create pairs of bins with one bin from each solution while not modifying too many pairs in each time step.
For ease of description, we still refer to two bins $B_i^{\text{min}}$ and $B_j^{\text{maj}}$ as two different bins even though their contents may be packed into the same bin.
In such a case, we say that $B_i^{\text{min}}$ is paired with $B_j^{\text{maj}}$.

We want to maintain a greedy-style combination of the two lists of bins, which can be described by the following combination process:
The list of bins with minor items is (partially) sorted by their type ($1,\ldots,k$), i.e. bins potentially filled with more minor items appear earlier (the ordering here is different compared to the ordering in Section~\ref{sec:small} when the minor items are actually packed).
The list of bins with major items is sorted by their filling height in decreasing order (regardless of their type).
The process iterates over the $k$ bin types in the solution of minor items starting with type 2 (since there is no reserved space in bins of type 1) in increasing order.
For each bin $B_i^{\text{min}}$ of type $j$, we iterate over the bins with major items starting with the bins that have the largest filling height.
We pair $B_i^{\text{min}}$ with a bin with major items $B_\ell^{maj}$ that has a filling height of at most $1-w_j$ and for which $\ell$ is minimal, i.e., the first bin with major items whose items fit into the reserved space of the respective (minor item) bin type.

Note that this process incorporates all bins of the minor solution that contain at least one item, including the ones that are part of a buffer group.
We do not use bins that contain no minor items at all (even if they are already present in the minor algorithm as part of a buffer group).
Such a greedy-style packing can be maintained while only modifying $\mathcal{O}(k)$ pairings per changed bin in either one of the two solutions.
A major reason for this is that for the bins with minor items, only their type is of interest.
The pairings of bins need to be changed if one of the following happens:

\noindent\textbf{A change in the solution of major items:}
For each insertion or deletion of a major item, the solution of major items is modified independently first.
The above described greedy process is then used to determine which bins with major items need to be matched with which types of bins of minor items.
The combination is then modified to fit the new solution by switching out the major bins where needed, starting with those which are paired with bins of minor items of type 2.

\begin{figure}[htb]
	\begin{minipage}[c]{0.5\textwidth}
		\includegraphics[scale=.33]{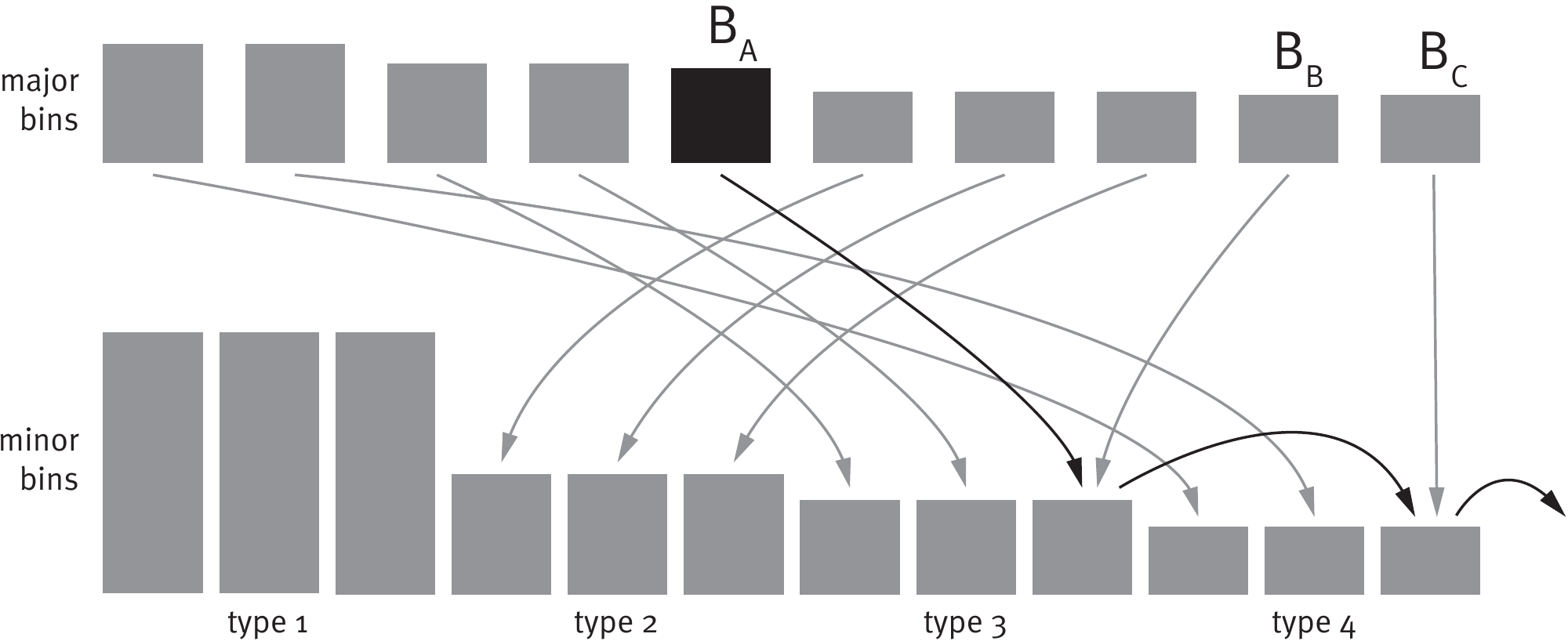}
	\end{minipage}\hfill
	\begin{minipage}[c]{0.5\textwidth}
		\caption{The gray bins represent current bins of the two solutions, the arrows indicate the current combination.
			The bin $B_A$ is now inserted into the solution for the major items.
			$B_A$ only fits into minor bins of type 3 or higher.
			The black arrows indicate the switching process. Bin $B_A$ displaces $B_B$ to a minor bin of type 4.
			Since there is no bin left for $B_C$, it is not combined with any minor bin after the changes.}
		\label{fig:algorithm_combination}
	\end{minipage}
\end{figure}

\noindent\textbf{A change in the solution of minor items:}
As for the major items, the solution of minor items is first modified independently upon insertion or deletion of a minor item.
The modification may add or remove at most one bin of minor items.
In this case, the above greedy approach is used to recalculate which bins with major items need to be matched with which types of bins of minor items.
The solution is modified accordingly, starting with the matching with bins with minor items of type 2.


\paragraph*{Analysis}
Due to the described greedy approach for changes and Lemma~\ref{lemma:small_shifting_moves} and~\ref{lemma:large_shifting_moves}, we can bound the total number of shifting moves.
\begin{lemma}\label{lemma:shifting_moves}
	The algorithm uses at most $\mathcal{O}(\nicefrac{1}{\varepsilon^2})$ shifting moves for each insertion or deletion.
\end{lemma}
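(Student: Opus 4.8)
The plan is to account separately for three sources of shifting moves per time step: (a) the moves performed internally by the minor-item algorithm of Section~\ref{sec:small}, (b) the moves performed internally by the MP algorithm on the major items of Section~\ref{sec:large}, and (c) the moves triggered by re-running the greedy combination. Since an insertion or deletion of an item touches only the sub-solution for its own size class, at most one of (a),(b) is nonzero in a given time step, while in either case we must add the cost (c). Throughout I use $k=\lceil\nicefrac{3}{\varepsilon}\rceil+1=\mathcal{O}(\nicefrac{1}{\varepsilon})$, and the observation that every bin (minor or major) holds at most $\nicefrac{15}{\varepsilon}=\mathcal{O}(\nicefrac{1}{\varepsilon})$ items: a major bin because every major item has size larger than $\nicefrac{\varepsilon}{15}$, a minor bin because its load is at most $1$ while each minor item has size at most $\nicefrac{\varepsilon}{15}$.

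For an insertion or deletion of a \emph{minor} item, part (a) contributes $\mathcal{O}(\nicefrac{1}{\varepsilon^2})$ moves by Lemma~\ref{lemma:small_shifting_moves}. The key point for part (c) is that all of this internal reshuffling leaves the \emph{type} of every bin unchanged (a bin's type is fixed by its position inside the bin-group structure), so from the point of view of the combination the only relevant change is that at most one previously empty bin becomes non-empty, or one bin becomes empty on deletion; newly inserted or removed empty buffer-group bins are irrelevant, since the combination only pairs bins that contain at least one minor item. As established when the combination was described, a single changed bin in either sub-solution forces at most $\mathcal{O}(k)$ pairings to change, and each such re-pairing can be realized by relocating the contents of one bin, i.e.\ $\mathcal{O}(\nicefrac{1}{\varepsilon})$ items. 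Hence part (c) contributes $\mathcal{O}(k)\cdot\mathcal{O}(\nicefrac{1}{\varepsilon})=\mathcal{O}(\nicefrac{1}{\varepsilon^2})$, and the total is $\mathcal{O}(\nicefrac{1}{\varepsilon^2})$.

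For an insertion or deletion of a \emph{major} item, part (b) contributes $\mathcal{O}(\nicefrac{1}{\varepsilon})$ moves by Lemma~\ref{lemma:large_shifting_moves}, and since MP modifies only a constant number of bins per time step, only $\mathcal{O}(1)$ bins of the major solution change their contents, hence their filling height. Each such changed major bin again forces $\mathcal{O}(k)$ pairings to change, each at cost $\mathcal{O}(\nicefrac{1}{\varepsilon})$ as above, so part (c) contributes $\mathcal{O}(1)\cdot\mathcal{O}(k)\cdot\mathcal{O}(\nicefrac{1}{\varepsilon})=\mathcal{O}(\nicefrac{1}{\varepsilon^2})$. Adding part (b) leaves the total at $\mathcal{O}(\nicefrac{1}{\varepsilon^2})$, which proves the lemma.

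The only step needing genuine care is the claim underlying part (c): that changing, inserting, or removing a single bin in either sub-solution cascades through only $\mathcal{O}(k)$ re-pairings in the greedy combination. This rests on the two structural features of the combination — minor bins enter only through their type, of which there are $k$, and the major bins are processed in a fixed filling-height order — but it requires a precise argument that such a change shifts the greedy assignment by at most one position within each of the $k$ type-blocks; I would prove that first, after which the counting above is immediate. A secondary, routine check is that the insertion and deletion procedures of Section~\ref{sec:small}, including buffer-group creation/removal and the shifting chains they trigger, change the set of non-empty minor bins by at most one bin.
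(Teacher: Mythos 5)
Your proposal is correct and follows essentially the same decomposition as the paper's own proof: internal moves in the minor sub-solution (Lemma~\ref{lemma:small_shifting_moves}), internal moves in the major sub-solution (Lemma~\ref{lemma:large_shifting_moves}), and $\mathcal{O}(k)$ re-pairings per changed bin with $\mathcal{O}(\nicefrac{1}{\varepsilon})$ items moved per re-pairing. Both arguments rely on the same two assertions you flag as needing care — that a single changed bin cascades through at most $\mathcal{O}(k)$ pairings, and that the minor procedure changes the set of non-empty minor bins by at most one — which the paper itself states but does not prove in more detail than you do.
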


\begin{lemma}
\label{lemma:combstructure}
Let $B^{\text{min}}$ be a bin with minor items of type $j$.
If $B^{\text{min}}$ is not combined with a bin of major items, all non-combined bins with
major items have a filling height of at least $1-w_j$.
\end{lemma}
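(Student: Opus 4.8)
The plan is to argue entirely about the greedy combination process, exploiting the invariant (established when the update procedures are analysed) that the combination the algorithm actually maintains is, at every time step, exactly the one produced by running that process from scratch on the current two lists of bins. Before doing anything else I would record the two structural facts that make the argument work. First, by the parameter choices one has $w_1 = 1 > w_2 > \dots > w_k$, so the reserved space $1-w_j$ equals $0$ for type $1$ and is strictly increasing over the types $2,\dots,k$ that the process actually iterates over; in particular the process considers minor bins in order of non-decreasing reserved space. Second, within one run of the process no pairing is ever dissolved: each bin with minor items is visited once, and once a bin with major items has been assigned to some minor bin it is no longer available to the minor bins considered later.

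The core of the argument is then short. If $B^{\text{min}}$ has type $j=1$ the claim is vacuous, since $1-w_1=0$ and every non-empty bin with major items has positive filling height (indeed type-$1$ bins are never combined, as they reserve no space). So I would fix $j\ge 2$ and look at the step of the greedy run at which $B^{\text{min}}$ is processed. By the selection rule, $B^{\text{min}}$ is left without a partner only if at that step there is no bin with major items that is still unpaired and has filling height at most $1-w_j$; hence at that step — and therefore, since $B^{\text{min}}$ takes nothing, also for the rest of the run — every bin with major items of filling height at most $1-w_j$ has already been paired to some earlier-processed minor bin. Because pairings are never undone, all of those bins are still combined when the run ends. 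Consequently every non-combined bin with major items has filling height strictly greater than $1-w_j$, which in particular gives the asserted bound $1-w_j$.

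The one point that genuinely needs care, and which I would spell out, is the reduction in the first paragraph: we are allowed to reason against a full from-scratch run of the greedy process even though, per update, only $\mathcal{O}(k)$ pairs are actually re-matched. This is exactly the content of the invariant maintained by the update rules, so I would simply cite that invariant rather than re-derive it here. Everything after that is a direct consequence of the selection rule together with the monotonicity of pairings within a run; no quantitative estimate and no appeal to the minor-item or major-item structure lemmas is needed, which is why I expect this lemma to be essentially bookkeeping once the invariant is in place.
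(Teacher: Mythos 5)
Your proposal is correct and follows the same approach as the paper's (very terse) proof: both rest on the greedy selection rule together with the observation that an unmatched major bin at the end of the from-scratch greedy run was also available when $B^{\text{min}}$ was processed, and so must have had filling height exceeding the reserved space $1-w_j$. The extra discussion of monotone pairings, the ordering of reserved spaces, and the $j=1$ corner case is all sound bookkeeping that the paper leaves implicit.
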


In the remainder of this paper, we show the approximation quality our algorithm achieves.

\begin{lemma}\label{lemma:solution_quality}
	Let $\ALG$ be the number of bins our algorithm uses for an arbitrary input sequence and $\OPT$ the number of bins used by an optimal solution. Then this yields
	$\ALG \leq (1+\varepsilon) \cdot \alpha \cdot \OPT.$
\end{lemma}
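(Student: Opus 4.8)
The plan is to combine the bound on minor-item bins (Lemma~\ref{lemma:smallmain}) with the structural facts about the MP packing of major items (Lemmas~\ref{claim:boundB}--\ref{claim:boundS2} and Lemma~\ref{lemma:combstructure}) to produce a case analysis on whether the combined packing manages to absorb ``most'' major-item bins into reserved spaces of minor bins, or not. First I would fix notation: let $\workloadSmall$ be the minor workload, let $n_1,\dots,n_k$ be the number of minor bins of each type (so $\sum_i n_i \le (z_k' + \tfrac34\varepsilon\alpha)\workloadSmall = (1+\varepsilon)\alpha\cdot z_k' \workloadSmall / z_k'$, i.e.\ Lemma~\ref{lemma:smallmain} with $j=k$), and let the major-item packing of MP contribute $\NBA{\binBL}+\NBA{\binBS}+\NBA{\binB}+\NBA{\binLLS}+\NBA{\binLL}+\NBA{\binSSS}+2$ bins. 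After the greedy combination, $\ALG$ equals the number of minor bins that got paired plus the number of unpaired minor bins plus the number of unpaired major bins, i.e.\ $\ALG \le (\text{\#minor bins}) + (\text{\#unpaired major bins}) + 2$.

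The key dichotomy, driven by Lemma~\ref{lemma:combstructure}: either (a) \emph{every} minor bin of type $\ge 2$ is combined with a major bin, or (b) some minor bin of type $j$ is not combined, in which case \emph{all} non-combined major bins have filling height $\ge 1-w_j$. In case (b) I would lower-bound $\OPT$ by a volume argument: the total volume of minor items plus major items is at least $\workloadSmall + (\text{\# non-combined major bins})\cdot(1-w_j) + (\text{volume in combined bins})$, and since combined bins are nearly full (major part $\ge$ something, minor part up to $w_j$) one gets $\OPT \ge $ (a constant close to $1/z_{j}'$ or similar) times the relevant bin count; the parameter identity $z_j' = y_k/(y_j(1-y_k))$ and the relation $1-w_j \approx 1-y_j$ are exactly what make the ratio come out to $z_k' = (1+\tfrac\varepsilon4)\alpha$ here, because the ``bad instance'' these parameters were optimized against (items of size $1-y_j+\varepsilon'$) is precisely the tight case. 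In case (a), the minor bins are not the bottleneck; instead $\ALG$ is dominated by the unpaired major bins, and I would invoke Lemmas~\ref{claim:boundB}--\ref{claim:boundS2} together with $\OPT \ge \NBO{\binBL}+\dots+\NBO{\binLSS}$ to show the major-item packing alone is within $\tfrac43 < \alpha$ of $\OPT$ (MP's ratio), then fold in the minor bins via Lemma~\ref{lemma:smallmain} and the fact that every combined bin carries a full major bin, so no loss there.

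The main obstacle I anticipate is case (b): making the volume/counting argument tight enough to reach $(1+\varepsilon)\alpha$ rather than some weaker constant. This requires carefully arguing that when a type-$j$ minor bin is uncombined, not only are the leftover major bins $\ge 1-w_j$ full, but also all the \emph{combined} pairs are space-efficient (the greedy prioritizes small reserved space with large major load), and that the number of minor bins of type $\le j$ is controlled by $z_j'\workloadSmall$ (Lemma~\ref{lemma:smallmain} at index $j$, not just $k$) so that the two contributions $n_{\le j}$ and the uncombined-major count can be charged against a common $\OPT$ lower bound of the form $\max(\workloadSmall,\ \text{major volume})/(\text{effective bin capacity})$. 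Balancing these — i.e.\ checking that for every $j$ the resulting bound is $\le (1+\varepsilon)\alpha\cdot\OPT$, with the worst $j$ giving exactly the Lambert-$W$ value — is the crux; the $O(\varepsilon)$ slack absorbs the roundings in $l$, $k$, and the bin-type shares, and the additive $+2$ (plus the two extra bin types $\binLS,\binL,\binSS,\binS$ and any $O$-bin corner case already dispatched before Lemma~\ref{lemma:optstructure}) is harmless in the asymptotic ratio.
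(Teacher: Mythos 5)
Your high-level skeleton matches the paper's: fix the maximal type $j$ of an uncombined minor bin (with $j=1$ if none), use Lemma~\ref{lemma:combstructure} to conclude that every uncombined major bin is $\geq 1-w_j$ full, count minor bins via Lemma~\ref{lemma:smallmain} at index $j$, and charge both contributions against a single lower bound on $\OPT$. But there is a genuine gap in your plan: you are missing the quartet decomposition, and without it the volume argument you sketch does not reach $(1+\varepsilon)\alpha$. The difficulty is that $\binLL$ bins are only guaranteed to be $\nicefrac{2}{3}$ full and $\binSSS$ bins only $\nicefrac{3}{4}$ full, so the naive ``volume of major items'' lower bound on $\OPT$ is too weak for those bin types. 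The paper instead groups three $\binLL$ bins with one $\binSSS$ bin into an \emph{$\itemL$-$\itemS$-quartet}, observes that the six $\itemL$ items in a quartet already force $\OPT$ to open at least three bins, and credits each quartet with $3$ (not $\nicefrac{11}{4}$) in the $\OPT$ lower bound --- that is inequalities~\eqref{eqn:optBound1} and~\eqref{eqn:optBound2}. The leftover non-quartet terms are then controlled by the two auxiliary Lemmas~\ref{lemma:OPT_LL} and~\ref{lemma:OPT_SSS}, which in turn consume the thoroughness bounds (Lemmas~\ref{claim:boundB}--\ref{claim:boundS2}). Your ``$\max(\workloadSmall,\text{major volume})/\text{capacity}$'' bound, by contrast, would not be additive in the two workloads and does not account for the LL/SSS packing inefficiency.

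A second problem is your case (a) (all type-$\geq 2$ minor bins combined). Invoking MP's $\nicefrac{4}{3}$ ratio as a black box and then ``folding in'' Lemma~\ref{lemma:smallmain} would double-count against $\OPT$: if you bound the uncombined major bins by $\nicefrac{4}{3}\cdot\OPT$ and separately bound the minor bins by $(1+\varepsilon)\alpha\cdot\OPT$, the sum exceeds $(1+\varepsilon)\alpha\cdot\OPT$. The paper avoids this by never appealing to MP's aggregate ratio at all; it handles case (a) uniformly as $j=1$ (where $w_1=1$ makes the filling-height filter vacuous), pushing everything through the same parameterized estimate $\ALG \leq (z_j'+\tfrac{3}{4}\varepsilon\alpha)\OPT + \hat{z}_j\cdot\OPT$. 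Finally, the paper must also split on whether $\nonquartetLL\leq 2$ or $\nonquartetSSS=0$ to know which of the two estimations~\eqref{eqn:optBound1Improved} or~\eqref{eqn:optBound2Improved} to apply --- a case distinction your plan does not anticipate.
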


\begin{proof}
We reuse the notation of Sections~\ref{sec:small} and~\ref{sec:large}.
For the bins with major items of $\ALG$, we introduce a collection of bins called \emph{\itemL-\itemS-quartet}, consisting of three $\binLL$ bins and one $\binSSS$ bin.
The number of such collections is denoted by $\quartet:=\left\lfloor\min\{\NBA{\binSSS},\frac{1}{3}\NBA{\binLL}\}\right\rfloor$.
We split the number of $\binLL$ and $\binSSS$ bins in $\ALG$ in the number of bins that can be put in such quartets, denoted by $\quartetLL:=3\quartet$ and $\quartetSSS:=\quartet$,
and the respective number of bins that cannot be put in such a collection, denoted by $\nonquartetLL$ and $\nonquartetSSS$.
Note that it therefore holds $\NBA{\binLL}=\quartetLL + \nonquartetLL$
and $\NBA{\binSSS}=\quartetSSS + \nonquartetSSS$.
Furthermore, due to the definition of $\quartet$, it holds $\nonquartetLL\leq 2$ or $\nonquartetSSS=0$.

Note that we assume that the solution of $\ALG$ for the major items does not contain bins with only $\itemO$ items, otherwise the approximation ratio would follow directly
as argued in the previous section.

We estimate the optimal solution by adding up all the items which need to be packed.
As before, we denote by $\workloadSmall$ the cumulative size of all minor items.
We estimate the cumulative size of major items by considering the different types of bins with major items $\binType\in\binSet$
and their minimal filling height $F(T)$ resulting from the minimum size of the respective items (e.g., for bins of type $\binBL$, we have $F(\binBL) = \nicefrac{1}{2} + \nicefrac{1}{3} = \nicefrac{5}{6}$).
By incorporating the fact that all bins in quartets have an average filling height of $\nicefrac{3}{4}$ we get
\begin{tightalign}
\OPT&\geq \workloadSmall + \sum_{\binVar\in\binSet} F(\binVar) \cdot \NBA{\binVar}^{-Q} + \frac{3}{4}\quartetLL + \frac{3}{4}\quartetSSS\text{ as well as } \label{eqn:optBound1}\\
\OPT&\geq \workloadSmall + \sum_{\binVar\in\binSet} F(\binVar) \cdot \NBA{\binVar}^{-Q} + 3\quartet. \label{eqn:optBound2}
\end{tightalign}

From Lemma~\ref{lemma:combstructure} we get the following:
Suppose $j$ is the maximum index such that a bin of type $j$ is not combined with a bin of major items
(pick $j=1$ if such a bin does not exist).
Then all remaining bins with major items must have a size of at least $1-w_j$.
Note that if $j=k$, then $\ALG \leq (z_k'+\frac{3}{4}\varepsilon\alpha) \workloadSmall+z_k'\cdot W^{maj}$ directly follows, where $W^{maj}$ is the workload of major items, and hence the approximation ratio. We assume $j<k$ in the following.
We introduce $\NBA{\binBL,\binBS, \binB, \binLLS, \binLL, \binSSS}^{\geq(1-w_j);-Q}$ as the number of bins (with major items) of the given types used by our algorithm, limited to bins with a filling height of at least $1-w_j$ and excluding all quartets.
Hence we have
\begin{tightalign}
  \ALG &\leq (z_j'+\frac{3}{4}\varepsilon\alpha) \workloadSmall + \NBA{\binBL,\binBS, \binB, \binLLS, \binLL, \binSSS}^{\geq(1-w_j);-Q} + \quartetLLSSS \nonumber \\
	  &\leq (z_j'+\frac{3}{4}\varepsilon\alpha) \OPT+ \sum_{\binVar\in\binSet}(1-z_j'\cdot F(\binVar))\cdot \NBA{\binVar}^{\geq(1-w_j);-Q} + (1-\frac{3}{4}z_j')\quartetLLSSS \label{eqn:optBound1Improved}
\end{tightalign}
from (\ref{eqn:optBound1}) as well as
\begin{tightalign}
  \ALG &\leq (z_j'+\frac{3}{4}\varepsilon\alpha) \workloadSmall + \NBA{\binBL,\binBS, \binB, \binLLS, \binLL, \binSSS}^{\geq(1-w_j);-Q} + 4\quartet \nonumber \\
	  &\leq (z_j'+\frac{3}{4}\varepsilon\alpha) \OPT+ \sum_{\binVar\in\binSet}(1-z_j'\cdot F(\binVar))\cdot \NBA{\binVar}^{\geq(1-w_j);-Q} + (4-3z_j')\quartet. \label{eqn:optBound2Improved}
\end{tightalign}
from (\ref{eqn:optBound2}).
We use one of these estimations depending on $\quartet$.

Let $\hat{z}_j:=z_k'-z_j'$. For the two cases with $\nonquartetLL\leq 2$ or $\nonquartetSSS=0$ we show the following lemmas:
\begin{lemma}\label{lemma:OPT_LL}
	Let $\nonquartetLL\leq 2$. Then it holds that
\begin{tightalign*}
 \OPT\geq\frac{1}{\hat{z}_j}\cdot\left(\sum_{\binVar\in\binSet}(1-z_j'\cdot F(\binVar))\cdot \NBA{\binVar}^{\geq(1-w_j);-Q} + (1-\frac{3}{4}z_j')\quartetLLSSS\right).
\end{tightalign*}
\end{lemma}
\begin{lemma}\label{lemma:OPT_SSS}
	Let $\nonquartetSSS=0$. Then it holds that
\begin{tightalign*}
 \OPT\geq\frac{1}{\hat{z}_j}\cdot\left(\sum_{\binVar\in\binSet}(1-z_j'\cdot F(\binVar))\cdot \NBA{\binVar}^{\geq(1-w_j);-Q} + (4-3z_j')\quartet\right).
\end{tightalign*}
\end{lemma}
Hence, from (\ref{eqn:optBound1Improved}) together with Lemma~\ref{lemma:OPT_LL} or (\ref{eqn:optBound2Improved}) together with Lemma~\ref{lemma:OPT_SSS}, we conclude
\begin{tightalign*}
 \ALG \leq (z_j' + \frac{3}{4}\varepsilon\alpha)\OPT + \hat{z}_j\cdot \OPT \leq (z_k' + \frac{3}{4}\varepsilon\alpha)\OPT = (1+\varepsilon)\alpha\OPT.
\end{tightalign*}

\vspace*{-1.8em}
\end{proof}

\noindent
Finally, Theorem~\ref{theorem:main_result} now directly follows from our analysis: Lemma~\ref{lemma:solution_quality} gives the approximation ratio of $(1+\varepsilon)\cdot \alpha$ and Lemma~\ref{lemma:shifting_moves} bounds the number of shifting moves to $\mathcal{O}(\nicefrac{1}{\varepsilon^2})$.

The lower bound of Balogh~et~al.~\cite{DBLP:journals/siamcomp/BaloghBGR08} also applies for the Relaxed Online Bin Packing Problem, thus we can also close the gap between upper and lower bound for this problem with the following direct corollary.
\begin{corollary}\label{cor:relaxedOnlineBinPacking}
For each $\varepsilon \in (0,1)$, there exists an algorithm for the Relaxed Online Bin Packing Problem with
an asymptotic approximation ratio of $\left(1+\varepsilon\right)\cdot\left(1-\nicefrac{1}{\left(W_{-1}\left(\nicefrac{-2}{e^3}\right) + 1\right)}\right)$
which repacks at most $\mathcal{O}\left(\nicefrac{1}{\varepsilon^2}\right)$ items per insertion of an item.
\end{corollary}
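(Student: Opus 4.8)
The plan is to observe that the Relaxed Online Bin Packing Problem is exactly the restriction of the Fully Dynamic Bin Packing Problem to input sequences in which every step is an insertion, and then to invoke Theorem~\ref{theorem:main_result} on this restricted class of inputs. Concretely, given a relaxed online instance $L = (a_1, a_2, \ldots)$, I would define the fully dynamic sequence whose $t$-th step inserts $a_t$, so that $L_t = (a_1,\ldots,a_t)$ for every $t$, and run the algorithm of Theorem~\ref{theorem:main_result} on it. Since that algorithm maintains a valid packing of $L_t$ after step $t$, and since the fully dynamic model only reveals $a_t$ at step $t$, the resulting behaviour is a legal online algorithm for the relaxed instance, producing a feasible packing after each arrival.

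Next I would transfer the two guarantees. The approximation bound of Lemma~\ref{lemma:solution_quality} asserts $\ALG \le (1+\varepsilon)\cdot\alpha\cdot\OPT$ for \emph{every} input sequence of the fully dynamic problem, hence in particular for insertion-only sequences; since $\OPT$ for the relaxed instance at time $t$ equals the fully dynamic optimum for $L_t$, the claimed asymptotic ratio $(1+\varepsilon)\cdot(1-\nicefrac{1}{(W_{-1}(\nicefrac{-2}{e^3})+1)})$ follows. Likewise, Lemma~\ref{lemma:shifting_moves} bounds the number of shifting moves per insertion \emph{or} deletion by $\mathcal{O}(\nicefrac{1}{\varepsilon^2})$, so restricting attention to insertions retains the bound $\mathcal{O}(\nicefrac{1}{\varepsilon^2})$ per arrival, as required.

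The only point needing a sentence of care — and the nearest thing to an obstacle, though a mild one — is to check that the repacking measure of the relaxed model (the number of items reassigned to a different bin per arrival) coincides with the shifting-move count used throughout this paper, and that nothing in the algorithm's correctness or bounds relied on deletions actually occurring. This is immediate from the construction: each insertion is processed independently by the minor-item procedure of Section~\ref{sec:small}, the major-item procedure of Section~\ref{sec:large}, and the greedy combination of Section~\ref{sec:Combination}, all of which remain well-defined and within their stated bounds on a purely incremental stream (indeed the deletion branches are simply never invoked). With these observations the corollary follows directly from Theorem~\ref{theorem:main_result}.
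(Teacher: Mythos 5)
Your argument is correct and matches the paper's own (implicit) reasoning: the paper labels this a ``direct corollary'' of Theorem~\ref{theorem:main_result}, precisely because an insertion-only sequence is a special case of the fully dynamic input and both the approximation bound and the shifting-move bound carry over verbatim. Your added sanity check that the repacking measure coincides and that the deletion branches are simply never triggered is a reasonable way to make the ``direct'' step explicit.
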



\newpage
\bibliographystyle{plain}
\bibliography{references}



\newpage
\appendix
\allowdisplaybreaks
\section{Proofs from Section~\ref{sec:small}}

{
	\renewcommand{\thetheorem}{\ref{lemma:ordering_feasibility}}
	
	\begin{lemma}
	  During the whole execution of the algorithm it holds for two arbitrary bins $B_i$ and $B_{i'}$ with $i<i'$ that $\min_{e \in B_i}e \geq \max_{e \in B_{i'}}e$.
	  Furthermore, for each bin $B$ with type $j$ it holds that $\sum_{e\in B}e\leq w_j$.
	\end{lemma}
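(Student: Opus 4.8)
I would prove the two invariants simultaneously by induction on the number of operations (insertions and deletions of minor items) executed so far, the base case being the initial empty packing (a single buffer group of $l$ empty bins), where both are vacuous. For the inductive step I trace through the insertion (resp.\ deletion) procedure, viewing it as a finite sequence of \emph{elementary steps}: placing the new item into the designated bin $B_i$ (resp.\ removing the requested item from $B_i$), moving one item between two adjacent bins, and inserting or removing a buffer group all of whose bins are empty. This sequence is finite because each recursive call acts on a bin strictly to the left of the previous one (removing an empty buffer group only re-indexes bins). Inserting or removing empty bins touches no item and hence preserves both invariants, so all the content lies in the single-item moves.

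For the ordering invariant I show it holds after every elementary step. The initial placement of $a$ into $B_i$ respects the order by the very definition of $B_i$. An insertion move sends $M:=\max_{e\in B_i}e$ into $B_{i-1}$; the current invariant gives $M\le\min_{e\in B_{i-1}}e$ (and, in the degenerate case $B_{i-1}=\emptyset$, the pairwise bound with indices $i-2<i$ gives $\min_{e\in B_{i-2}}e\ge\max_{e\in B_i}e= M$) and $M\ge\max_{e\in B_{i''}}e$ for all $i''>i$, so everything stays sorted. Symmetrically, a deletion move pulls $m:=\min_{e\in B_{i-1}}e$ into $B_i$, and $m$ lies between all remaining items of $B_i$ and all items of $B_{i-1}$ and of bins further left; removing the requested item only shrinks $B_i$. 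Hence the order is never broken.

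For the height invariant I show it holds once the time step terminates; it can be violated only transiently, at a bin that has just received an item while at load $\le w_j$. A bin receiving its first item never overflows, since every minor item has size $\le\delta=\varepsilon/15<\tfrac14<w_j$ for every type $j$ actually present (using $\tfrac14<y_k<\ldots<y_1=\tfrac12$ and $w_1=1$); this covers every move into an empty bin, e.g.\ the fresh buffer-group bin in insertion case 3b. When a type-$j$ bin $B_i$ overflows after an insertion, the procedure removes $M:=\max_{e\in B_i}e$, and $\sum_{e\in B_i}e-M\le w_j$ follows in both cases $M=a$ (the load returns to its pre-insertion value, $\le w_j$ by hypothesis) and $M\neq a$ (then $M\ge a$, so the load drops strictly below the pre-insertion value). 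For a deletion, the key point is that the test ``$B_i$ is not full after the deletion'', namely $\sum_{e\in B_i}e-a+\min_{d\in B_{<i}}d\le w_j$, is precisely the load of $B_i$ after $a$ is removed and the smallest globally available item from the left (which equals $\min_{d\in B_{<i}}d$, realized by the rightmost nonempty bin on the left) is pulled in; removing an empty buffer group leaves $\min_{d\in B_{<i}}d$ unchanged, so the test stays valid through case 3b.

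The main obstacle is deletion case 3b: after the (presumed empty) buffer group left of $B_i$ is removed and indices shift, I must confirm that the item then pulled into $B_i$ is indeed the global minimum over all bins left of $B_i$, and that the earlier ``not full'' test still certifies $\sum_{e\in B_i}e\le w_j$ afterwards — i.e.\ that deleting the empty group did not invalidate the test. Together with checking, from the structure maintained by the procedures (cf.\ Lemma~\ref{lemma:distance}), that all bins of the removed buffer group are genuinely empty at that point, this is the delicate bookkeeping; the remaining cases reduce to the routine checks sketched above.
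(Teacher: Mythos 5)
Your proof is correct and follows essentially the same inductive strategy as the paper's (establish the ordering invariant step by step, then use it to argue that every item shifted into a bin leaves that bin at or below its cap), just with considerably more explicit case analysis — including the deletion-case-3b bookkeeping, which the paper glosses over by simply noting that an item is only pulled from $B_{i-1}$ into $B_i$ when the fullness test permits it. The key observations you flag (that $\min_{e\in B_{i-1}}e = \min_{d\in B_{<i}}d$ by the ordering invariant, that removing empty buffer groups leaves this minimum unchanged, and that removing $\max_{e\in B_i}e$ after an insertion restores $\sum_{e\in B_i}e \le w_j$) are exactly the content implicit in the paper's terser argument.
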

	
}
\begin{proof}
	\emph{Ordering of elements:}
	Consider a given packing which abides the described ordering.
	When an item $a$ is inserted into a bin $B_i$, it is ensured that all items in $B_{i-1}$ are at least as large as $a$ and all items in $B_{i+1}$ are at most as large as $a$.
	The recursive call always either shifts the largest item from $B_i$ into $B_{i-1}$ or the smallest item from $B_{i-1}$ into $B_i$ and therefore upholds the ordering.
	
	\emph{Feasibility:}
	For the deletion of an item, the statement holds since an item is only shifted from $B_{i-1}$ into $B_i$ if it does not violate
	the feasibility condition.
	For the insertion procedure, we observe that when an item is added to a bin $B_i$, it is sufficient to remove at most the largest item in $B_i$
	to restore the feasibility condition for this bin.
\end{proof}

{
	\renewcommand{\thetheorem}{\ref{lemma:distance}}
	
	\begin{lemma}
		The number of full groups between two buffer groups is in $[l,2l]$.
	\end{lemma}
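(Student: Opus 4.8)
The plan is to prove the statement by induction over the sequence of item insertions and deletions, with the invariant being exactly the claimed property: between any two consecutive buffer groups there are between $l$ and $2l$ full groups (this is vacuously true whenever the packing contains at most one buffer group, which in particular settles the initial configuration with its single buffer group). Before the case analysis I would isolate three structural observations that sharply restrict how the coarse group structure can change: groups are only ever \emph{created} as buffer groups (a block of $l$ fresh empty bins), a group is only ever \emph{removed} once all of its bins are empty --- so only buffer groups get removed --- and otherwise a group merely changes its status between ``full'' and ``buffer''; and each recursive step of the insertion and deletion procedures moves a single item between two neighbouring bins. Together these reduce the induction to examining the only two places where the set of buffer groups genuinely changes, namely case~3b of the insertion procedure and case~3b of the deletion procedure; in every other branch the set of buffer groups is untouched and the invariant carries over verbatim.

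For the insertion side, I would first use Lemma~\ref{lemma:ordering_feasibility} together with the observation that pushing an item into a bin never makes that bin overfull unless it was already full (the pushed item is at most as large as the smallest item of the receiving bin, which is dominated by the smallest item of any preceding bin) to conclude that the recursion can reach the left-most bin $B_i$ of a buffer group $G$ only after passing through full bins of $G$; hence when case~3b makes $B_i$ full, every bin of $G$ is full, so $G$ turns from a buffer group into a full group. The procedure then inserts a fresh buffer group $G'$ immediately to the left of $G$, and the accompanying recursive shift of $\max_{e\in B_i}e$ lands in the previously empty right-most bin of $G'$, which then holds a single minor item of size at most $\delta$; since the smallest item of any preceding bin is also at most $\delta$ and $w_j>2\delta$ for every type $j$ by the parameter choice (recall $\delta=\varepsilon/15$), that bin is neither overfull nor full, so the recursion stops inside $G'$ and no further buffer groups appear. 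It then remains to count full groups: the gap to the left of $G$ is unchanged (we inserted $G'$ at its left end), so it stays in $[l,2l]$, while the gap between $G'$ and the next buffer group to the right gains exactly the single full group $G$; if that gap was at most $2l-1$ it is now in $[l,2l]$, and if it was exactly $2l$ it momentarily reaches $2l+1$ --- precisely the trigger for the ``additionally'' clause, which inserts a second buffer group splitting these $2l+1$ full groups into parts of sizes $l$ and $l+1$, both in $[l,2l]$.

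For the deletion side I would argue symmetrically. In case~3b the bin $B_i$ whose left neighbour $B_{i-1}$ is empty must be the left-most bin of its group $G$ (otherwise $B_{i-1}$ would be a full bin of $G$, not empty), and the empty buffer group that is removed is the one immediately adjacent to $G$ (since $B_{i-1}$ is $B_i$'s immediate neighbour). Writing the configuration around $G$ as (buffer group) $[d_1$ full groups$]$ $G'$ $[\,G$ and then $d-1$ further full groups$\,]$ (buffer group), the inductive hypothesis gives $d_1\in[l,2l]$ and $d\in[l,2l]$; removing $G'$ and turning $G$ into a buffer group leaves a left gap of $d_1$ full groups (now directly in front of $G$) and a right gap of $d-1$ full groups. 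The left gap is fine; if $d-1\ge l$ the procedure terminates and we are done; otherwise $d=l$, and the ``Else'' branch refills $G$ from its left neighbouring group, turning $G$ back into a full group and merging the two gaps into one of size $d_1+1+(d-1)=d_1+l\in[2l,3l]$. If this equals $2l$ we are done; otherwise it lies in $(2l,3l]$, exactly the condition under which the procedure inserts one more buffer group splitting it into two parts each of size in $[l,2l]$ --- feasible since $2l<d_1+l\le 3l\le 4l$ (for instance cutting at $\lfloor (d_1+l)/2\rfloor$). As on the insertion side, the refilling recursion moves one item per step and halts inside the adjacent group (a bin staying full after losing its smallest item ends it; otherwise the recursion reaches an empty bin, handled by the sub-cases of case~3), so no further buffer groups appear or vanish.

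The parts I expect to be the real obstacle are not the gap arithmetic but the two structural facts that make it applicable: first, that the recursion reaches the left-most bin of a buffer group \emph{only} through full bins, so that the group really does become a full group in insertion case~3b (this uses the sorted order of Lemma~\ref{lemma:ordering_feasibility} and the fact that a single minor item never fills a bin, i.e.\ $w_j>2\delta$); and second, careful treatment of the boundary situations where $G$ is the left-most or right-most group, so that one of the neighbouring buffer groups simply does not exist and the corresponding gap need not be re-examined. Once these are pinned down, the invariant is preserved by every operation and the lemma follows.
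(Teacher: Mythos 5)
Your proof follows the same route as the paper's very terse argument: both reduce to insertion case~3b and deletion case~3b, observe that the group of $B_i$ turns from buffer to full (respectively from full to buffer), and check that the $[l,2l]$ gap bound is restored by the conditional insertion or removal of a second buffer group. You supply considerably more of the supporting verifications --- the reduction to case~3b, the observation that a pushed item never overfills a non-full bin, termination of the recursion inside the fresh buffer group, the explicit gap arithmetic, and the boundary cases --- all of which the paper glosses over in a few sentences.

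One residual subtlety, which the paper shares and simply asserts (``since the group of $B_i$ is now full''), is the claim that $G$ actually becomes a \emph{full} group in insertion case~3b. Your argument that ``the recursion reaches $B_i$ only through full bins of $G$'' establishes fullness of the bins on the recursion chain, i.e.\ $B_{i+1},\dots,B_{i_0}$. But if the original insertion bin $B_{i_0}$ lies strictly inside $G$, the bins $B_{i_0+1},\dots,B_{i+l-1}$ of $G$ are never visited, and the ``hence every bin of $G$ is full'' step does not follow from what was shown; closing it would require a separate structural invariant, such as that the full bins within a buffer group always form a suffix. Since the paper's own proof also leaves this unaddressed, this is not a defect relative to the reference argument, but it is worth flagging as the one place where both write-ups take a claim on faith.
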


}
\begin{proof}
	During insertion, new buffer groups are only inserted in case 3b.
	Inserting a new buffer group to the left of $B_i$ maintains the number of full groups to the left of $B_i$ and increases the number of full groups to the right by 1
	since the group of $B_i$ is now full.
	If this number now becomes $2l$ it is obvious that there is a position for a new buffer group to the right of the group of $B_i$, such that there are always at least $l$
	full groups between two buffer groups.
	
	Regarding the deletion process, only in case 3b the buffer groups are changed.
	Removing the group to the left of $B_i$ and declaring the group of $B_i$ a buffer group does not change
	the number of full groups to the left of $B_i$ and decreases the number of full groups to the right by 1.
	Hence, only when this number drops below $l$, further changes need to be made.
	The recursive deletion makes the group of $B_i$ full again.
	If afterwards the distance between two buffer groups is at least $2l$, the insertion of a new group is possible with respect to the constraints.
\end{proof}

{
	\renewcommand{\thetheorem}{\ref{lemma:fitting}}
	
	\begin{lemma}
		A set of full bins which consists of at least $z_j\cdot\workloadSmall$ bins of type $j$ for each bin type $j$ contains a workload of at least $\workloadSmall$ of minor items,
i.e. $\sum_{j=1}^k z_j\workloadSmall \cdot (w_j - \delta) \geq \workloadSmall$.
	\end{lemma}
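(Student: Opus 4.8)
The plan is to prove the displayed arithmetic inequality $\sum_{j=1}^{k} z_j(w_j-\delta)\ge 1$ (equivalently, the stated inequality after dividing by $\workloadSmall$); the workload claim then follows immediately, since by definition a full bin $B_i$ of type $j$ satisfies $\sum_{e\in B_i}e > w_j-\min_{d\in B_{<i}}d \ge w_j-\delta$ (every minor item has size at most $\delta$, and $B_{<i}\ne\emptyset$ because the globally leftmost bin is never declared full — case 3b of the insertion procedure inserts a fresh buffer group to its left first), so any set of full bins containing at least $z_j\workloadSmall$ bins of type $j$ for every $j$ carries minor-item workload at least $\sum_{j=1}^k z_j\workloadSmall(w_j-\delta)$.

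First I would collapse the sum using the parameter identities. Writing $z_j=z_j'-z_{j-1}'$ with $z_0':=0$, the choice $z_j'=\nicefrac{y_k}{(y_j(1-y_k))}$ makes the product $C:=z_j'y_j=\nicefrac{y_k}{(1-y_k)}$ the same for every $j$, and the geometric law $y_j=\tfrac12(2y_k)^{\nicefrac{(j-1)}{(k-1)}}$ has constant ratio $r:=\nicefrac{y_{j+1}}{y_j}=(2y_k)^{\nicefrac{1}{(k-1)}}\in(0,1)$. Hence $z_jw_j=z_jy_j=C(1-r)$ for $2\le j\le k$, while $z_1w_1=z_1=2C$ because $w_1=1$ and $y_1=\tfrac12$; also $\sum_{j=1}^k z_j=z_k'=\nicefrac{1}{(1-y_k)}$. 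Thus $\sum_{j=1}^k z_jw_j=2C+(k-1)C(1-r)$, and substituting into $\sum_j z_j(w_j-\delta)\ge1$, multiplying by $1-y_k>0$ and using $C(1-y_k)=y_k$, the goal reduces to
\[
 y_k\bigl(3+(k-1)(1-r)\bigr) \geq 1+\delta.
\]

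Next I would handle the truncation term and the Lambert-W identity. Setting $t:=-\ln(2y_k)>0$ we have $r=e^{-t/(k-1)}$, so $1-e^{-x}\ge x-\tfrac{x^2}{2}$ gives $(k-1)(1-r)\ge t-\nicefrac{t^2}{(2(k-1))}$; since $y_k\in(\tfrac14,\tfrac12)$ forces $t<\ln 2$ and $k-1=\lceil\nicefrac{3}{\varepsilon}\rceil\ge\nicefrac{3}{\varepsilon}$, the loss $\nicefrac{y_kt^2}{(2(k-1))}$ is at most $\varepsilon/20$, so it suffices to show $y_k(3+t)\ge 1+\delta+\varepsilon/20$. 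Now $y_k(3+t)=(1-\beta)\bigl(3-\ln(2-2\beta)\bigr)=:g(\beta)$ for $\beta:=\nicefrac{1}{z_k'}=\frac{1}{(1+\varepsilon/4)\alpha}$ and $y_k=1-\beta$; the crucial fact is $g(\beta_0)=1$ for $\beta_0:=\nicefrac{1}{\alpha}$. Indeed, writing $u:=W_{-1}(\nicefrac{-2}{e^3})$ one has $\alpha=\frac{u}{u+1}$, hence $1-\beta_0=\nicefrac{-1}{u}$ and $2-2\beta_0=\nicefrac{-2}{u}$, so $g(\beta_0)=\frac{1}{|u|}\bigl(3-\ln 2+\ln|u|\bigr)$, and the defining relation $ue^u=\nicefrac{-2}{e^3}$ is precisely $|u|-\ln|u|=3-\ln 2$, giving $g(\beta_0)=1$. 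Since $g'(\beta)=-2+\ln(2-2\beta)<0$ and $\beta<\beta_0$, we get $g(\beta)>1$; and because $-g'(s)=2-\ln(2-2s)>2-\ln 2$ throughout $[\beta,\beta_0]$, the slack is at least $(2-\ln 2)(\beta_0-\beta)$, which is $\Omega(\varepsilon)$ with a constant large enough to absorb $\delta=\varepsilon/15$ and the $\varepsilon/20$ truncation loss for every $\varepsilon\in(0,1)$.

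The main obstacle is exactly this last $\Theta(\varepsilon)$ bookkeeping: certifying with explicit constants (using $y_k\in(\tfrac14,\tfrac12)$, $\beta_0-\beta=\tfrac1\alpha\cdot\tfrac{\varepsilon/4}{1+\varepsilon/4}$, $k-1=\lceil\nicefrac3\varepsilon\rceil$, and $\alpha<1.39$) that the positive slack produced by the $(1+\varepsilon/4)$-factor in $z_k'$ dominates the combined loss from the additive $\delta$ and from replacing the finite geometric sum by its continuum limit, uniformly in $\varepsilon$. Everything preceding that is routine: the sum collapses purely because $z_j'y_j$ is constant, and recognizing $g(\beta_0)=1$ is just a rewriting of the equation defining $W_{-1}(\nicefrac{-2}{e^3})$.
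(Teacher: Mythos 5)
Your proof is correct, and although it lands on the same core identities (the Lambert-$W$ relation $|u|-\ln|u|=3-\ln 2$ with $u=W_{-1}(-2/e^3)$, the geometric law for $y_j$, and the need to absorb a $\Theta(\varepsilon)$ truncation loss into the $(1+\varepsilon/4)$ slack built into $z_k'$), the technical execution is genuinely cleaner than the paper's in two respects. First, you immediately exploit the constancy of $z_j'y_j = C = y_k/(1-y_k)$ to collapse the sum into $y_k\bigl(3+(k-1)(1-r)\bigr)\ge 1+\delta$; the paper instead grinds through the telescoping term by term before arriving at an equivalent scalar inequality. Second, and more substantively, you replace the paper's full exponential-series expansion — which forces them to prove separately that the tail $-\ln(2y_k)\sum_{j\ge 3}(\ln 2y_k)^{j-1}/((k-1)^{j-1}j!)$ is nonnegative via an even/odd-summand pairing — by the single elementary bound $1-e^{-x}\ge x-x^2/2$, which makes the truncation error visibly $\le y_kt^2/(2(k-1))$ in one line. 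Finally, where the paper substitutes $\ln(2(z_k'-1)/z_k')=(1+\gamma)\ln(2(\alpha-1)/\alpha)$ and estimates $\gamma\ln(2(\alpha-1)/\alpha)\le\tfrac23\varepsilon$, you instead observe that $g(\beta)=(1-\beta)(3-\ln(2-2\beta))$ has $g(\beta_0)=1$ and $-g'(s)=2-\ln(2-2s)>2-\ln 2$, converting the slack directly into $(2-\ln 2)(\beta_0-\beta)=\Omega(\varepsilon)$; this is the same conclusion as the paper's $\gamma$-bookkeeping but organized as a clean monotonicity argument. The remaining numerics work out: $\beta_0-\beta\ge\varepsilon/(5\alpha)$ for $\varepsilon\in(0,1)$, so the slack is at least $(2-\ln 2)\varepsilon/(5\alpha)\approx 0.19\varepsilon$, which comfortably exceeds $\delta+\varepsilon/20=7\varepsilon/60\approx 0.12\varepsilon$, so the implicit constant hunt you flag as the main obstacle does close. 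You also make explicit the connection between the arithmetic inequality and the ``full bin carries $>w_j-\delta$'' workload statement (including the need for a predecessor bin), which the paper leaves unstated.
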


}
\begin{proof}
	Note that this proof explicitly requires $z_k'$ to be larger than $\alpha$. Intuitively, this proof implies that no feasible parameters for our algorithm can be found
	if it wants to pack minor items in less than $\alpha\cdot\workloadSmall$ bins.
	Starting with the statement, the definition of the filling height with $w_1 =1$ and $w_i = y_i\ \forall i>1$ and using the fact that $z_i = z'_i - z'_{i-1}$,
	the following are equivalent:
	\begin{align*}
	\sum_{i=1}^k z_i \cdot (w_i - \delta) &\geq 1 \\
	z'_1 + \sum_{i=2}^kz_iy_i - \delta\sum_{i=1}^kz_i &\geq 1\\
	z'_1 + \sum_{i=2}^k(z'_i -z'_{i-1})y_i - \delta z'_k &\geq 1
	\end{align*}
	
	Using the definition of $z'_i = \frac{y_k}{y_i(1-y_k)}$, we get:
	\begin{align*}
	\frac{y_k}{y_1(1-y_k)} + \sum_{i=2}^k\left(\frac{y_k}{y_i(1-y_k)}-\frac{y_k}{y_{i-1}(1-y_k)}\right)y_i &\geq 1+ \delta z'_k \\
	\frac{y_k}{(1-y_k)}\left(\frac{1}{y_1} + \sum_{i=2}^k\left(1-\frac{y_i}{y_{i-1}}\right)\right)  &\geq 1 + \delta z'_k
	\end{align*}
	
	We now use the definition of $y_i = \frac{1}{2}(2y_k)^{\frac{i-1}{k-1}}$ and $z'_k = \frac{1}{1-y_k}$:
	\begin{align*}
	z'_ky_k\left(2 + \sum_{i=2}^k\left(1-\frac{\frac{1}{2}(2y_k)^\frac{i-1}{k-1}}{\frac{1}{2}(2y_k)^\frac{i-2}{k-1}}\right) \right) &\geq 1 + \delta z'_k\\
	2 + \sum_{i=2}^k\left(1-(2y_k)^\frac{1}{k-1}\right) &\geq \frac{1}{z'_ky_k} + \frac{\delta}{y_k}\\
	2+ (k-1)\left(1-(2y_k)^\frac{1}{k-1}\right) &\geq \frac{1-y_k}{y_k} + \frac{\delta}{y_k}\\
	(k-1)\left(1- e^{\ln{(2y_k)}\cdot \frac{1}{k-1}}\right) &\geq \frac{1+\delta-3y_k}{y_k} 
	\end{align*}
	
	In the following, we apply the exponential series $e^x = \sum_{j=0}^{\infty}\frac{x^j}{j!}$:
	\begin{align}
	(k-1)\left(1- \sum_{j=0}^\infty \left(\frac{\ln{2y_k}}{k-1}\right)^j/j!\right) &\geq \frac{1+\delta-3y_k}{y_k} \nonumber\\
	(k-1) - (k-1) - (k-1)\sum_{j=1}^\infty \left(\frac{\ln{2y_k}}{k-1}\right)^j/j! &\geq \frac{1+\delta-3y_k}{y_k} \nonumber\\
	- (k-1)\sum_{j=1}^\infty \frac{\ln{2y_k}}{k-1}\left(\frac{\ln{2y_k}}{k-1}\right)^{j-1}/j! &\geq \frac{1+\delta-3y_k}{y_k} \nonumber\\
	- \ln(2y_k)\sum_{j=1}^\infty \left(\frac{\ln{2y_k}}{k-1}\right)^{j-1}/j! &\geq \frac{1+\delta-3y_k}{y_k} \nonumber\\
	- \ln(2y_k) - \frac{\ln^2(2y_k)}{2(k-1)} - \ln(2y_k)\sum_{j=3}^\infty \left(\frac{\ln{2y_k}}{k-1}\right)^{j-1}/j! &\geq \frac{1+\delta-3y_k}{y_k}\label{eq:exp_inequality}
	\end{align}
		
	We first show that $- \ln(2y_k)\sum_{j=3}^\infty \left(\frac{\ln{2y_k}}{k-1}\right)^{j-1}/j! \geq 0$ with the following transformation:
	\begin{align}
	&- \ln(2y_k)\sum_{j=3}^\infty \frac{1}{j!} \left(\frac{\ln{2y_k}}{k-1}\right)^{j-1} \nonumber\\
	=&- (k-1) \sum_{j=3}^\infty \frac{1}{j!} \frac{\ln{2y_k}}{k-1}\left(\frac{\ln{2y_k}}{k-1}\right)^{j-1}\nonumber\\
	=&- (k-1)\sum_{j=3}^\infty \frac{1}{j!} \left(\frac{\ln{2y_k}}{k-1}\right)^{j} \nonumber\\
	=&- (k-1)\sum_{j=1}^\infty \frac{1}{(2j+1)!} \left(\frac{\ln{2y_k}}{k-1}\right)^{2j+1} + \frac{1}{(2j+2)!} \left(\frac{\ln{2y_k}}{k-1}\right)^{2j+2} \nonumber\\
	=&- (k-1)\sum_{j=1}^\infty \frac{1}{(2j+1)!} \left(\frac{\ln{2y_k}}{k-1}\right)^{2j+1}\left(1+\frac{\ln(2y_k)}{(k-1)(2j+2)}\right) \label{eq:exp_greater_0}
	\end{align}
	
	The last two steps follow by separating even and odd summands. In~\eqref{eq:exp_greater_0}, the term $\frac{1}{(2j+1)!} \left(\frac{\ln{2y_k}}{k-1}\right)^{2j+1}$ is always negative since $y_k =1-\nicefrac{1}{z'_k}$. $\frac{\ln(2y_k)}{(k-1)(2j+2)}$ is always between $-1$ and $0$. Together with the prior negative factor the statement follows.

From~\eqref{eq:exp_greater_0} we get that~\eqref{eq:exp_inequality} is implied by the following equivalent statements.
We use $y_k= \frac{z'_k-1}{z'_k}$ for the transformation:
	
	\begin{align}
	- \ln(2y_k) - \frac{\ln^2(2y_k)}{2(k-1)} &\geq \frac{1+\delta-3y_k}{y_k}\nonumber\\
	- \frac{\ln^2(2y_k)}{2(k-1)} &\geq \frac{1+\delta-3y_k+y_k\ln(2y_k) }{y_k} \nonumber\\
	- \frac{\ln^2(2y_k)}{2(k-1)} &\geq \frac{1+\delta-3\frac{z_k'-1}{z_k'}+\frac{z_k'-1}{z_k'}\ln(2\frac{z_k'-1}{z_k'})}{\frac{z_k'-1}{z_k'}}\nonumber\\
	- \frac{\ln^2(2y_k)}{2(k-1)} &\geq \frac{z_k'+\delta z_k'-3(z_k'-1)+(z_k'-1)\ln(2\frac{z_k'-1}{z_k'})}{z_k'-1}\nonumber\\
	- \frac{\ln^2(2y_k)}{2(k-1)} &\geq \frac{-2z_k'+\delta z_k'+3+(z_k'-1)\ln(2\frac{z_k'-1}{z_k'})}{z_k'-1}\nonumber
\end{align}
	
We choose a new $\gamma<0$ such that $\ln(2\frac{z_k'-1}{z_k'})=(1+\gamma)\ln(2\frac{\alpha-1}{\alpha})$ and substitute it in the term:
\begin{align}
	- \frac{\ln^2(2y_k)}{2(k-1)} &\geq \frac{-2z_k'+\delta z_k'+3+(z_k'-1)(1+\gamma)\ln(2\frac{\alpha-1}{\alpha})}{z_k'-1}\nonumber\\
	- \frac{\ln^2(2y_k)}{2(k-1)} &\geq \frac{-2z_k'+\delta z_k' +3 +(z_k'-1)\ln(2\frac{\alpha-1}{\alpha}) + \gamma(z_k'-1)\ln(2\frac{\alpha-1}{\alpha})}{z_k'-1}\nonumber
\end{align}

Applying our definition of $z_k'=\left(1+\frac{\varepsilon}{4}\right)\alpha$ yields to:
\begin{align}
- \frac{\ln^2(2y_k)}{2(k-1)} \geq&\frac{1}{z_k'-1}\cdot\left( -2\left(1+\frac{\varepsilon}{4}\right)\alpha+\delta \left(1+\frac{\varepsilon}{4}\right)\alpha +3 +(\left(1+\frac{\varepsilon}{4}\right)\alpha-1)\ln(2\frac{\alpha-1}{\alpha})\right.\nonumber \\
&\left.+ \gamma(z_k'-1)\ln(2\frac{\alpha-1}{\alpha}){z_k'-1}\right)\nonumber\\
- \frac{\ln^2(2y_k)}{2(k-1)}
\geq &\frac{-2\alpha +3 +(\alpha-1)\ln(2\frac{\alpha-1}{\alpha})}{z_k'-1}\nonumber \\
&+\frac{-\frac{\varepsilon}{2}\alpha + \delta (1+\frac{\varepsilon}{4})\alpha + \frac{\epsilon}{4}\alpha \ln(2\frac{\alpha-1}{\alpha}) + \gamma(z_k'-1)\ln(2\frac{\alpha-1}{\alpha})}{z_k'-1}\label{eq:two_alpha_summands}
\end{align}

The last step only splits the term in two summands. We first concentrate on the first one and show that $-2\alpha +3 +(\alpha-1)\ln(2\frac{\alpha-1}{\alpha})=0$.
We first use the definition of $\alpha = 1-\nicefrac{1}{\left(W_{-1}\left(\frac{-2}{e^3}\right)+1\right)}$ and then the fact that $\ln\left(\frac{-2}{W_{-1}(\frac{-2}{e^3})}\right)=W_{-1}\left(\frac{-2}{e^3}\right)+3$.

\begin{align}
	&-2\alpha +3 +(\alpha-1)\ln(2\frac{\alpha-1}{\alpha})\nonumber\\
	=&-2\alpha +3 +(\alpha-1)\ln\left(2\frac{-\frac{1}{\left(W_{-1}\left(\frac{-2}{e^3}\right)+1\right)}}{1-\frac{1}{\left(W_{-1}\left(\frac{-2}{e^3}\right)+1\right)}}\right)\nonumber\\
	=& -2\alpha +3 +(\alpha-1) \left(W_{-1}\left(\frac{-2}{e^3}\right)+3\right)\nonumber\\
	=& -2\left(1-\frac{1}{\left(W_{-1}\left(\frac{-2}{e^3}\right)+1\right)}\right) +3\nonumber\\
	&+\left(\left(1-\frac{1}{\left(W_{-1}\left(\frac{-2}{e^3}\right)+1\right)}\right)-1\right)  \left(\left(W_{-1}\left(\frac{-2}{e^3}\right)+1\right) +2\right)\nonumber\\
	=& -2
	+ \frac{2}{\left(W_{-1}\left(\frac{-2}{e^3}\right)+1\right)}
	+3
	-\frac{W_{-1}\left(\frac{-2}{e^3}\right)+1}{W_{-1}\left(\frac{-2}{e^3}\right)+1}
	-\frac{2}{W_{-1}\left(\frac{-2}{e^3}\right)+1}\nonumber\\
	=&\ 0\label{eq:W_term_0}
\end{align}

Additionally, we show a further bound for $\gamma\ln(2\frac{\alpha-1}{\alpha})$, using $\alpha>1.375$.

\begin{align}
  \gamma\ln(2\frac{\alpha-1}{\alpha}) &= \ln(2\frac{z_k'-1}{z_k'}) - \ln(2\frac{\alpha-1}{\alpha}) 
	  = \ln\left(\frac{(1+\frac{\varepsilon}{4})\alpha-1}{(1+\frac{\varepsilon}{4})(\alpha-1)}\right)
		\leq \ln(1+\frac{2}{3}\varepsilon) 
		\leq \frac{2}{3}\varepsilon\label{eq:bound_gamma_ln}
\end{align}

Hence,~\eqref{eq:two_alpha_summands} is implied by the following, using~\eqref{eq:W_term_0} to eliminate the left summand and~\eqref{eq:bound_gamma_ln}:
\begin{align*}
	- \frac{\ln^2(2y_k)}{2(k-1)} &\geq \frac{-\frac{\varepsilon}{2}\alpha+\delta(1+\frac{\varepsilon}{4})\alpha
	   +\frac{\varepsilon}{4}\alpha\ln(2\frac{\alpha-1}{\alpha})+\frac{2}{3}\varepsilon(z_k'-1)}{z_k'-1}
\end{align*}

Since $\delta=\frac{\varepsilon}{15}$ implies $-\frac{\varepsilon}{2}\alpha+\delta(1+\frac{\varepsilon}{4})\alpha
	   +\frac{2}{3}\varepsilon(z_k'-1)\leq 0$, we get to
\begin{align*}
	- \frac{\ln^2(2y_k)}{2(k-1)} &\geq \frac{\frac{\varepsilon}{4}\alpha\ln(2\frac{\alpha-1}{\alpha})}{z_k'-1} \\
	k &\geq 1 + \frac{1}{\varepsilon}\frac{-\ln^2(2y_k)(z_k'-1)}{\frac{1}{2}\alpha\ln(2\frac{\alpha-1}{\alpha})} \\
	k &\geq 1 + \frac{3}{\varepsilon}.
\end{align*}

\end{proof}

{
	\renewcommand{\thetheorem}{\ref{lemma:smallmain}}
	
	\begin{lemma}
		For all $1\leq j\leq k$, the total number of bins of any type $i\leq j$ is at most $(z_j'+\frac{3}{4}\varepsilon\alpha)\workloadSmall$.
	\end{lemma}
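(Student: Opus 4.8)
The plan is to count the \emph{bin groups} the algorithm maintains rather than the bins directly. Every group contains exactly $l$ bins, and by the parameter construction exactly $g_i:=\lceil\nicefrac{z_i'}{z_k'}\cdot l\rceil$ of them have type $\leq i$ (with $g_0:=0$ and $g_k=l$); this composition is the same for full groups and for buffer groups. Hence, if $G$ and $\beta$ denote the numbers of full and buffer groups, the number of bins of type $\leq j$ equals $(G+\beta)\,g_j$. I would therefore establish two bounds and then multiply: (i) $G\leq\nicefrac{z_k'\workloadSmall}{l}$ because each full group carries enough minor workload, and (ii) $\beta\leq\nicefrac{G}{l}+1$ because buffer groups are sparse.

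For (i), recall that a full bin of type $i$ satisfies $\sum_{e\in B_i}e>w_i-\min_{d\in B_{<i}}d\geq w_i-\delta$, since every minor item has size at most $\delta$; the at most $l$ leftmost bins whose predecessor set contains no item only add an $\mathcal{O}(l)$ term and can be ignored. A full group thus carries minor workload at least $\sum_{i=1}^k(g_i-g_{i-1})(w_i-\delta)$. Using $w_1>w_2>\dots>w_k$ (which holds for our choice of the $y_j$) and partial summation, this equals $l(w_k-\delta)+\sum_{i=1}^{k-1}g_i(w_i-w_{i+1})\geq\frac{l}{z_k'}\bigl(z_k'(w_k-\delta)+\sum_{i=1}^{k-1}z_i'(w_i-w_{i+1})\bigr)$, and a second partial summation collapses the parenthesis to $\sum_{i=1}^k z_i(w_i-\delta)$, which is $\geq1$ by Lemma~\ref{lemma:fitting}. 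So every full group carries minor workload at least $\nicefrac{l}{z_k'}$, and since the total minor workload is $\workloadSmall$ there are at most $\nicefrac{z_k'\workloadSmall}{l}$ full groups. For (ii), Lemma~\ref{lemma:distance} gives at least $l$ full groups between any two consecutive buffer groups, so $G\geq(\beta-1)l$, i.e. $\beta\leq\nicefrac{G}{l}+1$, and the total number of groups is at most $G\bigl(1+\tfrac{1}{l}\bigr)+1\leq\tfrac{z_k'\workloadSmall}{l}\bigl(1+\tfrac{1}{l}\bigr)+1$.

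Combining, the number of bins of type $\leq j$ is at most $(G+\beta)\,g_j\leq(G+\beta)\bigl(\nicefrac{z_j'}{z_k'}\cdot l+1\bigr)$; expanding and using $z_j'\leq z_k'$ and $\nicefrac{1}{l^2}\leq\nicefrac{1}{l}$, the leading term is exactly $z_j'\workloadSmall$ and everything else is bounded by $\nicefrac{3z_k'\workloadSmall}{l}+(l+1)$. Since $l=\lceil\nicefrac{4z_k'}{\varepsilon\alpha}\rceil\geq\nicefrac{4z_k'}{\varepsilon\alpha}$ we get $\nicefrac{3z_k'\workloadSmall}{l}\leq\tfrac34\varepsilon\alpha\workloadSmall$, yielding the stated bound $(z_j'+\tfrac34\varepsilon\alpha)\workloadSmall$ up to the additive $\mathcal{O}(l)=\mathcal{O}(\nicefrac{1}{\varepsilon})$ term, which is immaterial for the asymptotic approximation ratio. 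The main obstacle is the double partial-summation step in (i), which is exactly where the geometric choice of the filling heights and of the $z_j'$ is exploited to reduce a single full group's load to Lemma~\ref{lemma:fitting}; justifying the sparsity of buffer groups in (ii) rigorously from the insertion/deletion invariants together with Lemma~\ref{lemma:distance}, and disposing of the edge cases (empty bins inside buffer groups, the leftmost bins), are the secondary issues and contribute only to the $\mathcal{O}(\nicefrac{1}{\varepsilon})$ slack.
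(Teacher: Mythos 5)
Your proposal is correct and follows essentially the same plan as the paper: count bins per group, bound the workload of a full group via Lemma~\ref{lemma:fitting}, and control the number of buffer groups via the spacing invariant of Lemma~\ref{lemma:distance}. Your double Abel summation makes rigorous the per-group workload bound that the paper only sketches as ``the rounding is done in favor of the bin types which contain more items''; the only other difference is that the paper absorbs the rounding $+1$ into the $\frac{\varepsilon\alpha}{4z_k'}\cdot l$ slack (i.e.\ bounds $\lceil\frac{z_j'}{z_k'}l\rceil\leq(\frac{z_j'}{z_k'}+\frac{\varepsilon\alpha}{4z_k'})l$), thereby obtaining the stated bound exactly rather than up to the additive $\mathcal{O}(l)$ term your accounting carries.
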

	
}
\begin{proof}
	The number of bins of type $\leq j$ in a group of size $l$ is set to
	$\left\lceil\frac{z_j'}{z_k'}\cdot l\right\rceil \leq \frac{z_j'}{z_k'}\cdot l+1 \leq \left(\frac{z_j'}{z_k'}+\frac{\epsilon\alpha}{4z_k'}\right)\cdot l$.
	Note that obviously $\left\lceil\frac{z_j'}{z_k'}\cdot l\right\rceil\geq \frac{z_j'}{z_k'}\cdot l$.
	
	We estimate the cumulative size of minor items a full group (consisting of $l$ bins) contains.
	As noted above, the rounding is done in favor of the bin types which contain more items,
	hence a full group holds minor items of cumulative size of at least
	$\sum_{j=1}^{k}\frac{z_j}{z_k'}\cdot l \cdot(w_j-\delta)\geq\frac{l}{z_k'}$ (cf. Lemma~\ref{lemma:fitting}).
	As a consequence, in order to pack all minor items at most $\frac{\workloadSmall\cdot z_k'}{l}$ full groups are necessary.
	
	To account for the buffer groups, we divide all bin groups into units of one buffer group and at least $l$ full
	groups which are located to the right of it in the ordering of the bins.
	At most $\frac{\workloadSmall\cdot z_k'}{l^2}$ such units are necessary to host all minor items.
	The number of bins of type $\leq j$ in such a unit are at most $(l+1)\cdot \left(\frac{z_j'}{z_k'}+\frac{\epsilon\alpha}{4z_k'}\right)\cdot l$.
	It follows that the total number of bins of type $\leq j$ are at most
	$\workloadSmall\cdot z_k'\cdot\frac{(l+1)l}{l^2}\cdot \left(\frac{z_j'}{z_k'}+\frac{\epsilon\alpha}{4z_k'}\right)\leq \left(1+\frac{\epsilon\alpha}{4z_k'}\right)\left(z_j'+\frac{\epsilon\alpha}{4}\right)\workloadSmall
	\leq\left(z_j'+\frac{3}{4}\varepsilon\alpha\right)\workloadSmall$.
\end{proof}

{
	\renewcommand{\thetheorem}{\ref{lemma:small_shifting_moves}}
	
	\begin{lemma}
		The number of shifting moves (regarding minor items) is bounded by $\mathcal{O}(\nicefrac{1}{\varepsilon^2})$.
	\end{lemma}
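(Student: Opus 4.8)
The plan is to relate the number of shifting moves to the number of recursive calls performed by the insertion (resp.\ deletion) procedure: each recursive call moves exactly one minor item from a bin to an adjacent bin, while inserting or removing a buffer group only adds or removes \emph{empty} bins and therefore causes no shifting move at all. Thus it suffices to bound the recursion depth by $\mathcal{O}(l^2)$ and then substitute the value $l=\lceil 4/\varepsilon\rceil+1=\mathcal{O}(\nicefrac{1}{\varepsilon})$ fixed in the choice of parameters.

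For an insertion I would first observe that the recursion proceeds strictly leftward, visits each bin at most once, and moves one item per step. It can only continue from a bin into its left neighbour through case~1, i.e.\ when that bin was already \emph{full} before it received the shifted item; a bin that was \emph{not} full before the insertion cannot be pushed above its objective height, because by the sorted order maintained by the algorithm (Lemma~\ref{lemma:ordering_feasibility}) the item it receives is no larger than the smallest item of any preceding bin, so cases~2 and~3a terminate at once (case~3b adds at most one further shift, into the right-most bin of a newly created, otherwise empty buffer group). Hence the chain of recursive calls stays inside full bins until it reaches the first buffer group to the left of the insertion point, and then visits at most $l$ bins of that buffer group before stopping. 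By Lemma~\ref{lemma:distance} there are at most $2l$ full groups between two consecutive buffer groups, so at most $2l\cdot l$ full bins are traversed; altogether this gives at most $2l^2+l+1=\mathcal{O}(l^2)$ shifting moves for an insertion.

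For a deletion the recursion again runs strictly leftward, repeatedly pulling the smallest item of the left neighbour into the current bin and stopping as soon as a bin stays full after losing that item (case~1) or the current bin lies in a buffer group whose left neighbour is empty (case~3a). The only way the recursion crosses a buffer group is case~3b, which deletes the (entirely empty) buffer group on the left and resumes in the block of full groups beyond it. The crucial point — the step I expect to be the main obstacle — is to show that case~3b fires only a constant number of times: once the first buffer group is removed, the two blocks of full groups on either side of it merge, so the next buffer group to the right of the group currently being processed is now at least $l$ full groups away (Lemma~\ref{lemma:distance} applied to the block that got merged in), which is exactly the condition that makes case~3b terminate; consequently at most two buffer groups are deleted (and at most one is inserted), and the recursion traverses at most two blocks of at most $2l$ full groups each, plus at most $l$ bins of each buffer group it passes through. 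As before this is $\mathcal{O}(l^2)$ shifting moves.

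Combining both cases and using $l=\mathcal{O}(\nicefrac{1}{\varepsilon})$ yields the claimed bound of $\mathcal{O}(\nicefrac{1}{\varepsilon^2})$ shifting moves per insertion or deletion. Apart from pinning down the deletion cascade, the remaining work is routine bookkeeping: confirming that buffer-group insertions and removals contribute no moves, and that no bin is visited twice by the (strictly leftward, hence monotone) recursion — both of which follow directly from the procedure descriptions together with Lemmas~\ref{lemma:ordering_feasibility} and~\ref{lemma:distance}.
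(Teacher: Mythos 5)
Your proof follows essentially the same approach as the paper's: you count shifting moves via recursive calls, observe the recursion is monotone leftward, use Lemma~\ref{lemma:distance} to bound the maximal run of consecutive full bins by $\mathcal{O}(l^2)$, and argue that buffer-group boundaries stop the cascade after a constant number of crossings, then plug in $l = \mathcal{O}(\nicefrac{1}{\varepsilon})$. The paper's own argument is a terser version of the same reasoning (it bounds the full-bin run by $(2l+1)\cdot l$ and notes that case~3 of insertion lands in an empty bin and case~3b of deletion fires at most once), so there is no substantive difference in approach.
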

	
}
\begin{proof}
	For both insertions and deletions, it is obvious that the number of shifting moves is equal to the number of recursive calls during the procedure.
	
	\textit{Insertion:} The recursive calls in case 1 of the algorithm only occur if the current bin $B_i$ is a full bin, which can be at most $(2l+1)\cdot l$ times directly after another, since
	the left-most bin of a buffer group is never full.
	The recursive call in case 3 is applied to an empty bin, since either it is part of the same buffer group as $B_i$ or a bin of a newly created buffer group.
	The number of recursive calls is thereby bounded by $\mathcal{O}(l^2)$.
	
	\textit{Deletion:} The recursive call in case 2 occurs at most $(2l+1)\cdot l$ times since this is the maximum number of full bins in a row.
	The recursive call in case 3b occurs at most once, since after the removal of the buffer group next to $B_i$, the distance between the buffer groups to the left
	and to the right of the group of $B_i$ is at least $2l-1$.
	Hence for the deletion, the number of recursive calls is also bounded by $\mathcal{O}(l^2)$.
	
	Since $l=\mathcal{O}(\nicefrac{1}{\varepsilon})$, the Lemma directly follows.
\end{proof}


\section{Description of the MP Algorithm}\label{appendix:mp}

Since~\cite{Ivkovic1996} is not freely available on the Internet and the algorithm is one essential component of our algorithm,
we give a more detailed description of the MP algorithm in addition to the general idea given in Section~\ref{sec:large}.

In addition to the regular packing, the algorithm has an auxiliary storage to which items may be temporarily moved.
At the end of an insert or delete operation, only a constant number of items remains in the auxiliary storage and is packed into
at most 2 bins.
For an insertion of an item $a$, it is simply added to the auxiliary storage and a procedure to clear the storage is called.
For a deletion of $a$, all items in the same bin as $a$ are removed from the regular packing and added to the auxiliary storage.
Then the procedure to clear the storage is called.
This procedure works as follows:
\begin{compactenum}
	\item Every $\itemB$ item from the auxiliary storage is inserted into the regular packing. The thoroughness property is maintained by successively searching
	for fitting $\itemL$ and then $\itemS$ items in bins of a lower type to pair with the new $\itemB$ item. The remaining items from the bins from which the
	$\itemL$ or $\itemS$ item was removed are moved to the auxiliary storage.
	\item $\itemL$ and $\itemS$ items from the auxiliary storage are paired with $\itemB$ bins from the regular packing whenever possible. Bins of the same or higher type
	are not changed in the process, i.e. an $\itemL$ item can only be paired with a $\itemB$ item of a $\binBS$ or $\binB$ bin. Other items from the bin in which these items
	are inserted are moved to the auxiliary storage.
	\item As long as there are at least two $\itemL$ items in the auxiliary storage, they are inserted in a new bin and potentially paired with an $\itemS$ item either from
	the auxiliary storage or from an existing $\binSSS$ bin to form an $\binLLS$ bin. If no fitting $\itemS$ item exists, an $\binLL$ bin is created.
	If an $\itemS$ item is taken from a regular bin, the remaining items are moved to the auxiliary storage.
	\item As long as there are at least three $\itemS$ items in the auxiliary storage, new bins of type $\binSSS$ are formed with these items and inserted into the regular packing.
	At the end of this step, the auxiliary storage contains at most one $\itemL$ item and two $\itemS$ items which can be packed into at most two bins.
	\item All remaining $\itemO$ items in the auxiliary storage are moved to the regular packing in a first fit manner. This implies that bins that contain any other item type than
	$\itemO$ are prioritized over bins that exclusively contain $\itemO$ items.
\end{compactenum}


\section{Proofs from Section~\ref{sec:large}}

{
	\renewcommand{\thetheorem}{\ref{lemma:optstructure}}
	
	\begin{lemma}
		Let $\ALG$ be the packing of the MP algorithm of a set of major items.
		For an optimal solution $\OPT$ of a packing of the same items, the following properties can be assumed without increasing the number of used bins in $\OPT$:
		\begin{compactenum}
			\item $\OPT$ does not pack a $\binBS$ bin containing a $\itemB$ item which is part of a $\binBL$ bin in $\ALG$.
			\item $\OPT$ does not pack a $\binB$ bin containing a $\itemB$ item which is part of a $\binBL$ bin in $\ALG$.
		\end{compactenum}
	\end{lemma}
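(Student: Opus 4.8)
The plan is an exchange (augmenting-path) argument that converts an arbitrary optimal packing into one satisfying both properties without using extra bins. Recall that we have already reduced to instances without $\itemO$ items, so in $\OPT$ a bin holding a $\itemB$ item is $\{b\}$, $\{b,s\}$, or $\{b,\ell\}$ (a $\itemB$ item of size $>\nicefrac{1}{2}$ leaves room for at most one further $\itemL$ item or one further $\itemS$ item, and for no $\itemO$ item). Since a $\binB$ bin and a $\binBS$ bin contain no $\itemL$ item, it suffices to rearrange $\OPT$ so that \emph{every} $\itemB$ item lying in a $\binBL$ bin of $\ALG$ shares its $\OPT$ bin with some $\itemL$ item; both claimed properties then follow simultaneously.

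Let $\mathcal{B}$ be the set of these $\itemB$ items, and for $b\in\mathcal{B}$ let $L(b)$ be the $\itemL$ item paired with $b$ in $\ALG$; then $L(\cdot)$ is injective (distinct $\binBL$ bins of $\ALG$ yield distinct partners) and $b+L(b)\le 1$. Let $N$ count the $b\in\mathcal{B}$ not sharing an $\OPT$ bin with an $\itemL$ item, and induct on $N$. If $N\ge 1$, pick such a $b_0$; its $\OPT$ bin $Z_0$ is $\{b_0\}$ or $\{b_0,s_0\}$ with $s_0$ an $\itemS$ item (no $\itemL$, no $\itemO$, and two $\itemS$ items would exceed $1$). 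Grow a sequence $b_0,b_1,\dots$: given $b_j$, let $Y_j$ be the $\OPT$ bin of $L(b_j)$; since $L(b_j)>\nicefrac{1}{3}$, if $Y_j$ contains a $\itemB$ item then $Y_j=\{b',L(b_j)\}$, and we set $b_{j+1}:=b'$ and continue when $b'\in\mathcal{B}$, otherwise stop and put $t:=j$. This sequence is simple: for a repetition $b_a=b_c$ with minimal $c$, if $a\ge 1$ then $L(b_{a-1})$ and $L(b_{c-1})$ both lie in the (single-$\itemL$-slot) $\OPT$ bin of $b_a$, forcing $L(b_{a-1})=L(b_{c-1})$ and hence $b_{a-1}=b_{c-1}$ by injectivity, contradicting minimality; and if $a=0$ then $L(b_{c-1})$ would lie in $Z_0$, which contains no $\itemL$ item. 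Hence the sequence is finite.

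Now repack $\OPT$ as follows: delete $L(b_t)$ from $Y_t$; for $j=t,t-1,\dots,1$ replace $L(b_{j-1})$ by $L(b_j)$ inside $b_j$'s $\OPT$ bin $\{b_j,L(b_{j-1})\}$, which stays feasible because $b_j+L(b_j)\le 1$; finally add $L(b_0)$ to $Z_0$ (first evicting $s_0$ if present) and drop the evicted $s_0$ into $Y_t$, which is legal since $s_0\le\nicefrac{1}{3}<L(b_t)$ so the load of $Y_t$ does not grow. The bins $Z_0,Y_0,\dots,Y_t$ are pairwise distinct ($Z_0$ has no $\itemL$ item whereas every $Y_j$ does; $Y_t$ has no $\mathcal{B}$-$\itemB$ item whereas every earlier $Y_j$ does), so this is a well-defined repacking that creates no new bin. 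Afterwards $b_0$ is paired with $L(b_0)$, every $b_j$ with $j\ge 1$ is still paired with an $\itemL$ item, and the only downgraded bin is $Y_t$, whose $\itemB$ item (if any) lies outside $\mathcal{B}$ by construction; thus no $\mathcal{B}$-item newly loses its $\itemL$ partner and $N$ drops by exactly one. Iterating until $N=0$ yields the desired $\OPT$ with the same number of bins.

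The main obstacle is precisely this bookkeeping: proving the exchange chain can neither cycle nor revisit, and that pushing the $\itemL$ items one notch up the chain never spawns a fresh $\binBS$ or $\binB$ bin violating the properties. Termination rests entirely on injectivity of $L(\cdot)$ together with the chain starting at a bin containing no $\itemL$ item; once $\itemO$ items are excluded, every remaining point (feasibility of each altered bin, distinctness of the touched bins) is a direct size comparison — which is exactly why the earlier reduction to $\itemO$-free instances is needed here.
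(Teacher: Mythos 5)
Your proof is correct, but it takes a genuinely different and considerably heavier route than the paper. The paper handles each violation with a single local exchange: if $b$, packed with $l=L(b)$ in a $\binBL$ bin of $\ALG$, sits in a $\binBS$ bin $\{b,s\}$ of $\OPT$, swap $s$ and $l$; if $b$ sits alone in a $\binB$ bin, move $l$ in. Each such step pairs $b$ with its \emph{specific} $\ALG$-partner $L(b)$, and because $L(\cdot)$ is injective the $\itemB$ item that shared $l$'s former $\OPT$ bin (if any) was already not paired with its own $L(\cdot)$, so the count of $\itemB$ items from $\binBL$ bins of $\ALG$ that $\OPT$ fails to pack with their own $L(\cdot)$ strictly decreases; that potential (left implicit in the paper) is what makes the iteration terminate. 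You instead enforce only the weaker invariant ``paired with \emph{some} $\itemL$ item,'' which a single swap does not monotonically improve --- the bin emptied of $l$ may itself become a fresh $\binBS$ or $\binB$ bin holding such a $\itemB$ item --- so you compensate with an explicit augmenting chain $b_0, L(b_0), b_1, \dots$, a no-cycle argument, and a coordinated push of $\itemL$ items down the whole chain together with the $s_0$-eviction into $Y_t$. Both arguments are sound; the paper's buys brevity and locality while leaving the routine termination to the reader, whereas yours buys a fully explicit termination proof at the cost of the chain bookkeeping. Had you aimed directly at pairing each $b$ with its own $L(b)$ rather than with ``some'' $\itemL$ item, the chain machinery would collapse to the paper's one-step swap.
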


}
\begin{proof}We investigate both properties separately:
	\begin{compactenum}
		\item
		Assume the optimal solution places an $\itemS$ item $s$ together with a $\itemB$ item $b$ which is put together with an $\itemL$ item $l$ in a $\binBL$ bin in $\ALG$.
		Exchanging the positions of $s$ and $l$ in $\OPT$ yields a feasible solution with the same number of bins: Item $l$ fits together with $b$, because it is packed together in $\ALG$ and $size(s)<size(l)$ implies that $s$ fits into every bin in which $l$ was located before.
		
		\item
		Assume the optimal solution does have a $\binB$ bin with a $\itemB$ item $b$ which is combined with an $\itemL$ item $l$ in a $\binBL$ bin in $\ALG$.
		Placing $l$ together with $b$ does not increase the number of used bins in $\OPT$ since the two items are packed together in $\ALG$ and therefore they fit together in one bin.
	\end{compactenum}
\end{proof}

{
	\renewcommand{\thetheorem}{\ref{claim:boundB}}
	
	\begin{lemma}
		$
		\NBA{\binBL}+\NBA{\binBS}+\NBA{\binB}=\NBO{\binBL}+\NBO{\binBS}+\NBO{\binB}
		$
	\end{lemma}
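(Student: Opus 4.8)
The goal is to show that $\ALG$ and $\OPT$ agree on the total number of bins containing a $\itemB$ (big) item, i.e.\ $\NBA{\binBL}+\NBA{\binBS}+\NBA{\binB}=\NBO{\binBL}+\NBO{\binBS}+\NBO{\binB}$. The plan is to observe that since every $\itemB$ item has size in $(\nicefrac12,1]$, no two $\itemB$ items can share a bin, so in \emph{any} feasible packing the number of bins containing a $\itemB$ item equals exactly the number of $\itemB$ items in the instance. Hence both sides of the claimed equality are equal to the number of $\itemB$ items, and the statement is immediate.

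First I would make precise that a bin of type $\binBL$, $\binBS$, or $\binB$ contains exactly one $\itemB$ item by definition of the bin-type naming convention (the name lists the $\itemB$, $\itemL$, $\itemS$ items in the bin, and a bin cannot hold two items each larger than $\nicefrac12$). Conversely, every $\itemB$ item in the instance lies in some bin, and that bin therefore has one of the types $\binBL$, $\binBS$, $\binB$ (possibly after accounting for the negligible leftover types $\binBS$'s degenerate relatives — but in fact a bin with a $\itemB$ item together with only $\itemO$ items is still counted as type $\binB$, since $\itemO$ items are ignored in the type name and, as argued in the main text, we may assume the input has no $\itemO$ items anyway). So in $\ALG$ the quantity $\NBA{\binBL}+\NBA{\binBS}+\NBA{\binB}$ equals the number $n_\itemB$ of $\itemB$ items in the instance, and likewise $\NBO{\binBL}+\NBO{\binBS}+\NBO{\binB}=n_\itemB$ in $\OPT$.

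Combining these two counts gives the desired equality directly. There is essentially no obstacle here beyond being careful about the bookkeeping conventions: in particular one should note that the additive constant of $2$ coming from the stray bin types $\binLS,\binL,\binSS,\binS$ plays no role because none of those types contains a $\itemB$ item, and the potential $\itemO$ items do not affect which bins are counted as containing a $\itemB$ item. I would close by remarking that this is the only one of the four comparison lemmas (Lemmas~\ref{claim:boundB}--\ref{claim:boundS2}) that does not require the thoroughness property from Lemma~\ref{lemma:thoroughness}; it is purely a consequence of the size constraint that $\itemB$ items cannot be co-packed.
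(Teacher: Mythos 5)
Your argument is correct and is essentially the same as the paper's: both observe that no two $\itemB$ items fit in one bin, so the count of bins containing a $\itemB$ item equals the number of $\itemB$ items in any feasible packing, including both $\ALG$ and $\OPT$. The paper states this in two sentences; your extra bookkeeping about $\itemO$ items and the stray types $\binLS,\binL,\binSS,\binS$ is harmless but not needed.
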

	
}
\begin{proof}
	Two $\itemB$ items cannot be in the same bin by definition since their sizes are larger than $\frac{1}{2}$. Therefore, $\ALG$ and $\OPT$ have the same number of bins which involves a $\itemB$ item.
\end{proof}

{
	\renewcommand{\thetheorem}{\ref{claim:boundL}}
	
	\begin{lemma}
		$
		\NBA{\binBL}+\NBO{\binLSS} \geq 2 \left(\NBA{\binLLS,\binLL} - \NBO{\binLLS, \binLL} \right)
		$
	\end{lemma}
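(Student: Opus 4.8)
The plan is to reduce the statement to the inequality $\NBO{\binBL}\le 2\,\NBA{\binBL}$ by counting $\itemL$ items, and then to derive that inequality from thoroughness together with Lemma~\ref{lemma:optstructure}. Since $\ALG$ and $\OPT$ pack the same set of major items, they contain the same number of $\itemL$ items. In $\ALG$ every $\itemL$ item lies in a bin of type $\binBL$ (one $\itemL$ per bin) or of type $\binLLS$ or $\binLL$ (two $\itemL$ per bin), except for the at most two additional bins of type $\binLS$ or $\binL$. In $\OPT$ every $\itemL$ item lies in a bin of type $\binBL$, $\binLLS$, $\binLL$ (as above) or $\binLSS$ (one $\itemL$ per bin), again up to the $\mathcal{O}(1)$ bins of type $\binLS$ or $\binL$. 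Counting $\itemL$ items therefore gives
\[
\NBA{\binBL}+2\NBA{\binLLS,\binLL}=\NBO{\binBL}+2\NBO{\binLLS,\binLL}+\NBO{\binLSS},
\]
that is, $2\bigl(\NBA{\binLLS,\binLL}-\NBO{\binLLS,\binLL}\bigr)=\NBO{\binBL}-\NBA{\binBL}+\NBO{\binLSS}$, so it remains to show $\NBO{\binBL}\le 2\,\NBA{\binBL}$.

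For this I would partition the $\binBL$ bins of $\OPT$ according to where their unique $\itemB$ item sits in $\ALG$. If this $\itemB$ item belongs to a $\binBL$ bin of $\ALG$, then by Lemma~\ref{lemma:optstructure} it is not placed in a $\binBS$ or $\binB$ bin of $\OPT$; since a $\itemB$ item can only occupy a bin of type $\binBL$, $\binBS$ or $\binB$, it indeed lies in a $\binBL$ bin of $\OPT$, and distinct such $\itemB$ items come from distinct $\binBL$ bins of $\ALG$, so there are at most $\NBA{\binBL}$ bins of this kind. Otherwise the $\itemB$ item $b$ of the $\OPT$ bin, and hence the bin of $\ALG$ containing it, has type $\binBS$ or $\binB$, both lower than $\binBL$ in the order $\binBL>\binBS>\binB>\binLLS>\binLL>\binLSS>\binSSS$. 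Let $l$ be the $\itemL$ partner of $b$ in $\OPT$; since $b$ and $l$ fit into one bin, if $l$ belonged in $\ALG$ to a bin of type $\binLLS$ or $\binLL$ --- also lower than $\binBL$ --- then items of two lower-type bins would form a $\binBL$ bin, contradicting the thoroughness of $\binBL$ bins of $\ALG$ (Lemma~\ref{lemma:thoroughness}). Hence $l$ lies in a $\binBL$ bin of $\ALG$ (up to the $\mathcal{O}(1)$ extra bins), and distinct $l$'s again come from distinct $\binBL$ bins of $\ALG$, contributing at most $\NBA{\binBL}$ further $\OPT$ bins. Adding the two estimates yields $\NBO{\binBL}\le 2\,\NBA{\binBL}$, which together with the identity above gives the claim.

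The delicate point will be the thoroughness step: one has to check that every bin type other than $\binBL$ that can contain a $\itemB$ item, and every bin type other than $\binBL$ that can contain an $\itemL$ item, is strictly below $\binBL$ in the priority order, so that Lemma~\ref{lemma:thoroughness} applies, and one has to ensure that the two charging arguments (via the $\itemB$ item and via the $\itemL$ item of an $\ALG$-$\binBL$ bin) together charge each such bin at most twice. Finally, the contributions of the at most two bins of the extra types $\binLS,\binL,\binSS,\binS$ in either solution must be tracked and absorbed into the additive constants that Section~\ref{sec:large} already carries along.
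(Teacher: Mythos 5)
Your argument is correct and reaches the same conclusion, but by a genuinely different route. The paper's proof charges directly: it observes that at least $2(\NBA{\binLLS,\binLL}-\NBO{\binLLS,\binLL})$ of the $\itemL$ items sitting in $\binLL$ or $\binLLS$ bins of $\ALG$ are \emph{not} in $\binLL$ or $\binLLS$ bins of $\OPT$, hence lie in $\OPT$'s $\binBL$ or $\binLSS$ bins; the $\binLSS$ bins absorb at most $\NBO{\binLSS}$ of them, and each of the remainder must be paired in $\OPT$ with a $\itemB$ item that already occupies a $\binBL$ bin in $\ALG$ (by $\binBL$-thoroughness), giving the $\NBA{\binBL}$ term. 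You instead write down the global $\itemL$-count identity $\NBA{\binBL}+2\NBA{\binLLS,\binLL}=\NBO{\binBL}+2\NBO{\binLLS,\binLL}+\NBO{\binLSS}$ (which the paper never states) and observe that the lemma is therefore \emph{equivalent} to the cleaner intermediate claim $\NBO{\binBL}\le 2\NBA{\binBL}$, which you then prove by a two-pronged charging: each $\OPT$ $\binBL$ bin is charged to an $\ALG$ $\binBL$ bin either via its $\itemB$ item (if that item is already in an $\ALG$ $\binBL$ bin) or, using $\binBL$-thoroughness, via its $\itemL$ item, and each $\ALG$ $\binBL$ bin is charged at most twice. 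Your reduction has the advantage of isolating the genuine content of the lemma as the self-contained statement $\NBO{\binBL}\le 2\NBA{\binBL}$, which is easier to sanity-check and is tight; the paper's version avoids the detour through the counting identity. Both proofs invoke $\binBL$-thoroughness in an essentially dual manner, and both carry the same $\mathcal{O}(1)$ slack from the residual bin types $\binLS,\binL,\binSS,\binS$, which you flag explicitly and the paper suppresses silently (consistent with its treatment elsewhere in Section~\ref{sec:large}). One small remark: your appeal to Lemma~\ref{lemma:optstructure} in the first branch of the partition is unnecessary --- you are already conditioning on the $\OPT$ bin being of type $\binBL$, so no structural assumption on $\OPT$ is needed there; only the injectivity of the map sending the $\OPT$ bin to the $\ALG$ $\binBL$ bin containing its $\itemB$ item is used.
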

	
}
\begin{proof}
	Consider the number of $\itemL$ items in $\binLL$ and $\binLLS$ bins in both solutions.
	For each bin of type $\binLL$ or $\binLLS$ that is part of $\ALG$, but not of $\OPT$,
	two $\itemL$ items have to be in other bin types in the optimal solution.

	These items can be either in one $\binLSS$ bin each or they are paired with a $\itemB$ item to form a $\binBL$ bin.
	However, $\ALG$ is $\binBL$-thorough and hence has a maximum number of $\binBL$ bins given the pairings of $\itemB$ and $\itemL$ items in $\ALG$ are fixed.
	The additional $L$ items hence must be paired in $\OPT$ with items already paired in $\ALG$ which implies the existence of a $\binBL$ bin in $\ALG$ for each of these items.
\end{proof}

{
	\renewcommand{\thetheorem}{\ref{claim:boundS}}
	
	\begin{lemma}
		$
		\NBA{\binBS} + \NBA{\binLL} + 2\NBO{\binLSS}\geq 3 \left(\NBA{\binSSS} - \NBO{\binSSS} \right)
		$
	\end{lemma}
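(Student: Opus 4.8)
The plan is to count $\itemS$ items in both packings, in the spirit of Lemma~\ref{claim:boundL}. Since $\ALG$ and $\OPT$ pack the same set of items and, ignoring the additive constant caused by the at most two bins of the non-standard types $\binLS,\binL,\binSS,\binS$, the $\itemS$ items of $\ALG$ appear only in bins of type $\binBS,\binLLS,\binSSS$ (recall that MP never opens a $\binLSS$ bin) while those of $\OPT$ appear in bins of type $\binBS,\binLLS,\binLSS,\binSSS$, counting $\itemS$ items gives
\begin{align*}
\NBA{\binBS}+\NBA{\binLLS}+3\NBA{\binSSS}=\NBO{\binBS}+\NBO{\binLLS}+2\NBO{\binLSS}+3\NBO{\binSSS}.
\end{align*}
Hence $3(\NBA{\binSSS}-\NBO{\binSSS})=(\NBO{\binBS}-\NBA{\binBS})+(\NBO{\binLLS}-\NBA{\binLLS})+2\NBO{\binLSS}$, so it suffices to establish the two auxiliary bounds $\NBO{\binBS}\leq 2\NBA{\binBS}$ and $\NBO{\binLLS}\leq \NBA{\binLLS}+\NBA{\binLL}$.

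For the first bound I would charge each $\binBS$ bin of $\OPT$, with $\itemB$ item $b$ and $\itemS$ item $s$, to a $\binBS$ bin of $\ALG$. By Lemma~\ref{lemma:optstructure} (part~1), $b$ is not in a $\binBL$ bin of $\ALG$, so $b$ lies either in a $\binBS$ bin of $\ALG$ (then charge via $b$) or in a $\binB$ bin of $\ALG$. In the latter case that bin has type below $\binBS$; since $s$ fits next to $b$ and $s$ lies in a $\binBS$, $\binLLS$ or $\binSSS$ bin of $\ALG$, it must lie in a $\binBS$ bin, for otherwise $b$'s $\binB$ bin together with $s$'s $\binLLS$ or $\binSSS$ bin (both of type below $\binBS$) could be reassembled into a $\binBS$ bin, contradicting $\binBS$-thoroughness of $\ALG$ (Lemma~\ref{lemma:thoroughness}); so charge via $s$. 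Each $\binBS$ bin of $\ALG$ is charged at most once through its $\itemB$ item and at most once through its $\itemS$ item, giving $\NBO{\binBS}\leq 2\NBA{\binBS}$.

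For the second bound I would count $\itemL$ items in the same way, obtaining $\NBA{\binBL}+2\NBA{\binLLS}+2\NBA{\binLL}=\NBO{\binBL}+2\NBO{\binLLS}+2\NBO{\binLL}+\NBO{\binLSS}$ (again up to the additive constant). By both parts of Lemma~\ref{lemma:optstructure}, an $\itemB$ item lying in a $\binBL$ bin of $\ALG$ lies in $\OPT$ neither alone nor together with an $\itemS$ item, hence lies in a $\binBL$ bin of $\OPT$; therefore $\NBO{\binBL}\geq\NBA{\binBL}$. Substituting this into the $\itemL$-count and dropping the nonnegative terms $2\NBO{\binLL}$ and $\NBO{\binLSS}$ yields $2\NBO{\binLLS}\leq 2\NBA{\binLLS}+2\NBA{\binLL}$, which is the second bound; combining the two bounds with the rearranged $\itemS$-count finishes the argument.

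The awkward part should be the $\binLLS$ bins of $\OPT$: chasing by hand the two $\itemL$ partners of a relevant $\itemS$ item only shows that these $\itemL$ items do not share a $\binLL$ bin of $\ALG$, which is not enough, so the bound on $\NBO{\binLLS}$ is instead obtained indirectly via $\itemL$-item conservation together with $\NBO{\binBL}\geq\NBA{\binBL}$. The only remaining care is to check that the slack coming from the non-standard bin types $\binLS,\binL,\binSS,\binS$ stays an additive constant, which the asymptotic approximation ratio absorbs.
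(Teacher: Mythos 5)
Your argument is correct, and while it shares with the paper the overall strategy of counting $\itemS$ items and invoking $\binBS$-thoroughness together with Lemma~\ref{lemma:optstructure}, the way you execute it is noticeably different. The paper restricts attention to the ``extra'' $\itemS$ items (those sitting in $\binSSS$ bins of $\ALG$ but not of $\OPT$) and charges them case-by-case: those landing in $\binBS$ bins of $\OPT$ are charged, via their $\itemB$ partner and $\binBS$-thoroughness, to distinct $\binBS$ bins of $\ALG$; those in $\binLLS$ bins of $\OPT$ are bounded by a terse argument about ``available'' $\itemL$ items (referencing Lemma~\ref{lemma:optstructure}), giving the $\NBA{\binLL}$ term; those in $\binLSS$ bins give $2\NBO{\binLSS}$. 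You instead write down the exact $\itemS$-item conservation identity, isolate $3(\NBA{\binSSS}-\NBO{\binSSS})=(\NBO{\binBS}-\NBA{\binBS})+(\NBO{\binLLS}-\NBA{\binLLS})+2\NBO{\binLSS}$, and bound the two differences separately: $\NBO{\binBS}\le 2\NBA{\binBS}$ by a two-pronged charge (via either the $\itemB$ or the $\itemS$ item of each optimal $\binBS$ bin), and $\NBO{\binLLS}\le\NBA{\binLLS}+\NBA{\binLL}$ by $\itemL$-item conservation combined with $\NBA{\binBL}\le\NBO{\binBL}$, which follows from both parts of Lemma~\ref{lemma:optstructure}. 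What this buys you is that the $\binLLS$ step avoids the paper's somewhat delicate accounting of which $\itemL$ items are ``available'' and replaces it by a purely arithmetic deduction from the conservation identity; the trade-off is that you must bound the full counts $\NBO{\binBS}$ and $\NBO{\binLLS}$ rather than only the ``extra'' portion, which forces the factor-2 bound on $\NBO{\binBS}$ (this is harmless since the identity cancels the extra $\NBA{\binBS}$). Both proofs carry the same implicit additive slack from the at most two bins of the leftover types $\binLS,\binL,\binSS,\binS$, which the paper also silently absorbs into the asymptotic guarantee, so your remark on that point matches the paper's own handling.
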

	
}
\begin{proof}
	We consider the number of $\itemS$ items in $\binSSS$ bins in both solutions.
	Again for each bin of type $\binSSS$ that is part of $\ALG$, but not of $\OPT$,
	three $\itemS$ items have to be in other bin types in the optimal solution.

	For sorting the items into $\itemB$ bins we use a similar argument to the proof of Lemma~\ref{claim:boundL}.
	Since $\ALG$ is $\binBS$-thorough the additional $\itemS$ items can only be together with $\itemB$ items which are combined with $\itemS$ items in $\ALG$.
	Recall that Lemma~\ref{lemma:optstructure} implies that the $\itemS$ items do not go with $\itemB$ items which were in $\binBL$ bins in $\ALG$.

	The $\itemS$ items can also be part of bins of type $\binLLS$.
	The number of $\itemL$ items available are upper bounded by the number of $\itemL$ items in
	$\binLL$ bins in $\ALG$ due to Lemma~\ref{lemma:optstructure}.

	Finally, the items can be part of $\binLSS$ bins in the optimal solution, where each $\binLSS$ bin can contain at most 2 of these items.
\end{proof}

{
	\renewcommand{\thetheorem}{\ref{claim:boundS2}}
	
	\begin{lemma}
		$
		\NBA{\binBS} + \NBO{\binLLS} + 2\NBO{\binLSS}\geq 3 \left(\NBA{\binSSS} - \NBO{\binSSS} \right)
		$
	\end{lemma}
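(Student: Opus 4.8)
The plan is to follow the argument of Lemma~\ref{claim:boundS} almost verbatim: the right-hand side counts $\itemS$ items that $\ALG$ packs into $\binSSS$ bins but that $\OPT$ is forced to place elsewhere, and we bound how many of these ``extra'' $\itemS$ items can fit into the $\binBS$, $\binLLS$, and $\binLSS$ bins of $\OPT$. The only difference is that instead of tracing the available $\itemL$ items to bound the $\binLLS$ contribution (which gave the term $\NBA{\binLL}$ in Lemma~\ref{claim:boundS}), I would use the trivial observation that each $\binLLS$ bin of $\OPT$ holds exactly one $\itemS$ item, which directly gives the term $\NBO{\binLLS}$.

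In detail: if $\NBA{\binSSS}\le\NBO{\binSSS}$ the right-hand side is non-positive and there is nothing to prove, so assume $\NBA{\binSSS}>\NBO{\binSSS}$. The $\itemS$ items in $\binSSS$ bins of $\ALG$ form a set of size $3\NBA{\binSSS}$, while the $\itemS$ items in $\binSSS$ bins of $\OPT$ form a set of size $3\NBO{\binSSS}$; since both are subsets of the fixed multiset of all $\itemS$ items in the input, at least $3(\NBA{\binSSS}-\NBO{\binSSS})$ of the former lie in non-$\binSSS$ bins of $\OPT$. Call these the \emph{extra} $\itemS$ items. Recalling that we may assume there are no $\itemO$ items and that the bin types $\binLS,\binSS,\binS$ only contribute an additive constant (as in the proof of Lemma~\ref{claim:boundS}), each extra $\itemS$ item sits in a $\binBS$, $\binLLS$, or $\binLSS$ bin of $\OPT$. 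Each $\binLSS$ bin holds at most two such items, giving at most $2\NBO{\binLSS}$; each $\binLLS$ bin holds exactly one, giving at most $\NBO{\binLLS}$.

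The only substantive step is bounding the extra $\itemS$ items placed in $\binBS$ bins of $\OPT$, which is handled exactly as in Lemma~\ref{claim:boundS}: let $s$ be such an item, sitting in a $\binBS$ bin of $\OPT$ together with a $\itemB$ item $b$. By Lemma~\ref{lemma:optstructure}(1), $b$ is not part of a $\binBL$ bin in $\ALG$; and $b$ cannot be alone in a $\binB$ bin in $\ALG$, for then the $\binB$ bin $\{b\}$ and the $\binSSS$ bin of $\ALG$ containing $s$ — both of type strictly below $\binBS$ — could be recombined into a $\binBS$ bin (they fit, since $b$ and $s$ lie together in $\OPT$), contradicting the $\binBS$-thoroughness of $\ALG$ from Lemma~\ref{lemma:thoroughness}. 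Hence $b$ lies in a $\binBS$ bin of $\ALG$, and since each of the $\NBA{\binBS}$ such $\itemB$ items is matched with at most one $\itemS$ item in $\OPT$, at most $\NBA{\binBS}$ extra items land in $\binBS$ bins of $\OPT$. Summing the three bounds gives $\NBA{\binBS}+\NBO{\binLLS}+2\NBO{\binLSS}\ge 3(\NBA{\binSSS}-\NBO{\binSSS})$. I do not anticipate a genuine obstacle; the points requiring care are that the ``extra'' $\itemS$ items must be treated as concrete items (so that assigning each to the unique $\OPT$ bin containing it is legitimate) and that the thoroughness application in the $\binBS$ step correctly uses two bins of type strictly below $\binBS$, namely a $\binB$ bin and an $\binSSS$ bin, which is precisely what Lemma~\ref{lemma:thoroughness} provides.
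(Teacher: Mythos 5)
Your proof is correct and follows essentially the same route as the paper: the paper's (very terse) argument for this lemma reuses the $\binBS$ and $\binLSS$ bounds from the proof of Lemma~\ref{claim:boundS} and replaces the $\binLLS$ bound by the direct observation that each $\binLLS$ bin of $\OPT$ holds exactly one $\itemS$ item, which is precisely what you do. Your write-up simply spells out the $\binBS$-thoroughness step in full rather than delegating it to the analogous argument in Lemma~\ref{claim:boundS}.
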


}
\begin{proof}
	We take another look at the $\itemS$ items similar to Lemma~\ref{claim:boundS}.
	The combination with $\itemB$ items and the existence in $\binLSS$ bins is counted as before.
	For the number of additional $\binLLS$ bins in $\OPT$, we directly bound their amount by $\NBO{\binLLS}$.
\end{proof}

{
	\renewcommand{\thetheorem}{\ref{lemma:large_shifting_moves}}
	
	\begin{lemma}
		The number of shifting moves (regarding major items) is bounded by $\mathcal{O}(\nicefrac{1}{\varepsilon})$.
	\end{lemma}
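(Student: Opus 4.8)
The plan is to bound the number of shifting moves by the number of items that ever enter the auxiliary storage of the MP algorithm during one run of the clear-storage procedure (Appendix~\ref{appendix:mp}): an item that never enters the storage remains in its bin throughout, and every item that ends up in a different bin must have been moved there out of the storage. I would bound this count separately for the items of type $\itemB,\itemL,\itemS$ and for the items of type $\itemO$.

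Two size observations drive the argument. As $\itemB,\itemL,\itemS$ items all have size larger than $\nicefrac{1}{4}$, every bin holds at most three of them; and as every item in the major-item subproblem has size larger than $\delta=\nicefrac{\varepsilon}{15}$, every bin holds fewer than $\nicefrac{15}{\varepsilon}$ items in total. At the start of the procedure the storage contains either the single newly inserted item (on insertion) or the entire content of one bin (on deletion), hence at most three items of type $\itemB/\itemL/\itemS$ and $\mathcal{O}(\nicefrac{1}{\varepsilon})$ items of type $\itemO$.

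I would then walk through steps~1--4 of the procedure and argue that only $\mathcal{O}(1)$ items of type $\itemB/\itemL/\itemS$ ever enter the storage and, accordingly, only $\mathcal{O}(1)$ bins besides the one emptied in a deletion are modified. Step~1 handles at most one $\itemB$ item and pulls one fitting $\itemL$ or $\itemS$ item out of at most one bin, releasing at most two further $\itemB/\itemL/\itemS$ items. In step~2 each placement of an $\itemL$ item into a $\binBS$ bin frees one $\itemS$ item, but since a $\itemB$ item together with two $\itemS$ items (and likewise a $\itemB$, an $\itemL$ and an $\itemS$ item) exceeds a unit bin, such a freed $\itemS$ item can subsequently only be put into a $\binB$ bin, which releases no further $\itemB/\itemL/\itemS$ item; hence the number of $\itemL$ items in the storage strictly decreases with every $\itemL$-placement, only $\mathcal{O}(1)$ new $\itemS$ items are ever created, and step~2 modifies $\mathcal{O}(1)$ bins. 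In step~3 each new $\binLL$ or $\binLLS$ bin consumes two $\itemL$ items and releases at most two $\itemS$ items (when an $\itemS$ item is taken from a $\binSSS$ bin), so again the number of $\itemL$ items strictly decreases and only $\mathcal{O}(1)$ bins are affected, while step~4 merely opens $\mathcal{O}(1)$ new $\binSSS$ bins. Altogether at most $\mathcal{O}(1)$ items of type $\itemB/\itemL/\itemS$ reach the storage, and steps~1--4 touch at most $\mathcal{O}(1)$ bins.

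Finally, an $\itemO$ item enters the storage only as (part of) the content of the bin emptied in a deletion, as a leftover item of a bin modified in steps~1--3, or as the single inserted item on the insertion of an $\itemO$ item; by the previous paragraph this involves only $\mathcal{O}(1)$ bins, each holding fewer than $\nicefrac{15}{\varepsilon}$ items, so at most $\mathcal{O}(\nicefrac{1}{\varepsilon})$ items of type $\itemO$ ever enter the storage, and step~5 places each of them into a bin by first fit without displacing any other item. Summing the two bounds, the number of items that enter the auxiliary storage, and hence the number of shifting moves, is $\mathcal{O}(1)+\mathcal{O}(\nicefrac{1}{\varepsilon})=\mathcal{O}(\nicefrac{1}{\varepsilon})$. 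I expect the delicate point to be the cascade analysis of steps~2 and~3: one must use the sizes (equivalently, the fixed list of admissible bin types) to show that displacing an $\itemS$ item out of a $\binBS$ or a $\binSSS$ bin does not set off an unbounded chain of further reshuffles of $\itemB/\itemL/\itemS$ items, which is exactly what keeps the number of touched bins -- and therefore the $\itemO$-contribution -- at $\mathcal{O}(\nicefrac{1}{\varepsilon})$ rather than larger.
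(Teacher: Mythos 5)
Your proposal is correct and reaches the same final bound, but you take a genuinely different and more self-contained route than the paper. The paper's proof is a two-sentence argument that simply cites Ivkovic's thesis for the fact that only a constant number of bins change per operation, and then multiplies by the $\nicefrac{15}{\varepsilon}$ items-per-bin bound. You instead reprove the constant-bin-modification property from scratch by walking through the five steps of the clear-storage procedure in Appendix~\ref{appendix:mp}. The crux of your argument — tracking that the number of $\itemL$ items in the auxiliary storage strictly decreases with every $\itemL$-placement in steps~2 and~3, and that a displaced $\itemS$ item cannot displace any further $\itemB/\itemL/\itemS$ item because $\binB\itemS\itemS$ and $\binB\itemL\itemS$ both exceed unit size — is precisely the invariant that makes the cascade terminate after $\mathcal{O}(1)$ bin modifications. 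This is a perfectly valid way to close the argument, and it is more informative for a reader who does not have access to~\cite{Ivkovic1996} (which, as the paper notes, is not freely available). The trade-off is length: the paper's version is terse because it offloads the combinatorial work to the cited source, while yours carries the full burden of verifying the termination of the reshuffling chain. Both then finish identically: constant many bins times $\mathcal{O}(\nicefrac{1}{\varepsilon})$ items per bin gives the claimed $\mathcal{O}(\nicefrac{1}{\varepsilon})$ shifting moves.
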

	
}
\begin{proof}
	Ivkovic~\cite{Ivkovic1996} showed that the number of bins which need to be changed during an insertion or deletion procedure is a constant.
	Since we only treat items with size at least $\nicefrac{\varepsilon}{15}$ with this algorithm, the number of shifting moves
	can be upper bounded by $\mathcal{O}\left(\nicefrac{1}{\varepsilon}\right)$.
\end{proof}

\section{Proofs from Section~\ref{sec:Combination}}

{
	\renewcommand{\thetheorem}{\ref{lemma:shifting_moves}}
	
	\begin{lemma}
		The algorithm uses at most $\mathcal{O}(\nicefrac{1}{\varepsilon^2})$ shifting moves for each insertion or deletion.
	\end{lemma}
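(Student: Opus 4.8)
I would split the argument according to whether the item inserted or deleted is minor or major, and in each case bound the moves caused by the dedicated sub-routine and the moves caused by repairing the greedy combination.

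\emph{Minor item.} Here Lemma~\ref{lemma:small_shifting_moves} already bounds the number of moves performed inside the minor-item algorithm of Section~\ref{sec:small} by $\mathcal{O}(\nicefrac{1}{\varepsilon^{2}})$; every such move relocates a minor item from one (physical) bin of the combined packing to another, so it counts as a shifting move of the combined solution. Beyond inserting or deleting empty buffer bins --- which, containing no minor item, are never combined with a bin of major items and therefore do not affect the pairing --- this modification changes the number of non-empty minor bins of at most one type by at most one, and no non-empty minor bin changes its assigned type (types are fixed by the static composition of the bin groups). I would then show that re-running the greedy process changes only $\mathcal{O}(k)$ pairings: since the process treats the types $2,3,\dots,k$ in this order and the choice made for type $j$ depends only on the number of non-empty minor bins of types $\le j$ and on the sorted list of filling heights of the still-unpaired major bins, inserting or deleting one non-empty minor bin of type $j$ perturbs the choice of at most one major bin per type $j,j+1,\dots,k$, i.e.\ along a chain of length at most $k$.

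\emph{Major item.} The MP algorithm changes only a constant number of major bins (Ivkovic~\cite{Ivkovic1996}) and, by Lemma~\ref{lemma:large_shifting_moves}, performs $\mathcal{O}(\nicefrac{1}{\varepsilon})$ moves of major items. Each such elementary change to the sorted list of major filling heights again perturbs the greedy pairing along a chain over at most $k$ minor-bin types, so the combination is updated with $\mathcal{O}(k)$ re-pairings.

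\emph{Cost of a re-pairing --- the main obstacle.} The delicate point is that a single re-pairing must be realized with only a bounded number of shifting moves, even though a minor bin may hold an unbounded number of (arbitrarily small) minor items and hence must never be moved as a whole --- this is exactly the difficulty that rules out bundling. The plan is to keep every minor bin in its physical location and to realize the entire chain of re-pairings by relocating only major bins (a major bin is moved into, out of, or between the physical bins hosting the affected minor bins). Since every major item has size greater than $\nicefrac{\varepsilon}{15}$, a major bin contains fewer than $15/\varepsilon$ items, so each of the $\mathcal{O}(k)$ re-pairings costs $\mathcal{O}(\nicefrac{1}{\varepsilon})$ shifting moves, for a total of $\mathcal{O}(k/\varepsilon)=\mathcal{O}(\nicefrac{1}{\varepsilon^{2}})$ using $k=\mathcal{O}(\nicefrac{1}{\varepsilon})$. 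Adding the cost of the sub-routine --- $\mathcal{O}(\nicefrac{1}{\varepsilon^{2}})$ for a minor item and $\mathcal{O}(\nicefrac{1}{\varepsilon})$ for a major item --- gives $\mathcal{O}(\nicefrac{1}{\varepsilon^{2}})$ shifting moves in either case, as claimed.
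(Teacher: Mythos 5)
Your proof is correct and follows essentially the same strategy as the paper's: bound the sub-routine moves by Lemmas~\ref{lemma:small_shifting_moves} and~\ref{lemma:large_shifting_moves}, observe that the greedy combination needs only $\mathcal{O}(k)$ re-pairings (one per minor-bin type) because the perturbation propagates as a chain across types, and charge each re-pairing $\mathcal{O}(\nicefrac{1}{\varepsilon})$ moves since only major bins are physically relocated and each holds fewer than $\nicefrac{15}{\varepsilon}$ items. You spell out a few points the paper leaves implicit — that empty buffer bins never participate in a pairing, that no non-empty minor bin ever changes its type, and the explicit chain argument for the $\mathcal{O}(k)$ bound — but the decomposition and the key counting are the same.
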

	
}
\begin{proof}
	If a major item is inserted or deleted, the MP algorithm only changes a constant number of bins in the solution for major items (cf. Lemma~\ref{lemma:large_shifting_moves} and~\cite{Ivkovic1996}).
	Hence, for each type of bin of minor items, only a constant number of bins must be recombined.
	In total, $\mathcal{O}(k)$ bins of major items are reallocated.
	Since each bin with major items contains at most $\nicefrac{15}{\varepsilon}$ items,
	this implies the bound on the shifting moves.
	
	If a minor item is inserted or deleted, according to Lemma~\ref{lemma:small_shifting_moves} the number of shifting moves to update the solution for minor items is $\mathcal{O}(\nicefrac{1}{\varepsilon^2})$.
	This process adds or deletes at most one bin if we only count the bins which contain at least one minor item.
	As a consequence, for each type of bin with minor items, at most one bin with major items must be reallocated.
	Hence we obtain the same bound of $\mathcal{O}(\nicefrac{1}{\varepsilon^2})$ for the number of shifting moves as before.
\end{proof}

{
	\renewcommand{\thetheorem}{\ref{lemma:combstructure}}
	
	\begin{lemma}
		Let $B^{\text{min}}$ be a bin with minor items of type $j$.
		If $B^{\text{min}}$ is not combined with a bin of major items, all non-combined bins with
		major items have a filling height of at least $1-w_j$.
	\end{lemma}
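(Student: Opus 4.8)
The plan is to argue directly from the rule defining the greedy combination process. First I would dispose of the trivial case $j = 1$: since $w_1 = 1$, a type-$1$ bin has reserved space $1 - w_1 = 0$, the process never considers type-$1$ bins at all (it starts with type $2$), and every major bin trivially has filling height at least $0 = 1 - w_1$, so the statement holds. Hence in what follows I assume $j \ge 2$.

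The heart of the proof is to look at the instant during the combination process at which $B^{\text{min}}$ is considered; this happens in the phase that processes minor bins of type $j$. By the pairing rule, $B^{\text{min}}$ is paired with the still-unpaired major bin of smallest index --- equivalently, since the major list is sorted by decreasing filling height, the unpaired major bin of largest filling height --- among those of filling height at most $1 - w_j$, whenever at least one such bin is available at that instant. Since by hypothesis $B^{\text{min}}$ ends up \emph{not} combined, no such bin can be available at that instant: every major bin of filling height at most $1 - w_j$ must already have been paired, be it in an earlier phase (with a minor bin of type $< j$) or earlier in the type-$j$ phase (with another type-$j$ minor bin).

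To conclude, I would take an arbitrary non-combined major bin $B^{\text{maj}}$. Because the combination process only ever \emph{adds} pairings, a major bin that is uncombined at the end of the process was never paired at any point of the process; in particular $B^{\text{maj}}$ is still unpaired at the instant $B^{\text{min}}$ is considered. By the previous paragraph, an unpaired major bin at that instant cannot have filling height at most $1 - w_j$, so $B^{\text{maj}}$ has filling height strictly greater than $1 - w_j$, hence certainly at least $1 - w_j$. As $B^{\text{maj}}$ was arbitrary, this is exactly the claim. I do not anticipate a real obstacle; the only point requiring care is the bookkeeping that the set of unpaired major bins shrinks monotonically as the process advances, so that ``uncombined at the end'' upgrades to ``unpaired throughout'', which is what licenses the contradiction.
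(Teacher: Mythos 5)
Your proof is correct and follows the same approach as the paper, arguing directly from the greedy pairing rule that an uncombined major bin would have been paired with $B^{\text{min}}$ unless its filling height exceeded $1-w_j$. Your version is more careful than the paper's two-sentence proof — in particular your explicit observation that pairings are only added, so ``uncombined at the end'' upgrades to ``unpaired throughout'' — but the underlying argument is the same.
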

	
}
\begin{proof}
	If there is a bin $B^{\text{maj}}$ with major items which has not been combined then the algorithm
	would have attempted to combine it with $B^{\text{min}}$.
	Hence the filling height of $B^{\text{maj}}$ must be at least $1-w_j$.
\end{proof}

{
	\renewcommand{\thetheorem}{\ref{lemma:OPT_LL}}
	
	\begin{lemma}
		Let $\nonquartetLL\leq 2$. Then it holds that
		
		$\OPT\geq\frac{1}{\hat{z}_j}\cdot\left(\sum\limits_{\binVar\in\binSet}(1-z_j'\cdot F(\binVar))\cdot \NBA{\binVar}^{\geq(1-w_j);-Q} + (1-\frac{3}{4}z_j')\quartetLLSSS\right).$
	\end{lemma}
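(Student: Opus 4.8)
The plan is to bound the bracketed term $D:=\sum_{\binVar\in\binSet}\bigl(1-z_j'\,F(\binVar)\bigr)\NBA{\binVar}^{\geq(1-w_j);-Q}+\bigl(1-\tfrac34 z_j'\bigr)\quartetLLSSS$ above by $\hat z_j\OPT$. Recall $F(\binBL)=\tfrac56$, $F(\binBS)=F(\binSSS)=\tfrac34$, $F(\binB)=\tfrac12$, $F(\binLLS)=\tfrac{11}{12}$, $F(\binLL)=\tfrac23$, that $\ALG$ opens no $\binLSS$ bin, and that the reserved space $1-w_j=1-y_j$ lies in $(\tfrac12,\tfrac34)$ since $\tfrac14<y_k<y_j<\tfrac12$. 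Writing $P:=\sum_{\binVar\in\binSet}F(\binVar)\NBA{\binVar}^{\geq(1-w_j);-Q}+3\quartet$ (a lower bound on the total major workload, so $P\le\OPT$) and $N:=\sum_{\binVar\in\binSet}\NBA{\binVar}^{\geq(1-w_j);-Q}+4\quartet$ (at most $\ALG$'s number of bins with major items, since quartets contain $4\quartet$ bins), one has $D=N-z_j'P$, so the claim is equivalent to $N+z_j'(\OPT-P)\le z_k'\OPT$. Since $\binBL,\binBS,\binLLS,\binSSS$ bins have $F(\binVar)>1-w_j$, their counts in $N$ are the full non-quartet counts $\NBA{\binBL},\NBA{\binBS},\NBA{\binLLS},\nonquartetSSS$; the $\binLL$-term is bounded by the hypothesis $\nonquartetLL\le 2$ and contributes only a constant.

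The device that makes $\hat z_j=z_k'-z_j'$ appear is the identity $z_j'w_j=z_k'-1$ for $2\le j<k$, immediate from $z_j'=y_k/\bigl(y_j(1-y_k)\bigr)$, $z_k'=1/(1-y_k)$ and $w_j=y_j$; it says $\hat z_j=1-z_j'(1-w_j)$. Combined with Lemma~\ref{lemma:combstructure} (every non-combined bin with major items has filling height at least $1-w_j$), this gives $1-z_j'F(\binVar)\le\hat z_j$ for every type with $F(\binVar)\ge 1-w_j$ — i.e. all of $\binBL,\binBS,\binLLS,\binSSS$ — and $1-\tfrac34 z_j'\le\hat z_j$ for the quartet term because $1-w_j\le\tfrac34$; the sole exception is $\binB$, whose coefficient $1-\tfrac12 z_j'$ can overshoot $\hat z_j$ by $z_j'\bigl((1-w_j)-\tfrac12\bigr)$ per bin. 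For the degenerate index $j=1$ (where $w_1=1$) the filling-height guarantee is replaced by the fact that every bin with major items is filled above $\tfrac12=1-y_1$, and the identity by $\hat z_1=1-\tfrac12 z_1'$.

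It remains to bound $\OPT$ from below strongly enough to absorb both $N$ and the $\binB$ overshoot. I would combine, with nonnegative weights summing to $\hat z_j$: the $\itemB$-item bound $\OPT\ge\NBA{\binBL}+\NBA{\binBS}+\NBA{\binB}$ (Lemma~\ref{claim:boundB} and the fact that no bin holds two $\itemB$ items); the bin-count bound $\OPT\ge\NBO{\binBL}+\NBO{\binBS}+\NBO{\binB}+\NBO{\binLLS}+\NBO{\binLL}+\NBO{\binSSS}+\NBO{\binLSS}$ refined by Lemma~\ref{claim:boundB} for the $\itemB$-bins, by Lemma~\ref{claim:boundL} to lower-bound $\NBO{\binLLS}+\NBO{\binLL}$, and by Lemma~\ref{claim:boundS} to bound $\nonquartetSSS=\NBA{\binSSS}-\quartet$ (this is exactly where $\nonquartetLL\le 2$ enters, as then $\tfrac13\NBA{\binLL}$ exceeds $\quartet$ by at most $\tfrac23$); and the payload bound~(\ref{eqn:optBound1}) together with $\OPT\ge F(\binLSS)\NBO{\binLSS}=\tfrac56\NBO{\binLSS}$, the latter to re-absorb the $\NBO{\binLSS}$-terms introduced with the wrong sign by Lemmas~\ref{claim:boundL} and~\ref{claim:boundS}. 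The $\binB$ overshoot is paid for by giving the $\itemB$-item bound weight at least $1-\tfrac12 z_j'$, which simultaneously dominates the $\binBL$- and $\binBS$-contributions. The main obstacle is the resulting bookkeeping: after these substitutions the statement reduces to a single linear inequality in $\NBA{\binBL},\NBA{\binBS},\NBA{\binB},\NBA{\binLLS},\nonquartetSSS,\quartet,\NBO{\binLSS},\workloadSmall$, and one must check that each variable's coefficient in the weighted combination is at least the one it carries in $D$; I expect the binding cases to be $\NBA{\binB}$ and $\quartet$, closing precisely because $z_k'=(1+\tfrac\varepsilon4)\alpha\ge\tfrac43$, and I would organize the check by separating $z_j'\ge\tfrac43$ (where the $\binBS$-, $\binSSS$- and quartet-coefficients in $D$ are non-positive and may be dropped outright) from $z_j'<\tfrac43$.
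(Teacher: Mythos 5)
Your reformulation $D=N-z_j'P$ and the identity $\hat z_j = 1-z_j'(1-w_j)$ (for $j\geq 2$) are exactly the right algebraic observations, and the overall plan of combining the $\itemB$-item bound, the bin-count bound refined via the thoroughness lemmas, and a workload bound closely mirrors what the paper's chain of inequalities accomplishes. However, the specific way you propose to close the argument breaks down. You want nonnegative weights summing to $\hat z_j$ while giving the $\itemB$-item bound weight at least $1-\tfrac12 z_j'$; but your own identity shows that for $j\geq 2$ one has $1-\tfrac12 z_j' - \hat z_j = z_j'\bigl((1-w_j)-\tfrac12\bigr)>0$ since $w_j=y_j<\tfrac12$, so this weight alone already exceeds the entire budget. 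The ``$\binB$ overshoot'' you are trying to pay for is in fact an artifact of taking $F(\binB)=\tfrac12$ literally inside the filtered count: the bins counted in $\NBA{\binB}^{\geq(1-w_j);-Q}$ each carry workload at least $1-w_j$, not merely $\tfrac12$, so their effective per-bin coefficient is $1-z_j'(1-w_j)=\hat z_j$ and there is no overshoot to absorb. This is exactly how the paper reconciles the check for $\binB$: it verifies a coefficient of $1=\frac{1-(1-y_j)z_j'}{\hat z_j}$, not $\frac{1-\tfrac12 z_j'}{\hat z_j}$.

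Beyond this, the central content of the lemma — verifying that a valid weighted combination of the four thoroughness relations (Lemmas~\ref{claim:boundB}, \ref{claim:boundL}, \ref{claim:boundS}, and crucially also~\ref{claim:boundS2}, which your outline omits entirely) and the bin-count bound dominates every coefficient in $D$ while keeping all residual $\NBO$-coefficients nonnegative — is precisely the bookkeeping you postpone. The paper's proof invests nearly all its effort there, applying the lemmas in a fixed order with coefficients of the form $\max\bigl\{\tfrac12,\tfrac{1-\tfrac56 z_j'}{\hat z_j}\bigr\}$ and $\max\bigl\{\tfrac34,\tfrac{1-\tfrac34 z_j'}{\hat z_j}\bigr\}$, and then checking termwise that the resulting $\NBA$-coefficients meet their targets and the $\NBO{\binLSS}$, $\NBO{\binLL}$, $\NBO{\binSSS}$ residuals are nonnegative. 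Your alternative ingredients (the payload bound~\eqref{eqn:optBound1} and $\OPT\geq\tfrac56\NBO{\binLSS}$) are not used by the paper, and the second is strictly weaker than the bin-count contribution $\OPT\geq\NBO{\binLSS}$ that the paper already carries through; whether the combination you sketch actually closes — particularly without Lemma~\ref{claim:boundS2} — is unverified and, given the weight-budget contradiction above, currently does not.
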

	
}
\begin{proof}
	To show this lemma, we start with the solution from $\OPT$ and estimate it with the help of the lemmas from Section~\ref{sec:JohnsonClaims}.
	Therefore, we split $\OPT$ into its components and apply the Lemmas one after another.
	In the following transformations we alternate between the use of the lemmas and simple mathematical calculations.
	At the end we get an estimation of the optimal solution which consists of different components of the algorithmic solution as well as specific components of the optimal solution.
	For the remaining parts of the optimal solution we show, that they are greater or equal to 0, for the components of $\ALG$ we show separately for each bin type, that they fulfill the statement of the lemma, i.e. we show they are at least $\frac{1}{\hat{z}_j}\cdot(1-z_j'\cdot F(\binVar))\cdot \NBA{\binVar}$.
	
	W.l.o.g. we assume $\nonquartetLL=0$ and hence showing a bound of $\frac{1-\frac{3}{4}z_{j}'}{\hat{z}_j}$ for the $\binLL$ bins is sufficient.
	Furthermore we show a bound of 1 for $\NBA{\binB}$ since $1=\frac{1-(1-y_j)z_j'}{\hat{z}_j}$.
	{\thinmuskip=0mu\medmuskip=0mu\thickmuskip=0mu\small\allowdisplaybreaks
	\begin{eqnarray*}
		&&\OPT \\
		&\geq& \NBO{\binB,\binBL,\binBS} + \NBO{\binLLS,\binLL,\binSSS}+\NBO{\binLSS} \\
		&\overset{\text{Lem.~\ref{claim:boundB}}}{\geq}& \NBA{\binB,\binBL,\binBS} + \NBO{\binLLS,\binLL,\binSSS}+\NBO{\binLSS} \\
		&=& \NBA{\binB} + \max\left\{\frac{1}{2},\frac{1-\frac{5}{6}z_j'}{\hat{z}_j}\right\}\cdot\NBA{\binBL} + \NBA{\binBS}+ \NBO{\binLLS,\binLL,\binSSS} \\
		&& +\max\left\{\frac{1}{2},\frac{1-\frac{5}{6}z_j'}{\hat{z}_j}\right\}\cdot\NBO{\binLSS}
		+ \left(1-\max\left\{\frac{1}{2},\frac{1-\frac{5}{6}z_j'}{\hat{z}_j}\right\}\right)\cdot\left(\NBA{\binBL}+\NBO{\binLSS}\right) \\
		&\overset{\text{Lem.~\ref{claim:boundL}}}{\geq}& \NBA{\binB} + \max\left\{\frac{1}{2},\frac{1-\frac{5}{6}z_j'}{\hat{z}_j}\right\}\cdot\NBA{\binBL} + \NBA{\binBS}+ \NBO{\binLLS,\binLL,\binSSS} \\
		&& +\max\left\{\frac{1}{2},\frac{1-\frac{5}{6}z_j'}{\hat{z}_j}\right\}\cdot\NBO{\binLSS} + \left(1-\max\left\{\frac{1}{2},\frac{1-\frac{5}{6}z_j'}{\hat{z}_j}\right\}\right)\cdot 2\cdot\left(\NBA{\binLLS,\binLL}-\NBO{\binLLS,\binLL}\right) \\
		&=& \NBA{\binB} + \max\left\{\frac{1}{2},\frac{1-\frac{5}{6}z_j'}{\hat{z}_j}\right\}\cdot\NBA{\binBL} 
		+ \left(2\max\left\{\frac{1}{2},\frac{1-\frac{5}{6}z_j'}{\hat{z}_j}\right\} + \max\left\{\frac{3}{4},\frac{1-\frac{3}{4}z_j'}{\hat{z}_j}\right\}-1\right)\cdot\NBA{\binBS} \\
		&& + \left(2-2\max\left\{\frac{1}{2},\frac{1-\frac{5}{6}z_j'}{\hat{z}_j}\right\}\right)\cdot\NBA{\binLLS} + \max\left\{\frac{3}{4},\frac{1-\frac{3}{4}z_j'}{\hat{z}_j}\right\}\cdot\NBA{\binLL} \\
		&& + \left(2\max\left\{\frac{1}{2},\frac{1-\frac{5}{6}z_j'}{\hat{z}_j}\right\}-1\right)\cdot\NBO{\binLLS,\binLL} +\NBO{\binSSS} \\
		&& + \left(5\max\left\{\frac{1}{2},\frac{1-\frac{5}{6}z_j'}{\hat{z}_j}\right\}-4+2\max\left\{\frac{3}{4},\frac{1-\frac{3}{4}z_j'}{\hat{z}_j}\right\}\right)\cdot\NBO{\binLSS} \\
		&& + \left(2-2\max\left\{\frac{1}{2},\frac{1-\frac{5}{6}z_j'}{\hat{z}_j}\right\} - \max\left\{\frac{3}{4},\frac{1-\frac{3}{4}z_j'}{\hat{z}_j}\right\}\right)\cdot\left(\NBA{\binBS} + \NBA{\binLL}+2\NBO{\binLSS}\right)\\
		&\overset{\text{Lem.~\ref{claim:boundS}}}{\geq}& \NBA{\binB} + \max\left\{\frac{1}{2},\frac{1-\frac{5}{6}z_j'}{\hat{z}_j}\right\}\cdot\NBA{\binBL} 
		+ \left(2\max\left\{\frac{1}{2},\frac{1-\frac{5}{6}z_j'}{\hat{z}_j}\right\} + \max\left\{\frac{3}{4},\frac{1-\frac{3}{4}z_j'}{\hat{z}_j}\right\}-1\right)\cdot\NBA{\binBS} \\
		&& + \left(2-2\max\left\{\frac{1}{2},\frac{1-\frac{5}{6}z_j'}{\hat{z}_j}\right\}\right)\cdot\NBA{\binLLS} + \max\left\{\frac{3}{4},\frac{1-\frac{3}{4}z_j'}{\hat{z}_j}\right\}\cdot\NBA{\binLL} \\
		&& + \left(2\max\left\{\frac{1}{2},\frac{1-\frac{5}{6}z_j'}{\hat{z}_j}\right\}-1\right)\cdot\NBO{\binLLS,\binLL} + \NBO{\binSSS} \\
		&& + \left(5\max\left\{\frac{1}{2},\frac{1-\frac{5}{6}z_j'}{\hat{z}_j}\right\}-4+2\max\left\{\frac{3}{4},\frac{1-\frac{3}{4}z_j'}{\hat{z}_j}\right\}\right)\cdot\NBO{\binLSS}\\
		&& + \left(2-2\max\left\{\frac{1}{2},\frac{1-\frac{5}{6}z_j'}{\hat{z}_j}\right\} - \max\left\{\frac{3}{4},\frac{1-\frac{3}{4}z_j'}{\hat{z}_j}\right\}\right)\cdot3\cdot\left(\NBA{\binSSS} - \NBO{\binSSS}\right)\\
		&=& \NBA{\binB} + \max\left\{\frac{1}{2},\frac{1-\frac{5}{6}z_j'}{\hat{z}_j}\right\}\cdot\NBA{\binBL}
		+ \max\left\{\frac{3}{4},\frac{1-\frac{3}{4}z_j'}{\hat{z}_j}\right\}\cdot\NBA{\binBS} \\
		&&	+ \left(2-2\max\left\{\frac{1}{2},\frac{1-\frac{5}{6}z_j'}{\hat{z}_j}\right\}\right)\cdot\NBA{\binLLS} + \max\left\{\frac{3}{4},\frac{1-\frac{3}{4}z_j'}{\hat{z}_j}\right\}\cdot\NBA{\binLL} \\
		&&  + \left(6-6\max\left\{\frac{1}{2},\frac{1-\frac{5}{6}z_j'}{\hat{z}_j}\right\} - 3\max\left\{\frac{3}{4},\frac{1-\frac{3}{4}z_j'}{\hat{z}_j}\right\}\right)\cdot\NBA{\binSSS}\\
		&& + \left(2\max\left\{\frac{1}{2},\frac{1-\frac{5}{6}z_j'}{\hat{z}_j}\right\}-1\right)\cdot\NBO{\binLL}\\
		&&  + \left(6\max\left\{\frac{1}{2},\frac{1-\frac{5}{6}z_j'}{\hat{z}_j}\right\} + 3\max\left\{\frac{3}{4},\frac{1-\frac{3}{4}z_j'}{\hat{z}_j}\right\}-5\right)\cdot\NBO{\binSSS} \\
		&&  + \left(\max\left\{\frac{1}{2},\frac{1-\frac{5}{6}z_j'}{\hat{z}_j}\right\}-2+2\max\left\{\frac{3}{4},\frac{1-\frac{3}{4}z_j'}{\hat{z}_j}\right\}\right)\cdot\NBO{\binLSS} \\
		&&  + \left(2\max\left\{\frac{1}{2},\frac{1-\frac{5}{6}z_j'}{\hat{z}_j}\right\}-1\right)\cdot\left(\NBA{\binBS}+\NBO{\binLLS}+2\NBO{LSS}\right) \\
		&\overset{\text{Lem.~\ref{claim:boundS2}}}{\geq}& \NBA{\binB} + \max\left\{\frac{1}{2},\frac{1-\frac{5}{6}z_j'}{\hat{z}_j}\right\}\cdot\NBA{\binBL}
		+ \max\left\{\frac{3}{4},\frac{1-\frac{3}{4}z_j'}{\hat{z}_j}\right\}\cdot\NBA{\binBS} \\
		&&	+ \left(2-2\max\left\{\frac{1}{2},\frac{1-\frac{5}{6}z_j'}{\hat{z}_j}\right\}\right)\cdot\NBA{\binLLS} + \max\left\{\frac{3}{4},\frac{1-\frac{3}{4}z_j'}{\hat{z}_j}\right\}\cdot\NBA{\binLL} \\
		&&  + \left(6-6\max\left\{\frac{1}{2},\frac{1-\frac{5}{6}z_j'}{\hat{z}_j}\right\} - 3\max\left\{\frac{3}{4},\frac{1-\frac{3}{4}z_j'}{\hat{z}_j}\right\}\right)\cdot\NBA{\binSSS}\\
		&& + \left(2\max\left\{\frac{1}{2},\frac{1-\frac{5}{6}z_j'}{\hat{z}_j}\right\}-1\right)\cdot\NBO{\binLL}\\
		&&  + \left(6\max\left\{\frac{1}{2},\frac{1-\frac{5}{6}z_j'}{\hat{z}_j}\right\} + 3\max\left\{\frac{3}{4},\frac{1-\frac{3}{4}z_j'}{\hat{z}_j}\right\}-5\right)\cdot\NBO{\binSSS} \\
		&&  +  \left(\max\left\{\frac{1}{2},\frac{1-\frac{5}{6}z_j'}{\hat{z}_j}\right\}-2+2\max\left\{\frac{3}{4},\frac{1-\frac{3}{4}z_j'}{\hat{z}_j}\right\}\right)\cdot\NBO{\binLSS}\\
		&&  + \left(2\max\left\{\frac{1}{2},\frac{1-\frac{5}{6}z_j'}{\hat{z}_j}\right\}-1\right)\cdot3\cdot\left(\NBA{\binSSS} - \NBO{\binSSS}\right) \\
		&=& \NBA{\binB} + \max\left\{\frac{1}{2},\frac{1-\frac{5}{6}z_j'}{\hat{z}_j}\right\}\cdot\NBA{\binBL}
		+ \max\left\{\frac{3}{4},\frac{1-\frac{3}{4}z_j'}{\hat{z}_j}\right\}\cdot\NBA{\binBS} \\
		&&	+ \left(2-2\max\left\{\frac{1}{2},\frac{1-\frac{5}{6}z_j'}{\hat{z}_j}\right\}\right)\cdot\NBA{\binLLS} + \max\left\{\frac{3}{4},\frac{1-\frac{3}{4}z_j'}{\hat{z}_j}\right\}\cdot\NBA{\binLL} \\
		&&  + \left(3 - 3\max\left\{\frac{3}{4},\frac{1-\frac{3}{4}z_j'}{\hat{z}_j}\right\}\right)\cdot\NBA{\binSSS} + \left(2\max\left\{\frac{1}{2},\frac{1-\frac{5}{6}z_j'}{\hat{z}_j}\right\}-1\right)\cdot\NBO{\binLL}\\
		&& + \left(3\max\left\{\frac{3}{4},\frac{1-\frac{3}{4}z_j'}{\hat{z}_j}\right\}-2\right)\cdot\NBO{\binSSS} \\
		&&  +  \left(\max\left\{\frac{1}{2},\frac{1-\frac{5}{6}z_j'}{\hat{z}_j}\right\}-2+2\max\left\{\frac{3}{4},\frac{1-\frac{3}{4}z_j'}{\hat{z}_j}\right\}\right)\cdot\NBO{\binLSS}
	\end{eqnarray*}
	}

	As explained above we now have to guarantee that the remaining components of the optimal solution are greater or equal than 0.
	This is obviously given, since it holds that $2\max\left\{\frac{1}{2},\frac{1-\frac{5}{6}z_j'}{\hat{z}_j}\right\}-1\geq 0$, $3\cdot\max\left\{\frac{3}{4},\frac{1-\frac{3}{4}z_j'}{\hat{z}_j}\right\}-2\geq 0$ and $\max\left\{\frac{1}{2},\frac{1-\frac{5}{6}z_j'}{\hat{z}_j}\right\}-2+2\max\left\{\frac{3}{4},\frac{1-\frac{3}{4}z_j'}{\hat{z}_j}\right\}\geq 0$.
	
	It remains to show that for the components of the algorithmic solution, each bin type fulfills its corresponding limit.
	For the bin types $\binB, \binBL, \binBS$ and $\binLL$ this is directly visible because of their corresponding $F(\binVar)$, so it explicitly has to be shown for the types $\binLLS$ and $\binSSS$:
	\begin{enumerate}
		\item $2-2\max\left\{\frac{1}{2},\frac{1-\frac{5}{6}z_j'}{\hat{z}_j}\right\}\geq \frac{1-\frac{11}{12}z_{j}'}{\hat{z}_j}$:
		In case the maximum on the left hand side is $\frac{1}{2}$, this is obvious.
		Otherwise
		$$2-2\frac{1-\frac{5}{6}z_j'}{\hat{z}_j}\geq \frac{1-\frac{11}{12}z_{j}'}{\hat{z}_j} \Leftrightarrow 2z_k'+\frac{7}{12}z_j'\geq 3$$
		which holds since $z_k'\geq\frac{4}{3}$ and $z_j'\geq\frac{3}{4}$.
		\item $3 - 3\max\left\{\frac{3}{4},\frac{1-\frac{3}{4}z_j'}{\hat{z}_j}\right\}\geq \frac{1-\frac{3}{4}z_{j}'}{\hat{z}_j}$:
		The claim follows directly from $\frac{1-\frac{3}{4}z_{j}'}{\hat{z}_j}\leq \frac{3}{4}$.
	\end{enumerate}
	
\end{proof}

{
	\renewcommand{\thetheorem}{\ref{lemma:OPT_SSS}}
	
	\begin{lemma}
		Let $\nonquartetSSS=0$. Then it holds that
		
		$\OPT\geq\frac{1}{\hat{z}_j}\cdot\left(\sum\limits_{\binVar\in\binSet}(1-z_j'\cdot F(\binVar))\cdot \NBA{\binVar}^{\geq(1-w_j);-Q} + (4-3z_j')\quartet\right).$
	\end{lemma}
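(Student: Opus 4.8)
The plan is to follow the same template as the proof of Lemma~\ref{lemma:OPT_LL}. I would start from the crude bound
\[
\OPT \;\geq\; \NBO{\binB,\binBL,\binBS}+\NBO{\binLLS,\binLL,\binSSS}+\NBO{\binLSS}
\]
and then substitute, in turn, Lemmas~\ref{claim:boundB}, \ref{claim:boundL}, \ref{claim:boundS} and~\ref{claim:boundS2}, each time peeling off a nonnegative multiple of the corresponding inequality. Exactly as in Lemma~\ref{lemma:OPT_LL}, every multiplier would be written in the closed form $\max\{c,\frac{1-c'z_j'}{\hat{z}_j}\}$ --- the maximum of a fixed constant and a $z_j'$-dependent fraction --- so that a single choice works throughout the admissible range $\nicefrac{3}{4}\leq z_1'<\dots<z_k'=(1+\nicefrac{\varepsilon}{4})\alpha$, $\nicefrac{1}{4}<y_k<\dots<y_1=\nicefrac{1}{2}$. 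After the four substitutions the right-hand side is a nonnegative combination of $\OPT$-bin counts ($\NBO{\binLL}$, $\NBO{\binSSS}$, $\NBO{\binLSS}$) plus a combination of $\ALG$-bin counts; it then remains to verify that (a) every residual $\OPT$-coefficient is $\geq 0$, and (b) every $\ALG$-coefficient is at least its target $\frac{1}{\hat{z}_j}(1-z_j'F(\binVar))$, the quartet term reaching $\frac{1}{\hat{z}_j}(4-3z_j')$.

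The only structural difference to Lemma~\ref{lemma:OPT_LL} is which $\ALG$-terms survive the peeling. There, $\nonquartetLL\leq 2$ is absorbed into the additive constant, leaving a residual $\binSSS$ term with guaranteed filling $F(\binSSS)=\nicefrac{3}{4}$. Here, by assumption $\nonquartetSSS=0$, so $\NBA{\binSSS}^{\geq(1-w_j);-Q}=0$: every $\binSSS$ bin is accounted for inside a quartet, where together with its three $\binLL$ partners it carries the quartet average $\nicefrac{3}{4}$ per bin, which is exactly what the target $(4-3z_j')\quartet=(1-\nicefrac{3}{4}z_j')\quartetLLSSS$ asks for. Instead, a residual $\binLL$ term now appears, coming from the (possibly many) non-quartet $\binLL$ bins, and I would match it using the guaranteed filling $F(\binLL)=\nicefrac{2}{3}$. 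The bins $\binB$, $\binBL$, $\binBS$ and $\binLLS$ are treated exactly as in Lemma~\ref{lemma:OPT_LL}, via the $\max$-expressions attached to $1-y_j$ (the effective filling of an uncombined $\binB$ bin, which by Lemma~\ref{lemma:combstructure} is at least $1-w_j$), $F(\binBL)=\nicefrac{5}{6}$, $F(\binBS)=\nicefrac{3}{4}$ and $F(\binLLS)=\nicefrac{11}{12}$; the boundary case $j=k$ and the case ``all minor bins are combined'' (where $j=1$) are reduced away as in the proof of Lemma~\ref{lemma:solution_quality}.

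For step (a) and the easy part of step (b) I would reuse the same elementary facts already invoked in Lemma~\ref{lemma:OPT_LL} --- $z_k'\geq\nicefrac{4}{3}$, $z_j'\geq\nicefrac{3}{4}$, $\hat{z}_j=z_k'-z_j'$, $\frac{1-\frac34 z_j'}{\hat{z}_j}\leq\nicefrac{3}{4}$, $\frac{1-(1-y_j)z_j'}{\hat{z}_j}=1$ --- which reduce each such requirement to a linear inequality in $z_j'$ and $z_k'$. The delicate part of step (b) is the $\binLL$ line: in Lemma~\ref{lemma:OPT_LL} the surviving $\binLL$ bins sit in quartets and hence have effective filling $\nicefrac{3}{4}$, whereas here they only have filling $\nicefrac{2}{3}$, so the target $\frac{1-\frac23 z_j'}{\hat{z}_j}$ is strictly more demanding and (for the tight, small-$\varepsilon$ regime where $z_k'<\nicefrac{3}{2}$) even grows as $j$ approaches $k$.

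The main obstacle is therefore to re-balance the four substitution multipliers so that \emph{all} residual $\OPT$-coefficients stay nonnegative \emph{while} \emph{all} $\ALG$-coefficients --- in particular the demanding $\binLL$ coefficient --- meet their targets simultaneously; this amounts to a feasibility check of the corresponding linear program, slightly different from the one behind Lemma~\ref{lemma:OPT_LL}. The point that makes it go through is that the slack previously spent on the residual $\binSSS$ term is now free (since $\NBA{\binSSS}^{-Q}=0$), so a larger multiple of Lemmas~\ref{claim:boundS}/\ref{claim:boundS2} can be used to route a $\frac{1-cz_j'}{\hat{z}_j}$-type contribution into the $\binLL$ coefficient; confirming that the resulting choice still leaves the $\binLLS$ coefficient $\geq\frac{1-\frac{11}{12}z_j'}{\hat{z}_j}$ and the quartet coefficient $\geq\frac{4-3z_j'}{\hat{z}_j}$ is the one genuinely new computation.
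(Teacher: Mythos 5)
You correctly identify the template (the same four-lemma decomposition as Lemma~\ref{lemma:OPT_LL}), the fact that $\NBA{\binSSS}^{\geq(1-w_j);-Q}=0$ while a non-trivial $\nonquartetLL$ term survives, the quartet target $(4-3z_j')\quartet=(1-\nicefrac{3}{4}z_j')\quartetLLSSS$, and that the $\binLL$ line is the delicate one. But your proposed resolution does not go through as stated. You plan to match the $\binLL$ term against the \emph{guaranteed} filling $F(\binLL)=\nicefrac{2}{3}$, i.e.\ a target of $\frac{1-\frac{2}{3}z_j'}{\hat{z}_j}$, and to obtain the needed coefficient by taking ``a larger multiple of Lemmas~\ref{claim:boundS}/\ref{claim:boundS2}'' enabled by the freed $\binSSS$ slack. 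This is backwards: once you fix the $\binBL$ coefficient $M_1$ and the common $\binBS/\binLL$ coefficient $M_2$, the multiple applied to Lemma~\ref{claim:boundS} is $2-2M_1-M_2$, which must stay nonnegative; raising $M_2$ therefore \emph{decreases} that multiple, and it caps $M_2$ at $2-2M_1\le 1$. Yet for $j$ close to $k$ we have $\hat{z}_j=z_k'-z_j'\to 0$ while $1-\frac{2}{3}z_j'$ stays bounded away from $0$ (since $z_k'<\nicefrac{3}{2}$), so $\frac{1-\frac{2}{3}z_j'}{\hat{z}_j}$ exceeds $1$ and no admissible $M_2$ can reach it. Freeing the $\binSSS$ slack alone does not fix this.

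The ingredient you are missing is to exploit the filling-height cutoff in the statement: the bins counted in $\NBA{\binLL}^{\geq(1-w_j);-Q}$ have filling at least $\max\{\nicefrac{2}{3},\,1-w_j\}$, not merely $F(\binLL)=\nicefrac{2}{3}$. The paper therefore takes $M_2=\max\bigl\{\nicefrac{3}{4},\,\frac{1-\max\{\nicefrac{2}{3},\,1-w_j\}z_j'}{\hat{z}_j}\bigr\}$, replacing the inner constant $\nicefrac{3}{4}$ of Lemma~\ref{lemma:OPT_LL} by $\max\{\nicefrac{2}{3},1-w_j\}$. By construction this meets the (refined) $\binLL$ target, and it stays admissible because $\frac{1-(1-w_j)z_j'}{\hat{z}_j}=1$ identically by the parameter choice. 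With this $M_2$, the $\binLLS$ verification $2-2M_1\ge\frac{1-\frac{11}{12}z_j'}{\hat{z}_j}$ is literally unchanged from Lemma~\ref{lemma:OPT_LL}; the $\binBS$ check $M_2\ge\frac{1-\frac{3}{4}z_j'}{\hat{z}_j}$ is a one-liner using $w_j\ge\nicefrac{1}{4}$; and the quartet contribution, after using $\quartet=\quartetSSS=\nicefrac{1}{3}\quartetLL$ to collapse $M_2\quartetLL+(3-3M_2)\quartetSSS$ to $3\quartet$ \emph{independently} of $M_2$, reduces to $3\ge\frac{4-3z_j'}{\hat{z}_j}$, i.e.\ $z_k'\ge\nicefrac{4}{3}$. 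So the genuinely new step is not a rebalancing of the linear program but the substitution of $\max\{\nicefrac{2}{3},1-w_j\}$ inside $M_2$ together with the trivial bound $z_k'\ge\nicefrac{4}{3}$; without the former, your plan fails for $j$ near $k$.
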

	
}
\begin{proof}
	Again, we use the same idea as in the proof of Lemma~\ref{lemma:OPT_LL} and estimate $\OPT$ with the lemmas from Section~\ref{sec:JohnsonClaims}:
	
	{\thinmuskip=0mu\medmuskip=0mu\thickmuskip=0mu\allowdisplaybreaks\small
	\begin{eqnarray*}
		&&\OPT\\
		&\geq& \NBO{\binB,\binBL,\binBS} + \NBO{\binLLS,\binLL,\binSSS} + \NBO{\binLSS} \\
		&\overset{\text{Lem.~\ref{claim:boundB}}}{\geq}& \NBA{\binB,\binBL,\binBS} + \NBO{\binLLS,\binLL,\binSSS} + \NBO{\binLSS} \\
		&=& \NBA{\binB} + \max\left\{\frac{1}{2},\frac{1-\frac{5}{6}z_j'}{\hat{z}_j}\right\}\cdot\NBA{\binBL} + \NBA{\binBS}+ \NBO{\binLLS,\binLL,\binSSS} \\
		&&+\max\left\{\frac{1}{2},\frac{1-\frac{5}{6}z_j'}{\hat{z}_j}\right\}\cdot\NBO{\binLSS}
		+ \left(1-\max\left\{\frac{1}{2},\frac{1-\frac{5}{6}z_j'}{\hat{z}_j}\right\}\right)\cdot\left(\NBA{\binBL}+\NBO{\binLSS}\right) \\
		&\overset{\text{Lem.~\ref{claim:boundL}}}{\geq}& \NBA{\binB} + \max\left\{\frac{1}{2},\frac{1-\frac{5}{6}z_j'}{\hat{z}_j}\right\}\cdot\NBA{\binBL} + \NBA{\binBS}+ \NBO{\binLLS,\binLL,\binSSS} \\
		&& +  \max\left\{\frac{1}{2},\frac{1-\frac{5}{6}z_j'}{\hat{z}_j}\right\}\cdot\NBO{\binLSS}
		+ \left(1-\max\left\{\frac{1}{2},\frac{1-\frac{5}{6}z_j'}{\hat{z}_j}\right\}\right)\cdot 2\cdot\left(\NBA{\binLLS,\binLL}-\NBO{\binLLS,\binLL}\right)\\
		&=& \NBA{\binB} + \max\left\{\frac{1}{2},\frac{1-\frac{5}{6}z_j'}{\hat{z}_j}\right\}\cdot\NBA{\binBL} \\
		&&+ \left(2\max\left\{\frac{1}{2},\frac{1-\frac{5}{6}z_j'}{\hat{z}_j}\right\} + \max\left\{\frac{3}{4},\frac{1-\max\left\{\frac{2}{3},1-w_j\right\}z_j'}{\hat{z}_j}\right\}-1\right)\cdot\NBA{\binBS} \\
		&& + \left(2-2\max\left\{\frac{1}{2},\frac{1-\frac{5}{6}z_j'}{\hat{z}_j}\right\}\right)\cdot\NBA{\binLLS} + \max\left\{\frac{3}{4},\frac{1-\max\left\{\frac{2}{3},1-w_j\right\}z_j'}{\hat{z}_j}\right\}\cdot\NBA{\binLL} \\
		&& + \left(2\max\left\{\frac{1}{2},\frac{1-\frac{5}{6}z_j'}{\hat{z}_j}\right\}-1\right)\cdot\NBO{\binLLS,\binLL} +\NBO{\binSSS}  \\
		&& + \left(5\max\left\{\frac{1}{2},\frac{1-\frac{5}{6}z_j'}{\hat{z}_j}\right\}-4+2\max\left\{\frac{3}{4},\frac{1-\max\left\{\frac{2}{3},1-w_j\right\}z_j'}{\hat{z}_j}\right\}\right)\cdot\NBO{\binLSS}\\
		&& + \left(2-2\max\left\{\frac{1}{2},\frac{1-\frac{5}{6}z_j'}{\hat{z}_j}\right\} - \max\left\{\frac{3}{4},\frac{1-\max\left\{\frac{2}{3},1-w_j\right\}z_j'}{\hat{z}_j}\right\}\right)\cdot\left(\NBA{\binBS} + \NBA{\binLL}+ 2\NBO{\binLSS}\right) \\
		&\overset{\text{Lem.~\ref{claim:boundS}}}{\geq}& \NBA{\binB} + \max\left\{\frac{1}{2},\frac{1-\frac{5}{6}z_j'}{\hat{z}_j}\right\}\cdot\NBA{\binBL} \\
		&& + \left(2\max\left\{\frac{1}{2},\frac{1-\frac{5}{6}z_j'}{\hat{z}_j}\right\} + \max\left\{\frac{3}{4},\frac{1-\max\left\{\frac{2}{3},1-w_j\right\}z_j'}{\hat{z}_j}\right\}-1\right)\cdot\NBA{\binBS} \\
		&& + \left(2-2\max\left\{\frac{1}{2},\frac{1-\frac{5}{6}z_j'}{\hat{z}_j}\right\}\right)\cdot\NBA{\binLLS} + \max\left\{\frac{3}{4},\frac{1-\max\left\{\frac{2}{3},1-w_j\right\}z_j'}{\hat{z}_j}\right\}\cdot\NBA{\binLL} \\
		&& + \left(2\max\left\{\frac{1}{2},\frac{1-\frac{5}{6}z_j'}{\hat{z}_j}\right\}-1\right)\cdot\NBO{\binLLS,\binLL} + \NBO{\binSSS} \\
		&& +\left(5\max\left\{\frac{1}{2},\frac{1-\frac{5}{6}z_j'}{\hat{z}_j}\right\}-4+2\max\left\{\frac{3}{4},\frac{1-\max\left\{\frac{2}{3},1-w_j\right\}z_j'}{\hat{z}_j}\right\}\right)\cdot\NBO{\binLSS} \\
		&& + \left(2-2\max\left\{\frac{1}{2},\frac{1-\frac{5}{6}z_j'}{\hat{z}_j}\right\} - \max\left\{\frac{3}{4},\frac{1-\max\left\{\frac{2}{3},1-w_j\right\}z_j'}{\hat{z}_j}\right\}\right)\cdot3\cdot\left(\NBA{\binSSS} - \NBO{\binSSS}\right)\\
		&=& \NBA{\binB} + \max\left\{\frac{1}{2},\frac{1-\frac{5}{6}z_j'}{\hat{z}_j}\right\}\cdot\NBA{\binBL}
		+ \max\left\{\frac{3}{4},\frac{1-\max\left\{\frac{2}{3},1-w_j\right\}z_j'}{\hat{z}_j}\right\}\cdot\NBA{\binBS} \\
		&&	+ \left(2-2\max\left\{\frac{1}{2},\frac{1-\frac{5}{6}z_j'}{\hat{z}_j}\right\}\right)\cdot\NBA{\binLLS} + \max\left\{\frac{3}{4},\frac{1-\max\left\{\frac{2}{3},1-w_j\right\}z_j'}{\hat{z}_j}\right\}\cdot\NBA{\binLL} \\
		&&  + \left(6-6\max\left\{\frac{1}{2},\frac{1-\frac{5}{6}z_j'}{\hat{z}_j}\right\} - 3\max\left\{\frac{3}{4},\frac{1-\max\left\{\frac{2}{3},1-w_j\right\}z_j'}{\hat{z}_j}\right\}\right)\cdot\NBA{\binSSS} \\
		&&+ \left(2\max\left\{\frac{1}{2},\frac{1-\frac{5}{6}z_j'}{\hat{z}_j}\right\}-1\right)\cdot\NBO{\binLL} \\
		&&  + \left(6\max\left\{\frac{1}{2},\frac{1-\frac{5}{6}z_j'}{\hat{z}_j}\right\} + 3\max\left\{\frac{3}{4},\frac{1-\max\left\{\frac{2}{3},1-w_j\right\}z_j'}{\hat{z}_j}\right\}-5\right)\cdot\NBO{\binSSS} \\
		&&  +  \left(\max\left\{\frac{1}{2},\frac{1-\frac{5}{6}z_j'}{\hat{z}_j}\right\}-2+2\max\left\{\frac{3}{4},\frac{1-\max\left\{\frac{2}{3},1-w_j\right\}z_j'}{\hat{z}_j}\right\}\right)\cdot\NBO{\binLSS}\\
		&&  + \left(2\max\left\{\frac{1}{2},\frac{1-\frac{5}{6}z_j'}{\hat{z}_j}\right\}-1\right)\cdot\left(\NBA{\binBS}+\NBO{\binLLS}+2\NBO{LSS}\right) \\
		&\overset{\text{Lem.~\ref{claim:boundS2}}}{\geq}& \NBA{\binB} + \max\left\{\frac{1}{2},\frac{1-\frac{5}{6}z_j'}{\hat{z}_j}\right\}\cdot\NBA{\binBL}
		+ \max\left\{\frac{3}{4},\frac{1-\max\left\{\frac{2}{3},1-w_j\right\}z_j'}{\hat{z}_j}\right\}\cdot\NBA{\binBS} \\
		&&	+ \left(2-2\max\left\{\frac{1}{2},\frac{1-\frac{5}{6}z_j'}{\hat{z}_j}\right\}\right)\cdot\NBA{\binLLS} + \max\left\{\frac{3}{4},\frac{1-\max\left\{\frac{2}{3},1-w_j\right\}z_j'}{\hat{z}_j}\right\}\cdot\NBA{\binLL} \\
		&&  + \left(6-6\max\left\{\frac{1}{2},\frac{1-\frac{5}{6}z_j'}{\hat{z}_j}\right\} - 3\max\left\{\frac{3}{4},\frac{1-\max\left\{\frac{2}{3},1-w_j\right\}z_j'}{\hat{z}_j}\right\}\right)\cdot\NBA{\binSSS} \\
		&&  + \left(2\max\left\{\frac{1}{2},\frac{1-\frac{5}{6}z_j'}{\hat{z}_j}\right\}-1\right)\cdot\NBO{\binLL} \\
		&&  + \left(6\max\left\{\frac{1}{2},\frac{1-\frac{5}{6}z_j'}{\hat{z}_j}\right\} + 3\max\left\{\frac{3}{4},\frac{1-\max\left\{\frac{2}{3},1-w_j\right\}z_j'}{\hat{z}_j}\right\}-5\right)\cdot\NBO{\binSSS} \\
		&&  + \left(\max\left\{\frac{1}{2},\frac{1-\frac{5}{6}z_j'}{\hat{z}_j}\right\}-2+2\max\left\{\frac{3}{4},\frac{1-\max\left\{\frac{2}{3},1-w_j\right\}z_j'}{\hat{z}_j}\right\}\right)\cdot\NBO{\binLSS}\\
		&&  + \left(2\max\left\{\frac{1}{2},\frac{1-\frac{5}{6}z_j'}{\hat{z}_j}\right\}-1\right)\cdot3\cdot\left(\NBA{\binSSS} - \NBO{\binSSS}\right) \\
		&=& \NBA{\binB} + \max\left\{\frac{1}{2},\frac{1-\frac{5}{6}z_j'}{\hat{z}_j}\right\}\cdot\NBA{\binBL}\\
		&&+ \max\left\{\frac{3}{4},\frac{1-\max\left\{\frac{2}{3},1-w_j\right\}z_j'}{\hat{z}_j}\right\}\cdot\NBA{\binBS} \\
		&&	+ \left(2-2\max\left\{\frac{1}{2},\frac{1-\frac{5}{6}z_j'}{\hat{z}_j}\right\}\right)\cdot\NBA{\binLLS} + \max\left\{\frac{3}{4},\frac{1-\max\left\{\frac{2}{3},1-w_j\right\}z_j'}{\hat{z}_j}\right\}\cdot\left(\quartetLL+\nonquartetLL\right) \\
		&&  + \left(3 - 3\max\left\{\frac{3}{4},\frac{1-\max\left\{\frac{2}{3},1-w_j\right\}z_j'}{\hat{z}_j}\right\}\right)\cdot\quartetSSS 
		  + \left(2\max\left\{\frac{1}{2},\frac{1-\frac{5}{6}z_j'}{\hat{z}_j}\right\}-1\right)\cdot\NBO{\binLL} \\
		&& + \left(3\max\left\{\frac{3}{4},\frac{1-\max\left\{\frac{2}{3},1-w_j\right\}z_j'}{\hat{z}_j}\right\}-2\right)\cdot\NBO{\binSSS} \\
		&&  + \left(\max\left\{\frac{1}{2},\frac{1-\frac{5}{6}z_j'}{\hat{z}_j}\right\}-2+2\max\left\{\frac{3}{4},\frac{1-\max\left\{\frac{2}{3},1-w_j\right\}z_j'}{\hat{z}_j}\right\}\right)\cdot\NBO{\binLSS} \\
		&=& \NBA{\binB} + \max\left\{\frac{1}{2},\frac{1-\frac{5}{6}z_j'}{\hat{z}_j}\right\}\cdot\NBA{\binBL}
		+ \max\left\{\frac{3}{4},\frac{1-\max\left\{\frac{2}{3},1-w_j\right\}z_j'}{\hat{z}_j}\right\}\cdot\NBA{\binBS} \\
		&&	+ \left(2-2\max\left\{\frac{1}{2},\frac{1-\frac{5}{6}z_j'}{\hat{z}_j}\right\}\right)\cdot\NBA{\binLLS} + \max\left\{\frac{3}{4},\frac{1-\max\left\{\frac{2}{3},1-w_j\right\}z_j'}{\hat{z}_j}\right\}\cdot\left(3\quartet+\nonquartetLL\right) \\
		&&  + \left(3 - 3\max\left\{\frac{3}{4},\frac{1-\max\left\{\frac{2}{3},1-w_j\right\}z_j'}{\hat{z}_j}\right\}\right)\cdot\quartet 
		  + \left(2\max\left\{\frac{1}{2},\frac{1-\frac{5}{6}z_j'}{\hat{z}_j}\right\}-1\right)\cdot\NBO{\binLL} \\
		&& + \left(3\max\left\{\frac{3}{4},\frac{1-\max\left\{\frac{2}{3},1-w_j\right\}z_j'}{\hat{z}_j}\right\}-2\right)\cdot\NBO{\binSSS} \\
		&&  + \left(\max\left\{\frac{1}{2},\frac{1-\frac{5}{6}z_j'}{\hat{z}_j}\right\}-2+2\max\left\{\frac{3}{4},\frac{1-\max\left\{\frac{2}{3},1-w_j\right\}z_j'}{\hat{z}_j}\right\}\right)\cdot\NBO{\binLSS} \\
		&\geq& \NBA{\binB} + \max\left\{\frac{1}{2},\frac{1-\frac{5}{6}z_j'}{\hat{z}_j}\right\}\cdot\NBA{\binBL}
		+ \max\left\{\frac{3}{4},\frac{1-\max\left\{\frac{2}{3},1-w_j\right\}z_j'}{\hat{z}_j}\right\}\cdot\NBA{\binBS} \\
		&&	+ \left(2-2\max\left\{\frac{1}{2},\frac{1-\frac{5}{6}z_j'}{\hat{z}_j}\right\}\right)\cdot\NBA{\binLLS} + \max\left\{\frac{3}{4},\frac{1-\max\left\{\frac{2}{3},1-w_j\right\}z_j'}{\hat{z}_j}\right\}\cdot\nonquartetLL + 3\cdot\quartet
	\end{eqnarray*}
	}
	Note that we used $\quartet=\quartetSSS=\nicefrac{1}{3}\cdot \quartetLL$.
	
	The remaining components of the optimal solution are obviously greater or equal than $0$ and omitted in the last inequality. For the bin types $\binB$ and $\binBL$ the required limits are directly given. To complete the proof we consider the bin types $\binBS, \binLLS, \binLL$ and the quartets and we have to show the following to complete the proof:
	\begin{enumerate}
		\item $\max\left\{\frac{3}{4},\frac{1-\max\left\{\frac{2}{3},1-w_j\right\}z_j'}{\hat{z}_j}\right\}\geq\frac{1-\frac{3}{4}z_j'}{\hat{z}_j}$:
		It suffices to show $1-(1-w_j)z_j'\geq 1-\frac{3}{4}z_j'$ which holds since $w_j\geq\frac{1}{4}$ and $1-\frac{2}{3}z_j'\geq1-\frac{3}{4}z_j'$ which holds trivially.
		\item $2-2\max\left\{\frac{1}{2},\frac{1-\frac{5}{6}z_j'}{\hat{z}_j}\right\}\geq\frac{1-\frac{11}{12}z_j'}{\hat{z}_j}$ holds as in the previous lemma.
		\item $3\geq\frac{4-3z_j'}{\hat{z}_j}$ is equivalent to $3z_k'\geq 4$ which holds since $z_k'\geq\frac{4}{3}$.
	\end{enumerate}
	
\end{proof}

\end{document}